\documentclass[aps,pra,twocolumn,superscriptaddress,10pt,longbibliography,floatfix]{revtex4-2}
\usepackage{babel}
\babelprovide[hyphenrules=english]{en}

\usepackage[utf8]{inputenc}
\usepackage[arrowdel]{physics}
\usepackage{soul}
\usepackage{mathtools}
\usepackage{amsmath}
\usepackage{amssymb}
\usepackage{amsfonts}
\usepackage{bm}
\usepackage{braket}
\usepackage{dsfont}
\usepackage{booktabs}
\usepackage{graphicx}
\usepackage[shortlabels]{enumitem}
\usepackage[unicode=true,bookmarks=false,breaklinks=false,pdfborder={0 0 1},backref=false,colorlinks=false]{hyperref}
\usepackage{xcolor}
\usepackage{xspace}
\usepackage{float}
\usepackage{tikz}

\definecolor{darkblue}{RGB}{0,71,133}      
\definecolor{darkgreen}{RGB}{0,100,60}     
\definecolor{royalpurple}{RGB}{88,44,131}    
\definecolor{teal}{RGB}{0,128,128}         
\definecolor{plum}{RGB}{142,69,133}        
\definecolor{rust}{RGB}{183,65,14}         
\definecolor{slate}{RGB}{47,79,79}         

\hypersetup{
    colorlinks=true,
    citecolor=darkgreen,
    linkcolor=royalpurple,
    urlcolor=darkblue
}

\newcommand{\suppref}[1]{\hyperref[#1]{Supplementary Note~\ref*{#1}}}

\makeatletter
\newcommand{\suppanchor}[1]{%
  \edef\@currentHref{#1.\theHsection}
  \Hy@raisedlink{\hyper@anchorstart{\@currentHref}\hyper@anchorend}%
}
\makeatother

\usepackage[toc,page]{appendix}
\usepackage[tight]{minitoc}
\usepackage{tocloft}

\doparttoc 
\faketableofcontents 

\usetikzlibrary{arrows.meta, positioning, shapes.geometric, decorations.pathreplacing, backgrounds, fit}

\definecolor{panelA}{RGB}{242,246,240}
\definecolor{panelB}{RGB}{237,244,249}
\definecolor{panelC}{RGB}{249,247,238}
\definecolor{barLow}{RGB}{212,165,160}
\definecolor{barHi}{RGB}{148,188,160}
\definecolor{ceilingClr}{RGB}{180,80,80}
\definecolor{gateLoad}{RGB}{232,170,160}
\definecolor{gateAdj}{RGB}{158,188,214}
\definecolor{gateTrain}{RGB}{162,198,170}
\definecolor{gateMeas}{RGB}{232,200,150}
\definecolor{accentBlue}{RGB}{60,110,160}
\definecolor{accentGreen}{RGB}{75,135,95}
\definecolor{mutedInk}{RGB}{60,60,70}
\definecolor{softGrey}{RGB}{140,140,150}

\makeatletter
\newcommand*{\l@suppnote}[2]{%
  \par\smallskip\noindent
  \leavevmode #1\nobreak\hfill #2\par}
\makeatother

\usepackage{amsthm}
\usepackage{aliascnt}
\usepackage{thm-restate}

\newtheorem{theorem}{Theorem}

\newtheorem{definition}{Definition}
\newtheorem{proposition}{Proposition}

\newtheorem{lemma}[theorem]{Lemma}
\providecommand{\Var}{\mathrm{Var}}
\providecommand{\E}{\mathbb{E}}

\newcommand{\RDM}{$1$-RDM\xspace}

\newcommand{\EDIN}{\affiliation{School of Informatics, University of Edinburgh, Edinburgh, UK}}
\newcommand{\FUJITSU}{\affiliation{Fujitsu Research of Europe Ltd., United Kingdom}}
\newcommand{\LIP}{\affiliation{LIP6, CNRS, Sorbonne Universit\'{e}, Paris, France}}
\newcommand{\TERRA}{\affiliation{Terra Quantum GmbH, Germany}}
\newcommand{\TII}{\affiliation{Quantum Research Center, Technology Innovation Institute, Abu Dhabi, UAE}}
\newcommand{\QCWARE}{\affiliation{QC Ware, Palo Alto, USA and Paris, France}}

\newcounter{suppnote}

\begin{document}

\title{Scalable Message-Passing Quantum Graph Neural Networks \\ in the Weisfeiler–Leman Hierarchy}

\author{Snehal Raj}\LIP\QCWARE
\author{Brian Coyle}\EDIN\FUJITSU
\author{L\'eo Monbroussou}\EDIN\TERRA
\author{Andr\'e\ J.\ Ferreira-Martins}\TII\LIP
\author{Renato\ M.\ S.\ Farias}\TII
\author{Elham\ Kashefi}\LIP\EDIN

\begin{abstract}
Graphs provide a natural language for relational data in chemistry, biology and optimisation. Graph neural networks (GNNs) have driven much of the recent progress in learning from such data through message passing, a single primitive that generalises convolution and attention. Quantum counterparts have been proposed, but with limited connection to message passing and few guarantees on performance or scalability. More broadly, the trainability of variational quantum circuits is a recognised bottleneck for their wide applicability, and pre-training has emerged as one way to address it. Yet for a quantum model to be useful, it must offer expressivity guarantees along with demonstrable scalability. Here we show how a quantum graph neural network can be built to perform message passing, to be permutation equivariant, and to sit at a chosen level of the Weisfeiler--Leman hierarchy, the standard measure of how finely a model can tell graphs apart. We show that, as for classical GNNs, the training can be done first on small graph instances, allowing for a pre-training that can mitigate usual training issues, and its output can be read out at a cost that stays low as the graph grows. We validate the framework in large-scale simulations of up to 56 qubits across three datasets, on synthetic graphs that ordinary message passing cannot separate, on molecular property prediction, and on the travelling salesperson problem. Our framework opens a path for near-term quantum algorithms with theoretical guarantees and practical scalability, bringing the principles of graph learning into quantum circuit design.
\end{abstract}

\maketitle

Graph neural networks (GNNs) have become a central tool in modern machine learning, driving progress in molecular property prediction and drug discovery~\cite{gilmer2017neural}, protein structure modelling as realised in AlphaFold~\cite{AlphaFold}, and combinatorial optimisation on graphs~\cite{joshi2019efficient, Kool2019}. Their success rests on \textit{message passing}, the rule by which each node updates from its neighbours~\cite{gilmer2017neural, Xu2019}. Quantum versions have been proposed in several forms~\cite{Ceschini2024, verdon2019quantum, Ai2024, Ryu2023, hu2022design, zheng2021quantum, mernyei2022equivariant}, and like their classical counterparts they can be made permutation equivariant~\cite{Skolik2023, schatzki2024equivariant, nguyen2024theory}. Most of them, however, do not actually carry out message passing: they derive the circuit topology from the graph, placing gates along the graph's edges so that the circuit layout mirrors the graph structure~\cite{verdon2019quantum, zheng2021quantum, hu2022design}, while the step that defines a graph neural network, aggregating over each node's neighbours and updating its state, is left to a classical readout rather than run inside the circuit, because performing it coherently is expensive. These models also face the trainability difficulty common to variational circuits, whose gradients can vanish as the circuit grows~\cite{larocca_barren_2025, mcclean2018barren, ragone2024lie}. Therefore, it is important to show that a quantum graph model can be trained and scaled to large graphs.

Message passing is what makes a model a graph neural network, and a unifying principle of modern deep learning: on regular domains, this single primitive subsumes convolutional, recurrent, and attention-based architectures as special cases~\cite{gilmer2017neural, bronstein2021geometric}. Because a graph carries no intrinsic ordering of its nodes, any graph operation must be permutation equivariant by construction, so relabelling the nodes relabels the output the same way~\cite{maron2019invariant,bronstein2021geometric}. What limits such a model is how finely it can separate structurally distinct graphs, and the standard yardstick for this is the Weisfeiler--Leman (WL) hierarchy, a sequence of colour-refinement tests of increasing power for distinguishing non-isomorphic graphs~\cite{morris2019weisfeiler}. Ordinary message passing reaches only the first level of this hierarchy, the $1$-WL test, so increasing depth or width never separates a pair of graphs that $1$-WL cannot tell apart~\cite{Xu2019,morris2019weisfeiler}. This ceiling has concrete consequences, since graphs indistinguishable to $1$-WL can correspond to chemically distinct molecules and to substructures the network is therefore unable to count~\cite{chen2020can}.

In this work, we introduce a quantum graph neural network that faithfully carries out message passing, is permutation equivariant, and sits in the Weisfeiler--Leman hierarchy at a level set by the model, rising above the $1$-WL ceiling. We prove these properties for an instantiation built from subspace-preserving circuits. Confining the dynamics to a fixed subspace keeps the model trainable, at the cost of a classical simulation with polynomial overhead set by the size of the subspace~\cite{ragone2024lie, Fontana2024, cerezo2025does, monbroussou2025trainability, recioarmengol2025train, bako2025fermionic, rudolph2023synergistic}. As is done with classical GNNs, we pre-train on small graphs and apply the model to larger instances. This avoids cost concentration, since the trainable dynamics of the model do not scale directly with the size of the graph. We demonstrate trainability and scaling directly, in numerical simulations of up to $56$ qubits on three datasets: Cai--F\"urer--Immerman graphs, a synthetic benchmark of pairs that ordinary message passing cannot separate, used to validate the expressivity claim; QM9 molecular property prediction; and the Euclidean travelling salesman problem.

We present our framework and its guarantees in Sec.~\ref{sec:results}, first by introducing the general framework to define GNNs along with the theoretical properties that make them useful. We then show how our framework, when instantiated with subspace-preserving circuits, respects these properties and how it stays scalable through the pre-training strategy and its readout. Sec.~\ref{sec:numerics} validates the model numerically on three datasets. Cai--F\"urer--Immerman graphs~\cite{cai1992optimal}, the standard benchmark for the Weisfeiler--Leman test, confirm the expressivity result on graph pairs that no message-passing network can tell apart. QM9 regression~\cite{ramakrishnan2014quantum} carries the same Weisfeiler--Leman gain over to real molecular property prediction. Euclidean travelling-salesman instances up to TSP-50 ($56$ qubits) exercise equivariance, trainability, and the train-small, deploy-large pipeline, at sizes beyond the reach of prior quantum circuits.

\section{Results}\label{sec:results}

We begin by recalling what defines a graph neural network and the theoretical properties that make such a model useful. We then present a general framework for building graph neural networks with these properties and instantiate it with subspace-preserving quantum circuits~\cite{kerenidis2021quantum, Landman2022, cherrat2023quantum, monbroussou2025trainability, monbroussou_subspace_2025, Raj2025}. We prove that this quantum instantiation realises these properties, and show how it stays scalable and trainable as the graph grows.

\subsection{Graph Neural Network framework}
\label{ssec:results_framework}

\begin{figure*}[!t]
\centering
\includegraphics[width=\linewidth]{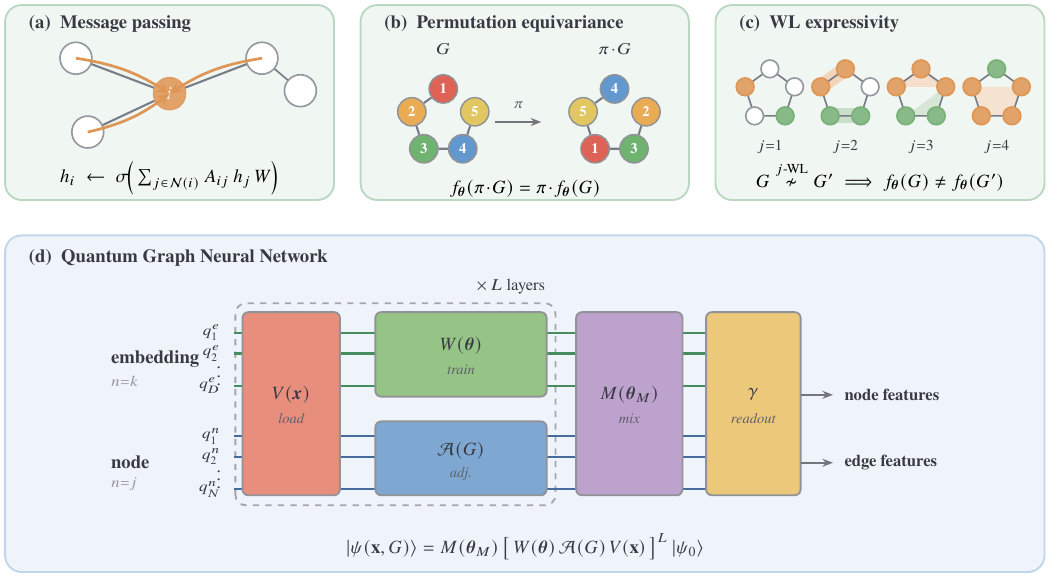}
\caption{\textbf{Graph-learning properties and the quantum graph neural network.} (a)~\textbf{Message passing.} Each node updates from its neighbours, $h_i\leftarrow\sigma\!\big(\sum_{j\in\mathcal{N}(i)}A_{ij}\,h_j\,W\big)$, so after $L$ rounds a node aggregates its $L$-hop neighbourhood. (b)~\textbf{Permutation equivariance.} Relabelling the input nodes by $\pi$ relabels the output by the same $\pi$, $f_{\boldsymbol{\theta}}(\pi\!\cdot\!G)=\pi\!\cdot\!f_{\boldsymbol{\theta}}(G)$. (c)~\textbf{Weisfeiler--Leman expressivity.} The node-register particle number $j$ sets the level of the set-based $j$-Weisfeiler--Leman hierarchy that the model reaches: at generic parameters it gives two graphs separated by $j$-WL distinct outputs, $G\not\sim_{j\text{-WL}}G'\Rightarrow f_{\boldsymbol{\theta}}(G)\neq f_{\boldsymbol{\theta}}(G')$, exceeding the $1$-WL ceiling for $j\geq 3$. (d)~A \emph{node register} ($N$ qubits at particle number $j$) and a \emph{feature-embedding register} ($D$ qubits at particle number $k$). A loader $V(\mathbf{x})$ encodes the graph, an equivariant adjacency $\mathcal{A}(G)$ acts on the node register and a trainable evolution $W(\boldsymbol{\theta})$ on the embedding register, alternating over $L$ layers; a joint-register mixing layer $M(\boldsymbol{\theta}_M)$ couples the registers, and a readout $\gamma$ returns the node and edge features.}
\label{fig:framework}
\end{figure*}

A graph neural network processes a graph $G$ on $N$ nodes. An encoder $V(G)$ collects the $D_F$-dimensional node features into a matrix $X \in \mathbb{R}^{N \times D_F}$. The connectivity is carried by an adjacency operation $\mathcal{A}(G) \in \mathbb{R}^{N \times N}$. Message passing then refines these features over $L$ rounds: with $H^{(0)} = X$, each round transforms $H^{(\ell)}$ into $H^{(\ell+1)}$ by alternating $\mathcal{A}(G)$, which aggregates each node's neighbours along the edges, with a trainable evolution $W(\boldsymbol{\theta}) \in \mathbb{R}^{D_F \times D_F}$ acting on the feature space, followed by a nonlinearity $\sigma$.

In order to produce useful GNNs, it is important to respect several theoretical properties that the literature has identified~\cite{gilmer2017neural, bronstein2021geometric, maron2019invariant}, and we explain how they motivate a proper processing of the input graph data.

\begin{definition}[Message passing]\label{def:message_passing}
    One layer maps $H^{(\ell)} \mapsto H^{(\ell+1)} = \sigma\!\big(\mathcal{A}(G)\,H^{(\ell)}\,W({\boldsymbol{\theta}})\big)$, with $\sigma$ a suitable nonlinearity; after $L$ layers, the embedding of node $i$ encodes aggregated features from its $L$-hop neighbourhood in $G$.
\end{definition}

Message passing is the operational core of nearly every graph neural network, building each node's representation by repeatedly aggregating its neighbours and updating itself, so the model exploits the graph's connectivity rather than the node features alone~\cite{gilmer2017neural, Xu2019}. It is also the unifying principle of modern deep learning, with convolution, recurrence, and attention all recovered as message passing on grids, sequences, and fully connected graphs~\cite{bronstein2021geometric, Velickovic2018}.

\begin{definition}[Exact permutation equivariance]\label{def:perm_equiv}
    Let $f$ be the function computed by a GNN. It is permutation equivariant if $f(\pi\!\cdot\!G,\, \pi\!\cdot\!X) = \pi\!\cdot\! f(G,X)$ for every relabelling $\pi \in S_N$ of the nodes.
\end{definition}

Because the nodes of a graph carry no intrinsic order, permutation equivariance is essential for the model to return the same predictions for the same graph under different labelling~\cite{maron2019invariant, bronstein2021geometric}. Encoding the symmetry into the architecture rather than learning it from data restricts the model to label-consistent functions, which lowers sample complexity and improves generalisation~\cite{nguyen2024theory, schatzki2024equivariant}.

\begin{definition}[Weisfeiler--Leman test]
    The $j$-Weisfeiler--Leman ($j$-WL) test assigns each $j$-subset $S$ of the nodes a colour $c^{t}(S)$, refined over steps $t = 0, 1, 2, \dots$ from the initial colour $c^{0}(S)$ given by the isomorphism type of the subgraph induced by $S$. Each step updates
    \begin{equation}
        c^{t+1}(S) = \mathrm{HASH}\!\Big( c^{t}(S),\ \{\!\!\{\, c^{t}(S') : |S \triangle S'| = 2 \,\}\!\!\} \Big),
    \end{equation}
    where $\mathrm{HASH}$ is an injective relabelling, $\{\!\!\{\cdot\}\!\!\}$ denotes a multiset, and $S \triangle S'$ is the symmetric difference, so the multiset ranges over the $j$-subsets $S'$ obtained from $S$ by replacing a single node. Once the colours stabilise, two graphs are distinguished when their multisets of subset colours differ~\cite{morris2019weisfeiler}.
\end{definition}

\begin{definition}[Weisfeiler--Leman expressivity]\label{def:jwl_expressivity}
    An architecture has $j$-WL expressivity when it matches the Weisfeiler--Leman test, returning identical outputs for any two graphs that $j$-WL leaves indistinguishable and different outputs whenever $j$-WL separates them.
\end{definition}

The Weisfeiler--Leman hierarchy is the standard measure of a model's distinguishing power, and ordinary message passing reaches only its first level, $1$-WL~\cite{Xu2019, morris2019weisfeiler}. This ceiling has concrete consequences, since $1$-WL cannot separate graphs that differ in higher-order structure, including molecules with distinct properties and substructures such as cycles that the network cannot count~\cite{chen2020can}; lifting it is a central aim of higher-order graph learning~\cite{maron2019invariant, morris2020weisfeiler}.

\subsection{Quantum Graph Neural Network}
\label{ssec:results_quantum}

Here we give a general framework for constructing QGNNs from four building blocks: a data loader $V$, an equivariant adjacency layer $\mathcal{A}(G)$, a trainable evolution $W(\boldsymbol{\theta})$, and a joint-register mixing layer $M(\boldsymbol{\theta}_M)$. We also instantiate this framework with subspace-preserving quantum circuits~\cite{kerenidis2021quantum, Landman2022, cherrat2023quantum, monbroussou2025trainability, monbroussou_subspace_2025, Raj2025, mathur2025bayesian, coyle2025training, mathur_scalable_2026, jain_quantum_2024}, and show how the resulting architecture respects the properties of \autoref{ssec:results_framework}. We detail each block below:

\paragraph{Hierarchical data loader.} To reproduce the framework's structure, we split the qubits into two registers. The $N$-qubit \emph{node register} indexes the graph's nodes at Hamming weight $j$, spanning the subspace $\mathcal{H}_N^j$ of dimension $\binom{N}{j}$: its basis states are individual nodes at $j{=}1$ and $j$-subsets of nodes (edges, triangles, and higher-order groups) at larger $j$. The $D$-qubit \emph{embedding register} holds the corresponding features at Hamming weight $k$, in $\mathcal{H}_D^k$ with $D_F = \binom{D}{k}$. Indexing the node basis by $j$-subsets $T$, the loader $V$ prepares
\begin{equation}\label{eq:Graph_Encoding}
    \ket{\mathbf{x}} = \sum_{T} \sum_{f=1}^{D_F} x^{(T)}_f \,\ket{e^{\text{node}}_T} \otimes \ket{e^{\text{embedding}}_f},
\end{equation}
with $e^{\text{node}}_T \in \{0,1\}^N$ of Hamming weight $j$, $e^{\text{embedding}}_f \in \{0,1\}^D$ of Hamming weight $k$, and $x^{(T)}$ the features of subset $T$ (the node features at $j{=}1$). It is realised by the controlled-Givens particle-number encoder of~\cite{Farias2025}, and the reuploading layers below apply it to an arbitrary state in $\mathcal{H}_N^j \otimes \mathcal{H}_D^k$.

\paragraph{Equivariant adjacency $\mathcal{A}(G)$.} The adjacency layer applies Givens rotations between the node-qubit pairs that form an edge in $G$, encoding the graph connectivity. We show in \autoref{thm:equivariance_full} that, with the rotation angles $\theta_{ij}$ assigned from the edge weights in a canonical ordering, the layer respects \autoref{def:perm_equiv}.

\paragraph{Trainable evolution $W(\boldsymbol{\theta})$.} Acts on the embedding register only and is shared across the node-register components. Many gate sets are admissible here, with different trainability and expressivity according to their dynamical Lie algebra. Our instantiation uses subspace-preserving gates: a Cayley parametrisation of $\mathrm{SO}(D)$ lifted to $\mathrm{SO}(D_F)$ by the $k$-th compound representation $C^{(k)}(\cdot)$~\cite{cherrat2023quantum, Kerenidis2022}, with non-compound alternatives also available~\cite{monbroussou2025trainability, Farias2025}. We choose this set for its favourable trainability dynamics.

\paragraph{Reuploading and nonlinearity.} The model is built from $L$ layers, and the input features are re-uploaded through the amplitude encoder $V$ at each layer. This transformation creates non-linear features in the amplitudes of the quantum state and makes the overall map from input features to output nonlinear~\cite{perez-salinas2020data}. Fermionic ladder operators on each register follow from the Jordan--Wigner mapping under the same lexicographic qubit ordering; the embedding-register $1$-RDM used as the readout in \autoref{ssec:results_trainability} is $\gamma_{pq} = \bra{\psi}\,a_p^\dagger\,a_q\,\ket{\psi}$ for $p,q\in[D]$.

\paragraph{Joint-register mixing $M(\boldsymbol{\theta}_M)$.} A final layer mixes the two subspaces. It applies Givens rotations between qubit pairs within and across the registers, taking the state into the subspace of $N+D$ qubits at Hamming weight $j+k$. The layer has three kinds of gates: between node qubits, between embedding qubits, and across the registers. The angles are not tied; each is indexed by its rank in a canonical $S_N$-invariant ordering induced by $G$, which preserves $S_N$-equivariance (\suppref{si:equivariance}).

\subsection{Main results}

In this section, we present the theoretical guarantees of the QGNN framework and its subspace-preserving instantiation. We first show that the instantiation realises, precisely, each of the three properties defined in \autoref{ssec:results_framework}: message passing, exact permutation equivariance, and Weisfeiler--Leman distinguishing power. We then show that the same instantiation is scalable: it can be pre-trained on small graphs and deployed on larger ones; similar to the usual training pipeline for classical GNNs. Our experiments show that these models remain trainable on large instances and that their readout cost remains low.

\subsubsection{Expressivity guarantees}

\paragraph{Message passing.} The model performs message passing inside the circuit, realising the layer map of \autoref{def:message_passing}. The adjacency $\mathcal{A}(G)$ acts on the node register and routes each node's amplitude along the edges; the trainable evolution $W(\boldsymbol{\theta})$ acts on the embedding register and updates the per-node features; and re-uploading the loader at each layer supplies the nonlinearity. All of this is done with quantum operations, not a classical post-processing step. The construction differs from a classical $1$-GNN in two ways. First, the aggregation: a $1$-GNN sums over each node's neighbourhood multiset~\cite{Xu2019, morris2019weisfeiler}, while the QGNN uses the deterministic real-orthogonal map $\mathcal{A}(G)$ on a fixed subspace of dimension $\binom{N}{j}$. Second, the connectivity: a classical network uses a hard adjacency mask, while here every pair of node qubits is coupled, and the graph enters through the rotation angles. Each circuit iteration then matches one round of set-based $j$-WL refinement~\cite{morris2019weisfeiler}: a Givens rotation between node qubits $a$ and $b$ couples the $j$-subsets $S$ and $S'=(S\setminus\{a\})\cup\{b\}$ that differ by a single swap, the one-swap neighbourhood that defines set-based $j$-WL (\autoref{lem:neighbourhood}).

\paragraph{Exact permutation equivariance.} Relabelling the input nodes permutes the output node and edge features by the same permutation, exactly and at every parameter value, so the model satisfies \autoref{def:perm_equiv}. This is a consequence of the framework's two-register structure. The trainable evolution acts only on the embedding register and is shared across the node-register components, so it is unchanged by any relabelling of the nodes; the graph-dependent operations, the adjacency and the joint mixer, are applied in a canonical ordering fixed by the edge weights rather than the node labels, so a relabelling reorders the same gates without changing their combined action. Together, these give exact $S_N$-equivariance ($S_N$ the symmetric group of all $N!$ possible relabelling of the nodes) without tying any parameters together. Earlier equivariant graph models, classical and quantum alike, instead impose the symmetry by sharing or tying parameters across the symmetric group~\cite{maron2019invariant, bronstein2021geometric, Skolik2023}, which constrains the function class.

\begin{restatable}[Exact $S_N$-equivariance]{theorem}{ExactSNequivariance}
\label{thm:equivariance_full}
Let $\pi\in S_N$ act on the node register through the permutation operator $P_\pi$, with $P_\pi\ket{T}_{\rm node}=\ket{\pi(T)}_{\rm node}$ for every $j$-subset $T\subseteq[N]$. For every input $\mathbf{x}=(\mathbf{X}, A)$ consisting of node features $\mathbf{X}\in\mathbb{R}^{N\times d}$ and an adjacency matrix $A\in\mathbb{R}^{N\times N}$, every parameter assignment $\boldsymbol{\theta}$, and every $\pi\in S_N$,
\begin{equation}
    f_{\boldsymbol{\theta}}\bigl(\pi\cdot\mathbf{X},\, \pi\cdot A\bigr) \;=\; \pi\cdot f_{\boldsymbol{\theta}}(\mathbf{X}, A).
\end{equation}
\end{restatable}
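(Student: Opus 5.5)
The plan is a layer-by-layer intertwining argument. Write $U_{\boldsymbol{\theta}}(\mathbf{X},A)=M(\boldsymbol{\theta}_M)\prod_{\ell=1}^{L}\bigl[(\mathds{1}\otimes W(\boldsymbol{\theta}_\ell))\,(\mathcal{A}(G)\otimes\mathds{1})\,V(\mathbf{x})\bigr]$ acting on $\mathcal{H}_N^j\otimes\mathcal{H}_D^k$. Take the output $f_{\boldsymbol{\theta}}$ to be the node- and edge-indexed readout, in particular the $1$-RDM blocks indexed by node labels. It then suffices to prove the operator identity
\begin{equation}
U_{\boldsymbol{\theta}}(\pi\cdot\mathbf{X},\pi\cdot A)\;=\;\tilde P_\pi\,U_{\boldsymbol{\theta}}(\mathbf{X},A)\,\tilde P_\pi^{\dagger}
\end{equation}
together with $\ket{\pi\cdot\mathbf{x}}=\tilde P_\pi\ket{\mathbf{x}}$. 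Here $\tilde P_\pi=P_\pi\otimes\mathds{1}$ on the two-register stage, and on the joint register it is the qubit permutation that permutes the node qubits and fixes the embedding qubits. With both identities in hand, the output state transforms as $\ket{\psi'}=\tilde P_\pi\ket{\psi}$. Since $\tilde P_\pi$ is a qubit permutation, it maps $a_p^\dagger a_q$ to $a_{\pi(p)}^\dagger a_{\pi(q)}$. This step needs care with Jordan--Wigner signs, and I will handle it by noting that number-conserving quadratic operators conjugate covariantly under qubit permutations up to the sign convention fixed by the lexicographic ordering. The readout then permutes as claimed.

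\textbf{Step 1 (loader).} From Eq.~\eqref{eq:Graph_Encoding}, relabelled features satisfy $x^{(\pi(T))}_f(\pi\cdot\mathbf{X})=x^{(T)}_f(\mathbf{X})$, because subset features are built from the node features and the induced subgraph on $T$. Reindexing the sum over $T$ gives $\ket{\pi\cdot\mathbf{x}}=(P_\pi\otimes\mathds{1})\ket{\mathbf{x}}$. For the reuploading layers, $V$ has to be viewed as a unitary, not just as a state preparation. I will argue that the controlled-Givens encoder of~\cite{Farias2025} acts as $\sum_T\ket{T}\!\bra{T}\otimes V_T$, a block diagonal over node basis states with blocks depending only on $x^{(T)}$, or can be replaced by such a form. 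If so, conjugation by $P_\pi$ simply relabels the blocks.

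\textbf{Step 2 (trainable evolution).} $\mathds{1}\otimes W(\boldsymbol{\theta})$ commutes with $P_\pi\otimes\mathds{1}$ trivially, since the two act on different tensor factors.

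\textbf{Step 3 (adjacency and mixer).} This is the main obstacle. $\mathcal{A}(G)$ is an ordered product $\prod_{r}G_{a_rb_r}(\theta_r)$ of Givens rotations, with the ordering and angles fixed canonically by the edge weights. Relabelling gives edges $(\pi(a_r),\pi(b_r))$ with the same weights, and hence the same rank $r$ and angle. Because $P_\pi G_{ab}(\theta)P_\pi^\dagger=G_{\pi(a)\pi(b)}(\theta)$, the product for $\pi\cdot A$ equals $P_\pi\mathcal{A}(G)P_\pi^\dagger$. The difficulty is ties: equal edge weights leave the canonical ordering ambiguous, and Givens rotations sharing a qubit do not commute. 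I would first try to build the canonical ordering as a deterministic function of the isomorphism-invariant data, so that it is $S_N$-invariant by definition, as the statement about the mixer asserts. Tied gates would then be grouped and applied as a single symmetrised block, for instance $\exp(\sum_{\text{tied}}\theta\,(\text{generator}))$ with a common angle, which is order-independent. The same argument applies to the joint mixer $M(\boldsymbol{\theta}_M)$, with node–node, embedding–embedding and cross gates handled separately. The cross gate between node qubit $a$ and embedding qubit $c$ maps to the one between $\pi(a)$ and $c$ with the same rank. Finally, I chain the conjugations through all $L$ layers to obtain the intertwining identity.
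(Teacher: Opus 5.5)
Your overall route is the paper's: blockwise intertwining with $P_\pi\otimes I$, composed over the $L$ layers and pushed through a row-wise readout. The blocks are the same: a loader block-diagonal over node basis states, $W$ trivial on the node register, and $\mathcal{A}$ and the mixer handled through a label-free canonical ordering. The genuine gap is in Step~3, and the paper's proof makes the same unqualified assertion there, so following it will not close the gap.

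A real Givens (RBS) rotation is odd in its qubit pair: $\mathrm{SWAP}\,G_{ab}(\theta)\,\mathrm{SWAP}=G_{ab}(-\theta)$, i.e.\ $G_{ba}(\theta)=G_{ab}(-\theta)$. Your identity $P_\pi G_{ab}(\theta)P_\pi^\dagger=G_{\pi(a)\pi(b)}(\theta)$ is correct for \emph{ordered} pairs. However, the model built on $\pi\cdot A$ places its gate on the \emph{unordered} pair $\{\pi(a),\pi(b)\}$ in its own orientation. The angle, a function of the symmetric weight or of the rank of the unordered pair, carries no orientation, so whenever the two orientations disagree the angle flips sign.

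No orientation rule depending on $A$ alone can repair this. Take $N=2$, $j=1$, one edge and $\pi=(1\,2)$. Then $\pi\cdot A=A$, so the block identity you need says that $\mathcal{A}(A)=G_{12}(\theta)$ (in whatever orientation the convention fixes) commutes with the swap on $\mathrm{span}\{\ket{1},\ket{2}\}$. That forces $\sin\theta=0$.

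The failure reaches the output. Take $L=1$ and $\boldsymbol{\theta}_M=\mathbf{0}$, and let $\phi_i$ be the loaded features. The node-conditioned embedding state is $\tilde W(\cos\theta\,\phi_2+\sin\theta\,\phi_1)$ at node $1$ for the permuted input, and $\tilde W(\cos\theta\,\phi_2-\sin\theta\,\phi_1)$ at node $2$ for the original. For $\sin 2\theta\neq 0$ and independent $\phi_1,\phi_2$ these are not parallel, so their $1$-RDMs differ (already at $k=1$, where $\gamma$ fixes the normalised state up to sign). Your symmetrised tie block $\exp(\theta\sum_{\rm tied}H_e)$ inherits the defect, since $H_{ab}=-H_{ba}$ for node--node generators. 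Cross-register gates are unaffected, because node-versus-embedding supplies a label-free orientation.

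A correct statement needs one of two changes. The first is an orientation-even node gate such as $e^{-i\theta(X_aX_b+Y_aY_b)/2}$, at the cost of leaving the real-orthogonal setting. The second is a label-free orientation rule, for example by weighted degree or colours. That rule only works on inputs with no ties and no automorphism exchanging the two nodes of a gated pair: generic weighted instances such as TSP, but not unweighted CFI or QM9 graphs.

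On ties you are right that the paper's proof is thin. It breaks them lexicographically by $(i,j)$, which is label-dependent. Ties are the rule for unweighted graphs, including all weight-zero pairs in the mixer's all-pair node block and the weighted-degree ranking of the cross block on regular graphs. Your grouping fixes the ordering in $\mathcal{A}$, where tied gates already share an angle. In the mixer, though, it forces one trainable angle per tie class, i.e.\ data-dependent parameter tying. That proves the claim for a modified model, not the stated one, which the paper says needs no tying.

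Finally, drop the Jordan--Wigner step. A qubit permutation conjugates $a_p^\dagger a_q$ to $a_{\pi(p)}^\dagger a_{\pi(q)}$ only up to a parity string on intervening modes, which is state-dependent rather than a convention sign. You do not need it: the readout indices $p,q\in[D]$ are embedding modes fixed by $\widetilde P_\pi$, and $\widetilde P_\pi$ maps $\ket{T}\!\bra{T}$ to $\ket{\pi(T)}\!\bra{\pi(T)}$. Hence $\ket{\psi'}=\widetilde P_\pi\ket{\psi}$ simply relabels the node-conditioned embedding states, which is exactly the paper's readout step.
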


We prove \autoref{thm:equivariance_full} in \suppref{si:equivariance}, and \autoref{ssec:results_equivariance} measures its effect on data efficiency (\autoref{fig:equivariance_ablation}).

\paragraph{Expressivity beyond the $1$-WL ceiling.} A message-passing network distinguishes two graphs only when the $1$-Weisfeiler--Leman ($1$-WL) test does, a ceiling that bounds every standard architecture. The node-register particle number $j$ sets how much graph structure the model resolves: at generic parameters, for $2 \leq j \leq 4$, it matches the set-based $j$-WL test exactly (\autoref{def:jwl_expressivity}), one circuit iteration per refinement round, and from $j{=}3$ it separates graphs that $1$-WL, and hence every message-passing network, cannot. To our knowledge, it is the first quantum graph model proven to do so.

\begin{restatable}[$j$-WL expressivity, informal]{theorem}{jWLExpressivityInformal}\label{thm:jwl_full_informal}
For $2\le j\le 4$, the $L$-iteration QGNN of \autoref{ssec:results_quantum} has set-based $j$-WL expressivity (\autoref{def:jwl_expressivity}) at generic parameters. At node-register particle number $j$ the QGNN's basis states are the $j$-subsets of nodes that $j$-WL colours (Eq.~\eqref{eq:Graph_Encoding}),
\begin{equation*}
\mathcal{H}_N^{j} \;=\; \mathrm{span}\bigl\{\,\ket{e^{\text{node}}_S} : S \subseteq [N],\ |S| = j \,\bigr\},
\end{equation*} and one circuit iteration performs one round of $j$-WL refinement. With each subset started from a colour that fixes its induced subgraph up to isomorphism, after $L$ iterations the QGNN output $f_{\boldsymbol{\theta}}$ separates exactly the graphs that $L$ rounds of $j$-WL separate,
\begin{equation*}
f_{\boldsymbol{\theta}}(G) = f_{\boldsymbol{\theta}}(G') \;\Longleftrightarrow\; G \sim_{j\text{-WL}} G'.
\end{equation*}
For $j\ge 3$, this passes the $1$-WL ceiling that bounds every standard message-passing network.
\end{restatable}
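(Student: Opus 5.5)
The argument has two directions, and I would prove each by induction on the number of circuit iterations $\ell=0,\dots,L$. The invariant for the upper bound is that the state restricted to each basis vector $\ket{e^{\text{node}}_S}$ of $\mathcal{H}_N^j$ depends only on the $j$-WL colour $c^{\ell}(S)$. The invariant for the lower bound is that, at generic $\boldsymbol{\theta}$, distinct colours give distinct embedding-register blocks.

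\emph{Step 1: one-swap locality.} I would first take the one-swap neighbourhood statement (\autoref{lem:neighbourhood}) as the structural backbone. A Givens rotation on node qubits $a,b$ acts on $\mathcal{H}_N^j$ by coupling only the pairs $S\leftrightarrow S'=(S\setminus\{a\})\cup\{b\}$, with $|S\triangle S'|=2$. Composing the adjacency layer $\mathcal{A}(G)$, the block of the output at $S$ is then a linear combination of the input blocks at $S$ and at its one-swap neighbours. Because the ordering is canonical, the coefficients depend only on the edge weights between $a$, $b$ and the other vertices of $S$, that is, on isomorphism-type data.

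\emph{Step 2: upper bound.} I would show $G\sim_{j\text{-WL}}G'\Rightarrow f_{\boldsymbol{\theta}}(G)=f_{\boldsymbol{\theta}}(G')$. The base case is immediate: the loader assigns $x^{(S)}$ from the induced-subgraph isomorphism type, which is exactly $c^0(S)$. For the inductive step, suppose the block at $S$ after $\ell$ iterations is a function $\Phi_\ell(c^\ell(S))$. Step 1 then writes the next block as a fixed function of $c^\ell(S)$ and the multiset $\{\!\!\{c^\ell(S')\}\!\!\}$. Applying $W(\boldsymbol{\theta})$ blockwise and re-uploading preserves this property, so the next block is a function of $c^{\ell+1}(S)$.

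The delicate point is that a Givens cascade is not a symmetric sum. I would use the canonical-ordering argument from \autoref{thm:equivariance_full} to show that the coefficients are functions of the local colour data. The readout $\gamma$ and the mixer $M$ are $S_N$-invariant functions of the multiset of blocks, so equal colour histograms give equal outputs.

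\emph{Step 3: lower bound and main obstacle.} I would prove that if $c^L$ histograms differ then $f_{\boldsymbol{\theta}}(G)\neq f_{\boldsymbol{\theta}}(G')$ at generic $\boldsymbol{\theta}$. Every output coordinate is a real-analytic function of $\boldsymbol{\theta}$, in fact polynomial in the Cayley and Givens parameters. It therefore suffices to exhibit one parameter point where the map from colours to blocks is injective, and to check that the readout separates the resulting distinct multisets. The analytic difference is then nonzero off a measure-zero set.

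I would build the witness inductively, in the style of GIN injectivity arguments. The aim is a $\boldsymbol{\theta}$ making the aggregation $\sum_{S'}\alpha\,\Phi_\ell(c^\ell(S'))$ injective on the bounded multisets of size $j(N-j)$ that arise. To do this, choose the blocks $\Phi_\ell$ to be linearly independent vectors across the finitely many colours, which is possible since $D_F=\binom{D}{k}$ can be taken large enough, and choose an incommensurate self-weight.

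The main obstacle is that the Givens coefficients in Step 1 are constrained by orthogonality and by the sequential ordering. I expect this to be exactly why the result is restricted to $2\le j\le 4$. For small $j$, the number of distinct local swap configurations, meaning the isomorphism types of $S\cup\{b\}$ with a marked removed vertex, is small enough that a direct case check can confirm the induced coefficients are independent. I would carry out that enumeration explicitly.

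Finally, a short-circuit comparison with the CFI pairs gives the $j\ge3$ versus $1$-WL separation. That comparison is quoted from \cite{cai1992optimal,morris2019weisfeiler} rather than proved.
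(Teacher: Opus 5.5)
\textbf{Upper bound.} This part is the paper's part~(i) of \autoref{thm:jwl_full} essentially verbatim: one-swap locality, the compound layer acting identically on every row, and equal self-colour plus equal neighbour multiset giving equal rows. Two cautions apply. First, \autoref{lem:neighbourhood} concerns a single Givens rotation. Your Step~1 claim for the composed layer holds only to first order in the angles, since already $G_{cd}G_{ab}$ couples $\{a,c\}$ to $\{b,d\}$. Second, the canonical ordering of \autoref{thm:equivariance_full} relates a graph to its relabellings. It does not make the cascade coefficients functions of local colour data for two subsets that merely share a $j$-WL colour. Invoking it therefore does not close the ``delicate point'' you flag; the paper simply asserts that step.

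\textbf{Lower bound.} Here there is a genuine gap. Your witness needs rows that are linearly independent across all colour classes. But the number of classes can grow like $\binom{N}{j}$, while $D_F=\binom{D}{k}$ is fixed by the architecture independently of $N$. Letting $D$ grow with the graph would prove only a weaker statement, and in any case the rows are outputs of the circuit, not free choices. The GIN-style step also needs a uniform neighbour weight $\alpha$ and a free incommensurate self-weight, which a Givens cascade does not provide. You acknowledge this and defer it to an unspecified enumeration, so the key step is missing.

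\textbf{The missing idea.} No global injective witness is needed. The paper fixes one pair with $c^{(\ell+1)}(S_1)\neq c^{(\ell+1)}(S_2)$ and treats $\phi(\boldsymbol{\theta})=\Psi^{(\ell+1)}_{S_1}-\Psi^{(\ell+1)}_{S_2}$ as a real-analytic function of the adjacency angles. It then shows $\phi\not\equiv 0$ locally at $\boldsymbol{\theta}=\mathbf{0}$, in two cases.
\begin{itemize}
\item If the colours already differed at step $\ell$, then $M(\mathbf{0})=I$ and invertibility of the compound layer give $\phi(\mathbf{0})\neq 0$.
\item If they first split at step $\ell+1$, the first-order term $\sum_{(a,b)}\theta_{ab}\bigl[(A_{ab}Y)_{S_1}-(A_{ab}Y)_{S_2}\bigr]$ is a nonzero linear form. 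This is because some colour $c^{*}$ occurs with different multiplicities in the two one-swap neighbourhoods, and all rows of colour $c^{*}$ coincide and differ from the rest.
\end{itemize}
This is exactly the regime where one-swap locality is exact. It needs only distinct rows, not independent ones, and the finitely many bad zero sets have null union.

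\textbf{Origin of $2\le j\le 4$.} The restriction does not come from the inductive step; it comes from the base case. There the concrete initialisation is the closed-walk fingerprint $\bigl(\Tr A[S,S]^2,\Tr A[S,S]^3,\Tr A[S,S]^4\bigr)$. This is injective on isomorphism types of graphs with at most four vertices but collides on five. Since you assume the loader encodes the isomorphism type exactly, nothing in your argument should need the restriction, and the enumeration you plan is aimed at the wrong place.
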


The initialisation carry the graph's structure into the model: two copies of the model with the same parameters, run on two graphs that $j$-WL tells apart, still produce different outputs, which is why the choice of initialisation is the essential ingredient. The construction works for groups of up to four nodes; \suppref{si:expressivity} gives a concrete choice, proves the result, and explains why larger groups need more information. \autoref{ssec:results_expressivity} validates the ascent on Cai--F\"urer--Immerman families (\autoref{fig:cfi_discrimination}).

\subsubsection{Scalability}

A central obstacle to scaling variational quantum algorithms is the barren plateau phenomenon, where gradients vanish as the circuit grows and prevent training~\cite{mcclean2018barren, larocca_barren_2025}. The phenomenon is tied to the size of the space the circuit's dynamics explore: it arises when those dynamics reach an exponentially large space, and is mitigated when they stay within a smaller, structured subspace~\cite{ragone2024lie, Fontana2024, monbroussou2025trainability, cerezo2025does}. The two-register QGNN works inside such a subspace; we recall the relevant trainability results in \suppref{si:trainability}.

\begin{proposition}[GNN pre-training]\label{prop:GNN_pre_training}
A GNN can be pre-trained on small graph instances: the trainable evolution $W(\boldsymbol{\theta})$ acts only on the embedding register, whose size does not typically grow with the graph, while the adjacency layer requires no training.
\end{proposition}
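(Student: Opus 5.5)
The plan is to make the informal statement precise as a claim about the parameter space. I would show that the set of trainable parameters $\boldsymbol{\theta}$, and the function $\boldsymbol{\theta}\mapsto W(\boldsymbol{\theta})$, are independent of $N$. Then the circuit built for a graph on $N'$ nodes is a well-defined model for every $N'$, given the same $\boldsymbol{\theta}$. Parameters optimised on graphs with small $N$ can therefore be transferred unchanged to graphs with large $N'$. So pre-training reduces to a well-posedness statement: one shared parameter vector defines the family $\{f^{(N)}_{\boldsymbol{\theta}}\}_{N}$.

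\emph{Step 1: the trainable evolution.} By construction $W(\boldsymbol{\theta})$ is a Cayley-parametrised element of $\mathrm{SO}(D)$ lifted by the compound map $C^{(k)}$ to $\mathrm{SO}(D_F)$ with $D_F=\binom{D}{k}$. It acts as $\mathbb{1}_{\mathcal{H}_N^j}\otimes C^{(k)}(W(\boldsymbol{\theta}))$, shared across all node-register components. Its parameter count therefore depends only on $D$. The number of nodes enters only through the identity factor on the node register, so for any $N$ the same $\boldsymbol{\theta}$ defines the layer.

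\emph{Step 2: the adjacency layer.} The rotation angles $\theta_{ij}$ of $\mathcal{A}(G)$ are assigned deterministically from the edge weights, in the canonical ordering used in \autoref{thm:equivariance_full}. The layer thus contains no free parameters and is defined for any $G$ of any size.

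\emph{Step 3: the loader and readout.} The loader $V(\mathbf{x})$ is data-determined for any $N$. The readout $\gamma_{pq}$ lives on the embedding register, indexed by $p,q\in[D]$, so it does not depend on $N$ either.

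\emph{Step 4: equivariance.} Because \autoref{thm:equivariance_full} holds for every $N$ and every $\boldsymbol{\theta}$, the transferred model keeps its symmetry guarantee on the larger graphs.

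The main obstacle is the joint mixer $M(\boldsymbol{\theta}_M)$. Its angles are indexed by rank in a canonical ordering over node qubits, so the number of such gates could grow with $N$, which would break the size-independence of the parameters. I would first try to restrict the transferred parameters to a size-independent parametrisation of the mixer. Either the angles are generated by a fixed rule from the edge weights, like the adjacency layer, or the node-only and cross-register gates are tied to a finite number of ranks and padded by a fixed rule on larger graphs. In the latter case the proposition holds with $\boldsymbol{\theta}_M$ interpreted through that rule.

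I would also stress that the word ``typically'' in the statement reflects the assumption that $D$ is fixed by the feature dimension rather than by $N$. The claim concerns well-definedness and transferability of the parameters. It does not claim that the pre-trained parameters generalise well to larger graphs; that question is left to the numerics.
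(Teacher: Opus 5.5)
Your Steps 1 and 2 are the paper's own justification, which is given only inline. $W(\boldsymbol{\theta})$ is the identity on the node register, and its parameter count is fixed by $(D,k)$, which is held at $(6,3)$ for every TSP size. The angles of $\mathcal{A}(G)$ are read off the edge weights. The remainder is an analogy with classical GNNs, whose feature-channel weights have $N$-independent width.

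The real difference is how you treat the joint mixer. You are right that $M(\boldsymbol{\theta}_M)$ carries $\binom{N}{2}+\binom{D}{2}+ND$ free angles. At TSP-5, $\binom{5}{2}+5\cdot 6=40$ of the $115$ quantum angles sit in the node-dependent blocks. Your strong reading therefore fails for the paper's model: a single $\boldsymbol{\theta}$ cannot define the whole family $\{f^{(N)}_{\boldsymbol{\theta}}\}_N$ for zero-shot deployment.

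The paper does not repair this by tying mixer angles to finitely many ranks, or by generating them from edge weights. It explicitly keeps all mixer angles untied, since equivariance without parameter tying is the selling point of its construction. Instead it uses a weaker notion of pre-training, namely a \emph{transferable initialisation} (Methods~M2). Parameters fitted at one size seed the next. The extra rank-indexed node-pair and cross-register angles at the larger size are appended in the canonical $S_N$-invariant ordering, and all parameters are then trained at that size.

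That route needs no change of architecture, and it is all the proposition actually asserts, since the statement mentions only $W$ and $\mathcal{A}$. Your tied or rule-generated mixer would give a stronger, retraining-free transfer statement, and it would not break \autoref{thm:equivariance_full}. But it would do so for a more constrained function class than the one the paper analyses and trains.
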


This mirrors standard practice for classical graph networks, whose trainable weights act on the feature channels at a width chosen independently of the number of nodes, so a model trained on small graphs is routinely run on much larger ones, as in graph-based combinatorial optimisation~\cite{joshi2019efficient, Kool2019}. Our architecture inherits the property because the trainable evolution lives on the embedding register, kept separate from the growing node register. We use it through a \textbf{transferable initialisation}, in which parameters fitted at one graph size initialise the next; a similar train-small, deploy-large idea has recently been proposed for variational quantum models to ease their training~\cite{recioarmengol2025train, rudolph2023synergistic, bako2025fermionic}.

\paragraph{Trainability.} Our technique is heuristic, but the numerical evidence is favourable. Although the state after the mixing layers lies in the joint subspace of $N+D$ qubits at Hamming weight $j+k$, the gradients we observe do not follow that subspace dimension: across instances up to $56$ qubits the architecture avoids the vanishing-gradient behaviour the subspace dimension would suggest, and the transferable initialisation improves the trend further, keeping the gradient signal one to two orders of magnitude above a random start at every size we test (\autoref{fig:trainability}). A supporting trainability argument, under stated assumptions, is given in \suppref{si:trainability} (\autoref{prop:trainability_si}).

\paragraph{Readout at polynomial cost.} To ensure scalability, the trained model must also be easy to read out for deployment.
\begin{proposition}[Polynomial-cost readout]\label{prop:rdm_readout}
The node and edge features are the entries of the embedding-register one-particle density matrix $\gamma_{pq}=\bra{\psi}a_p^\dagger a_q\ket{\psi}$ ($p,q\in[D]$), a $D\times D$ matrix over the $D$ embedding modes with trace $k$. It is read in two steps: the node register is projected onto a single node, and the embedding register of the conditional state is measured. This matrix can be estimated to accuracy $\varepsilon$ in $O(D^3/\varepsilon^2)$ measurement shots, either by a deterministic Hartree--Fock readout~\cite{arute2020hartree} or by a matchgate shadow~\cite{wan2023matchgate} on the embedding register. The cost is set by the $D$-qubit feature register, not the number of nodes; the per-node projection multiplies it by $1/P_{\rm proj}$, the inverse success probability of the node-register projection, which is polynomial in $N$.
\end{proposition}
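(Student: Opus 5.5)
The plan is to establish the two halves of Proposition~\ref{prop:rdm_readout} separately: first that the $D\times D$ matrix $\gamma$ of the node-conditioned state can be estimated with $O(D^3/\varepsilon^2)$ shots, and then that the node-register projection only adds a multiplicative factor $1/P_{\rm proj}$.

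\textbf{Conditioning on a node.} Write the output state in the joint register, before or after the mixing layer, as $\ket{\psi}=\sum_{T}\ket{e^{\text{node}}_T}\otimes\ket{\phi_T}$, where $\ket{\phi_T}\in\mathcal{H}_D^k$ is unnormalised. Measure the node register in the computational basis, or apply a projector $\Pi_T=\ket{e^{\text{node}}_T}\!\bra{e^{\text{node}}_T}$ for a chosen node or subset $T$. This produces the normalised conditional state $\ket{\phi_T}/\|\phi_T\|$ with probability $P_{\rm proj}=\|\phi_T\|^2$. The projection commutes with $a_p^\dagger a_q$ on the embedding modes, because the Jordan--Wigner strings of the embedding operators can be ordered after the node qubits, or are diagonal on them. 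Hence $\gamma$ of the conditional state is exactly the node-$T$ feature matrix.

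\textbf{Shot cost of a single post-selected state.} The number of successful post-selections needed is the shot count for one conditional state. Repeat-until-success gives an expected total of that count divided by $P_{\rm proj}$, and a Chernoff bound concentrates this total. To argue that $P_{\rm proj}$ is polynomial in $N$, use the structure of the loader and permutation symmetry. The amplitude weight is spread over the $\binom{N}{j}$ basis subsets. For generic normalised features the weight on a given node is $\Theta(1/N)$ at $j=1$, or $\Theta(j/N)$ for the marginal event ``node $i\in T$''. This is the step I expect to be the main obstacle. In the worst case $P_{\rm proj}$ is not bounded, so the statement can only hold under an assumption that the amplitudes are roughly balanced, which I would state explicitly. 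I would try to justify that assumption from the normalisation of the encoder of Eq.~\eqref{eq:Graph_Encoding} together with the orthogonality of $\mathcal{A}(G)$, which redistributes but preserves the weight.

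\textbf{Estimating the $1$-RDM.} I would give two routes.

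\emph{Matchgate shadows.} Invoke the result of \cite{wan2023matchgate}: for any quadratic Majorana observable $O$ with $\|O\|\le1$, the shadow variance is $O(D)$, up to the $\binom{2D}{2}$ normalisation. There are $O(D^2)$ entries $\gamma_{pq}$, expressed through $\tfrac14(\gamma_{2p}\gamma_{2q}\pm\dots)$. Requiring each entry to have error $\varepsilon/D$, so that the total Frobenius or operator error is $\varepsilon$, or alternatively using a union bound with median-of-means, yields $O(D\cdot D^2/\varepsilon^2)=O(D^3/\varepsilon^2)$ up to logarithmic factors.

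\emph{Hartree--Fock-style readout.} Following \cite{arute2020hartree}, rotate with $O(D)$ Givens networks so that each off-diagonal pair becomes diagonal in the $Z$ basis. Each setting then estimates $O(D)$ entries at once, and the same per-entry precision $\varepsilon/D$ gives $O(D\cdot D^2/\varepsilon^2)$.

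\textbf{Conclusion.} Multiplying the shot count for the conditional state by $1/P_{\rm proj}$ gives the claimed cost, which is independent of $N$ apart from the projection factor.
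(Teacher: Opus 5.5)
\textbf{Shot count.} Your estimate for the conditional state follows the paper's own count. For the matchgate shadow you use the Wan et al.\ per-entry variance $2D-1=O(D)$ for quadratic Majoranas. For Hartree--Fock you use $O(D)$ Givens settings with bounded $\pm1$ correlators. In both cases you sum over $\Theta(D^2)$ entries to get $O(D^3/\varepsilon^2)$ for Frobenius error $\varepsilon$.

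\textbf{The gap is in the projection step.} You write the output as $\ket{\psi}=\sum_{T}\ket{e^{\rm node}_T}\otimes\ket{\phi_T}$ with $|T|=j$ and $\ket{\phi_T}\in\mathcal{H}_D^k$, ``before or after the mixing layer''. This holds only before $M$. The blocks $V$, $\mathcal{A}$ and $W$ each preserve the node-register weight $j$. The mixer $M$ contains cross-register Givens rotations and preserves only the total weight $j+k$. The features are read after $M$, so the state there has support on every split $(n_{\rm node},n_{\rm emb})$ with $n_{\rm node}+n_{\rm emb}=j+k$. The projection must therefore first reject everything outside $\mathcal{H}_N^j\otimes\mathcal{H}_D^k$. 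This sector leakage, not the dilution of weight over the nodes, dominates $1/P_{\rm proj}$. As a result, your estimate that the weight on a given node is $\Theta(1/N)$ is wrong at the readout point. Your proposed justification, normalisation of the loader plus orthogonality of $\mathcal{A}$, cannot repair it, because it never involves $M$. Even before $M$, orthogonality preserves the total norm but says nothing about how that norm is spread over the nodes.

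\textbf{How the paper handles it.} You are right that no worst-case bound on $P_{\rm proj}$ exists, and the paper does not claim one. It assumes the joint state is Haar-random in the weight-$(j+k)$ subspace of dimension $\binom{N+D}{j+k}$. The acceptance probability is then the dimension fraction of the $(j,k)$ sector:
\[
P_{\rm proj}=\binom{N}{j}\binom{D}{k}\Big/\binom{N+D}{j+k},\qquad 1/P_{\rm proj}\sim \frac{j!}{\binom{D}{k}(j+k)!}N^k \text{ at fixed } (D,j,k).
\]
Covering all $N$ nodes gives a total cost of $O(N^{k+1}D^3/\varepsilon^2)$. Under your per-node convention, the same model gives $\|\phi_T\|^2\approx\binom{D}{k}/\binom{N+D}{j+k}$. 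At $j=1$ this is $\Theta(N^{-(k+1)})$, i.e.\ $\Theta(N^{-4})$ in the $(D,k)=(6,3)$ experiments, not $\Theta(N^{-1})$.

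\textbf{How to close the step.} Either of the following works:
\begin{itemize}
\item Replace your balanced-amplitude assumption with this typicality assumption on the post-mixer joint state.
\item State explicitly that you read out before $M$. Then the sector projection succeeds with probability one and only the node dilution remains.
\end{itemize}
With the first option, the conclusion that the overhead is polynomial in $N$ goes through. The degree is $k$, or $k+1$ for a fixed node, rather than $1$.
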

We prove this in \suppref{si:measurement}. Computational-basis measurements alone recover only the diagonal of this matrix; \autoref{ssec:results_trainability} shows on the TSP task that the matchgate readouts keep the task-relevant information while they do not.

Together, these let the model be trained on small graphs and run on larger ones.

\section{Numerical Validations}\label{sec:numerics}

We test the framework's guarantees with large-scale numerical simulations. The expressivity, equivariance, trainability, and readout properties established as theorems and propositions in the previous sections are validated here at scale: the runs use multiple random seeds, reach up to $56$ qubits, and cover both synthetic graphs and real-world data, showing that these properties hold and are useful in practice. We work with three datasets. The first, Cai--F\"urer--Immerman (CFI) graphs~\cite{cai1992optimal}, is a synthetic test of the $j$-WL ascent of \autoref{thm:jwl_full_informal}. The second, QM9~\cite{ramakrishnan2014quantum}, carries the same ascent over to a real molecular-property task. The third, the Euclidean travelling-salesman problem~\cite{joshi2019efficient, Kool2019}, runs the full pipeline end-to-end at $j{=}1$, and also carries our ablations of permutation equivariance and trainability.

\subsection{Distinguishing graphs beyond $1$-WL: CFI families}
\label{ssec:results_expressivity}

On Cai--F\"urer--Immerman (CFI) graphs, the QGNN's distinguishing power rises with the node-register particle number $j$ exactly at the level \autoref{thm:jwl_full} predicts. CFI graphs are pairs of non-isomorphic graphs built so that the $1$-Weisfeiler--Leman ($1$-WL) test, and every standard message-passing network with it, cannot tell them apart~\cite{cai1992optimal}; separating a pair requires going beyond $1$-WL. Because the node register at particle number $j$ encodes $j$-subset interactions, the model's distinguishing power is set-based $j$-WL~\cite{morris2019weisfeiler}, and crossing the $1$-WL ceiling needs $j{\geq}3$. We use two families, CFI($K_3$) on $N{=}6$ vertices, separable at $j{=}3$, and CFI($K_4$) on $N{=}8$ vertices, separable at $j{=}4$.

We train the QGNN end-to-end as a binary classifier on random relabellings of each graph, so success requires a permutation-invariant rule rather than memorised vertex labels. On both families, the test accuracy jumps from chance to $100\%$ exactly at the predicted particle number, $j{=}3$ for CFI($K_3$) and $j{=}4$ for CFI($K_4$), and stays at chance below it (\autoref{fig:cfi_discrimination}(a)).

Classical baselines at matched parameter budgets behave as predicted: a $1$-WL graph isomorphism network stays at chance on both families, while a $3$-WL network (PPGN~\cite{maron2019provably}) separates both. The deterministic $k$-WL test on the BREC benchmark~\cite{wang2024brec} fixes the ceiling each model is measured against (\suppref{si:cfi}).

\begin{figure*}[t]
\centering
\includegraphics[width=\linewidth]{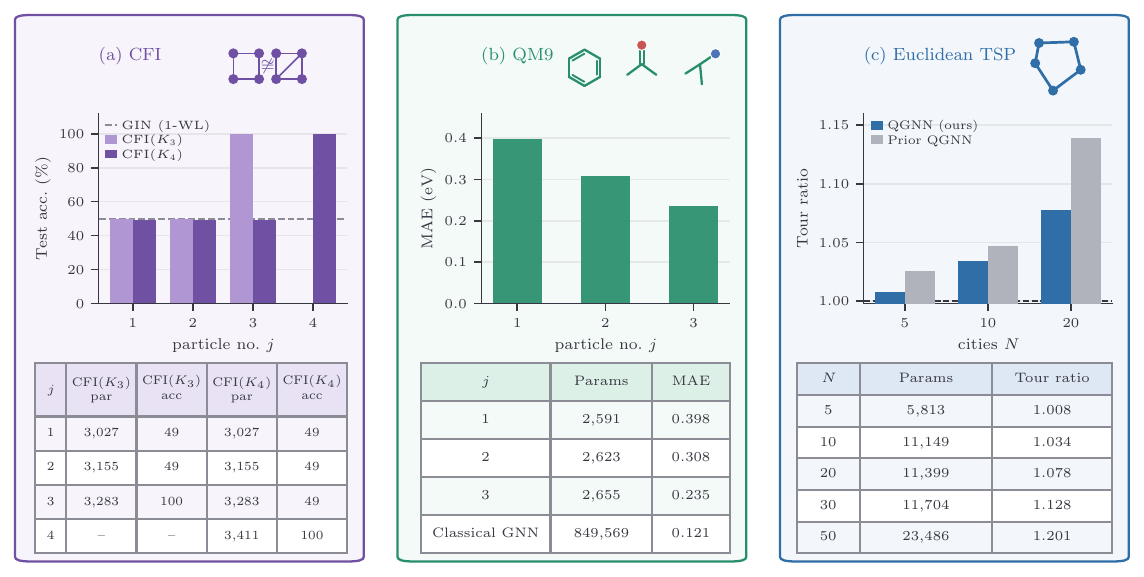}
\caption{\textbf{Numerical validation across three graph-learning tasks.} \textbf{(a)~CFI graph discrimination.} Test accuracy of the trained QGNN as the node-register particle number $j$ increases, on two Cai--F\"urer--Immerman families that $1$-WL message passing cannot separate; accuracy reaches perfect separation exactly at the predicted level ($j{=}3$ for CFI($K_3$), $j{=}4$ for CFI($K_4$)) while a $1$-WL GIN baseline stays at chance and a $3$-WL network (PPGN) separates both families (\suppref{si:cfi}). \textbf{(b)~QM9 molecular property prediction.} Mean absolute error on the HOMO--LUMO gap over the full QM9 set ($130{,}831$ molecules) falls monotonically with $j$ at near-constant parameter count; a classical GNN reaches a lower error at roughly $320\times$ the parameter count (\suppref{si:qm9}). \textbf{(c)~Euclidean travelling salesman} at $j{=}1$: tour ratio ($1.0$ is optimal) versus the number of cities $N$, against a prior QGNN, the equivariant quantum circuit of Skolik et al.~\cite{Skolik2023}, with the table extending to $N{=}30, 50$ where no prior QGNN baseline is available (\suppref{si:tsp_benchmark}).}
\label{fig:results}\label{fig:cfi_discrimination}\label{fig:qm9_jsweep}\label{fig:tsp_demonstration}
\end{figure*}

\subsection{Molecular property prediction (QM9)}
\label{ssec:results_qm9}

QM9 is a standard benchmark in quantum chemistry: $130{,}831$ small organic molecules, each with up to nine heavy atoms (carbon, nitrogen, oxygen, fluorine) and a set of properties computed by density-functional theory~\cite{ramakrishnan2014quantum}. We predict the HOMO--LUMO gap, the energy difference between a molecule's highest occupied and lowest unoccupied orbitals, a quantity that governs much of its chemical reactivity and is a common target for learned property prediction. We ask whether raising the node-register particle number $j$, which raises the model's $j$-WL level (\autoref{thm:jwl_full}), lowers the error on this real task.

Each molecule is processed end-to-end. The node register is prepared at particle number $j$, so its basis states carry single atoms at $j{=}1$ and groups of $j$ atoms at higher $j$, each loaded with a feature describing the local bonding around it. The only quantity changed between runs is $j$: the embedding, the circuit depth, the head, and the optimiser are all held fixed (Methods~M1).

The error falls monotonically as $j$ rises, from $0.398$~eV at $j{=}1$ to $0.308$~eV at $j{=}2$ and $0.235$~eV at $j{=}3$ (\autoref{fig:qm9_jsweep}(b), \autoref{fig:tsp_demonstration}); the $j{=}1$ value is a validation estimate and the other two are test errors, and a multi-seed run on a smaller subset reproduces the same ordering (\suppref{si:qm9}). The parameter count is almost unchanged across the sweep, so the gain comes from the added node-register resolution rather than a larger model. A classical message-passing network~\cite{morris2019weisfeiler} reaches a lower error on the same task, but with about $320\times$ more parameters.

\subsection{Euclidean travelling salesman at \texorpdfstring{$j{=}1$}{j=1}}
\label{ssec:results_tsp}

The travelling-salesman problem entails finding the shortest closed tour through $N$ cities placed in the plane, and it is NP-hard. Learning to solve it is an active area for graph neural networks, with both classical~\cite{joshi2019efficient, Kool2019} and quantum~\cite{Skolik2023} architectures developed for the task. We use the benchmark instances of Skolik et al.~\cite{Skolik2023}.

Each instance is solved end-to-end at $j{=}1$: a small encoder maps the city coordinates onto the embedding register, and after the trainable iterations and the mixer, the per-node $1$-RDM features are turned into an edge-probability matrix and decoded into a tour (Methods~M1). The runs span instances from $N{=}5$ to $50$ cities, the largest a $56$-qubit simulation at $N{=}50$.

Tour quality is the ratio of the produced tour length to the optimal, where $1.0$ is optimal (\autoref{fig:tsp_demonstration}(c), \autoref{tab:si_tsp_results}; full protocol in \suppref{si:tsp_benchmark}). The mean ratio is $1.008$, $1.034$, $1.078$, $1.128$, and $1.201$ at $N{=}5, 10, 20, 30, 50$, declining steadily with $N$ at the fixed embedding. The reinforcement-learning equivariant circuit of Skolik et al.~\cite{Skolik2023} reaches $1.026$, $1.047$, and $1.139$ at $N{=}5, 10, 20$; our supervised $j{=}1$ model is at or below these where they overlap, though the two pipelines optimise different objectives, and there is no comparison beyond $N{=}20$ because simulating their $N$-qubit state vector becomes infeasible. At matched parameter counts, a classical graph convolutional network reaches a lower binary cross-entropy, which we take up in the Discussion. This $j{=}1$ setting also carries the supporting studies of permutation equivariance (\autoref{ssec:results_equivariance}) and trainability and readout (\autoref{ssec:results_trainability}), together with two ablations in \suppref{si:tsp_ablations} and \suppref{si:gate_ablation}.

\subsection{Permutation equivariance}
\label{ssec:results_equivariance}

\begin{figure}[h!]
\centering
\includegraphics[width=\linewidth]{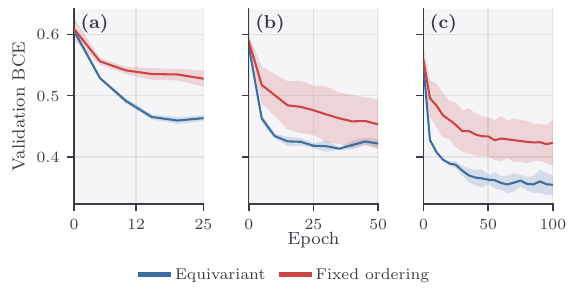}
\caption{\textbf{Equivariance helps the model learn from less data.} Training curves on a $5$-city travelling-salesman edge-prediction task at three fixed training-set sizes. The $y$-axis is the validation prediction loss (lower is better). The blue curve is the model with built-in permutation symmetry (canonical $\mathcal{A}$ + canonical-ordering $M$); the red curve is a matched-parameter version of the same model with the symmetry broken (fixed-lex $\mathcal{A}$ + linear $M$). Both arms train for the same number of epochs at every training-set size, with no early stopping. Shaded bands are $\pm$ one standard deviation across three random seeds. Equivariance gives a lower loss at every training-set size and at every epoch within each panel; per-seed numbers are in \suppref{si:equivariance}. Setup details are in Methods.}
\label{fig:equivariance_ablation}
\end{figure}

Built-in permutation symmetry makes the model more data-efficient: it removes the need to relearn the same rule under every node relabelling. The standard route to $S_N$-equivariance in QGNNs ties parameters across the symmetric group~\cite{Skolik2023,verdon2019quantum,liao2024graph}, which costs expressivity inside the equivariant class. Our two-register construction is exactly equivariant without tying any parameters: the canonical edge ordering of the adjacency operation $\mathcal{A}(G)$ and the joint mixer $M$ is fixed by the graph's weighted structure rather than its node labels, and the trainable evolution $W(\boldsymbol{\theta})$ acts identically on each node's embedding register, so any relabelling leaves the combined action unchanged (\autoref{thm:equivariance_full}).

We test this on TSP-5 edge prediction, where the model scores each edge for membership in the optimal tour and is graded by validation binary cross-entropy (lower is better). The equivariant model is compared against a matched-parameter variant that swaps the canonical edge ordering of $\mathcal{A}$ and $M$ for a fixed-lexicographic one and is otherwise identical. It reaches a lower validation BCE at every training-set size, by $14\%$, $6\%$, and $16\%$ at $n{=}500$, $1000$, and $2000$ (three seeds each; \autoref{fig:equivariance_ablation}, with raw losses and per-seed numbers in \suppref{si:equivariance}). The gap is consistent with the hypothesis-class reduction expected from $S_N$-equivariance~\cite{nguyen2024theory,schatzki2024equivariant}.

\begin{figure*}[t!]
\centering
\begin{minipage}[c]{0.49\textwidth}
\centering
\includegraphics[width=\linewidth]{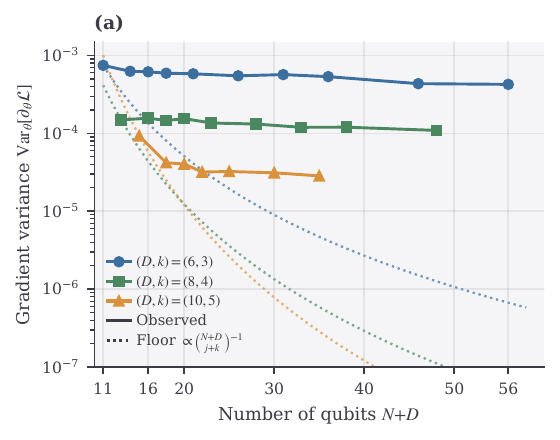}\\
{\footnotesize (a) Per-parameter gradient variance vs.\ qubit count}
\end{minipage}\hfill
\begin{minipage}[c]{0.49\textwidth}
\centering
\includegraphics[width=\linewidth]{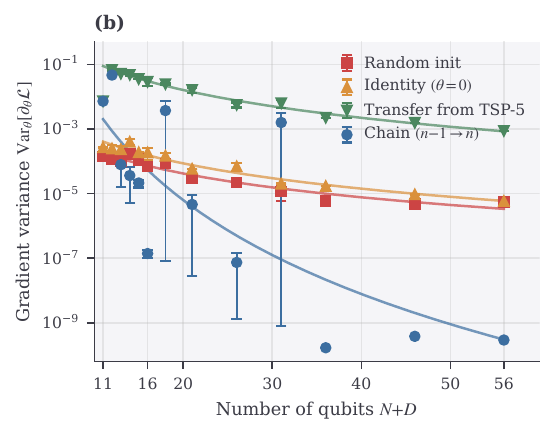}\\
{\footnotesize (b) Transferable initialisation across graph sizes}
\end{minipage}
\caption{\textbf{Polynomial gradient variance and transferable initialisation.} (a)~Per-parameter gradient variance of our QGNN at three half-filling operating points $(D,k)\in\{(6,3),(8,4),(10,5)\}$, $j{=}1$, against the qubit count $N{+}D$ up to $56$ qubits (mean over $200$ random initialisations per point). Dotted lines: a $\binom{N{+}D}{j{+}k}^{-1}$ reference, anchored and colour-matched to each curve, which every measured curve stays well above. (b)~Gradient signal at the start of training under four initialisation regimes: random parameters (squares), identity ($\boldsymbol{\theta}{=}0$, triangles), single-step transfer from a TSP-5-trained QGNN (down triangles), and chained transfer from $N{-}1$ to $N$ (circles). A log--log regression confirms \emph{polynomial} (power-law) decay rather than the exponential decay of a barren plateau, the power-law fit favoured in every case ($R^2=0.94,\,0.95,\,0.99$ for the random, identity, and single-step transfer fits). Sweep ranges, per-regime exponents, and the gate-group and shot-complexity breakdowns are in \suppref{si:trainability}.}
\label{fig:trainability}
\end{figure*}

\subsection{Trainability and readout}
\label{ssec:results_trainability}

Two properties decide whether the framework scales: training must keep a usable gradient as the model grows, and the trained model must be read out cheaply (\autoref{prop:rdm_readout}). Both rest on the same matchgate-compatible Givens rotations gates; here we test them numerically.

\paragraph{Trainability.} We measure the per-parameter gradient variance at random initialisation over a sweep of graph and embedding sizes (\autoref{fig:trainability}a). At fixed $(D,k)$ the variance is flat in $N$, and across $(D,k)\in\{(6,3),(8,4),(10,5)\}$ it follows the predicted dependence on the embedding dimension, out to $N+D=56$ qubits; the measured curves stay well above the exponential decay that would mark a barren plateau and track the polynomial floor of \autoref{prop:trainability_si}. Transferring trained parameters across sizes helps further: initialising each model from a smaller trained one, in a single step or chained from $N{-}1$ to $N$, keeps the gradient signal one to two orders of magnitude above a random start at every size (\autoref{fig:trainability}b). The per-regime power-law fits, and the gate-group breakdown are in \suppref{si:trainability}.

\paragraph{Readout.} Deployment needs the embedding-register $1$-RDM read out cheaply, at the cost of \autoref{prop:rdm_readout}. The two computational-basis readouts recover only the diagonal of the $1$-RDM and miss its off-diagonal hopping entries, while the deterministic Hartree--Fock and matchgate-shadow readouts apply a matchgate basis change first and recover the full matrix. The difference shows up on the task: the computational-basis $Z{+}ZZ$ readout stalls at tour ratio $1.043$ on TSP, while the two matchgate readouts reach $1.006$ at the same shot budget. The compression to the $D\times D$ $1$-RDM is generically injective up to a single direction, so it discards almost no information (\autoref{prop:rdm_jacobian}). \suppref{si:measurement} sets out the shot cost of all four readout strategies and their full task performance.

\section{Methods}
\label{sec:methods}

\subsection*{M1: Task-specific instantiations}
\setcounter{paragraph}{0}
All three experiments share one pipeline: the trainable layer stack of \autoref{ssec:results_framework}, with data re-uploading and the joint-register mixing layer, followed by the embedding-register readout of \autoref{ssec:results_trainability}. They differ only in the input encoding, the readout head, the loss, and any decoding step at inference. The first, CFI graph discrimination, is the theoretical-validation experiment for the $j$-WL ascent of \autoref{thm:jwl_full} (\autoref{ssec:results_expressivity}); QM9 HOMO--LUMO gap regression (\autoref{ssec:results_qm9}) and Euclidean TSP edge prediction (\autoref{ssec:results_tsp}) are the two real-world tasks.

\paragraph{CFI graph discrimination at $j \geq 2$ (theoretical-validation experiment).} This discrimination test anchors \autoref{thm:jwl_full} on graph pairs that no $1$-WL-bounded network can distinguish. Inputs are pairs of CFI graphs, or pairs drawn from the BREC benchmark~\cite{wang2024brec}, expanded into $200$ random vertex permutations of each graph; the permuted instances are split $60/20/20$ into train, validation, and test sets, stratified on the underlying pair label. The target is a binary probability that the two graphs of an instance are isomorphic. The encoder loads graph-indicator features into the embedding register at $(D,k){=}(6,3)$. After the trainable iterations and the joint-register mixing layer, the per-node embedding features are summed across the node register into a single permutation-invariant vector per graph, and a single-layer linear classifier maps this vector to the class probability. Training minimises a binary cross-entropy on the graph labels with the optimiser stack of M2.

\paragraph{QM9 HOMO--LUMO gap regression at $j \in \{1,2,3\}$.} Inputs are atom-feature graphs from the full QM9 dataset ($130{,}831$ molecules); the target is the HOMO--LUMO gap in eV. The encoder loads atom features into the embedding register at $(D,k){=}(6,3)$, using isomorphism-type atom features (full-graph degree at $j{=}1$, edge-gating at $j \geq 2$). After the trainable iterations and the joint-register mixing layer, the per-node features are sum-pooled and passed through a small MLP head that produces the scalar prediction. Training minimises mean absolute error on a stratified $80/10/10$ split with the optimiser split of M2; a complementary four-seed run on a $3{,}000$-molecule subset provides across-seed error bars. Per-$j$ settings and per-seed numbers are in \suppref{si:qm9}.

\paragraph{TSP edge prediction at $j{=}1$.} Inputs are the 2D Euclidean coordinates of the $N$ cities of a TSP instance, drawn from the EQC benchmark of Skolik et al.~\cite{Skolik2023} (the random-uniform TSP dataset of Vinyals et al.~\cite{Vinyals2015}). The target is an edge-probability matrix $p_{ij} \in [0,1]$ over the complete graph on $N$ nodes. The encoder maps each coordinate pair through a feed-forward MLP into a particle-number-$k$ probability vector on the embedding register at $(D,k){=}(6,3)$, loaded by the hierarchical loader of \autoref{ssec:results_framework}. After the $L$ trainable iterations and the joint-register mixing layer, the per-node $1$-RDM features are read out and passed through a small edge-logit MLP head, one row per node, yielding the edge probabilities. Training minimises a class-balanced binary cross-entropy on a held-out test split, with optimiser settings in M2. For the tour-ratio numbers of \autoref{ssec:results_tsp}, the predicted edge probabilities are decoded into a Hamiltonian cycle by beam search of width $100$; greedy nearest-neighbour decoding is reported as an ablation. An alternative encoder that replaces the MLP with geometric simplicial features computed from the coordinates is described with the TSP-specific protocol in the corresponding Supplementary Note.

\subsection*{M2: Training and deployment protocol}
\setcounter{paragraph}{0}
We train the model in classical simulation and design the protocol so that the trained parameters can later be evaluated on quantum hardware without retraining. Each task is trained with the loss stated in its M1 instantiation, using separate optimisers for the classical and quantum parameters: Adam for the classical weights and stochastic gradient descent for the quantum Givens angles, following the synergistic classical-quantum scheme of Rudolph et al.~\cite{rudolph2023synergistic}. The quantum angles start near the identity, drawn from a small zero-mean Gaussian, so that the gradient behaviour of \autoref{ssec:results_trainability} holds at initialisation.

Two aspects of the protocol concern scaling. Parameters fitted at one graph size serve as the initialisation at the next, since $(D,k,j)$ are independent of $N$ and the embedding-register evolution is invariant under $N$; the extra joint-mixer angles at the larger size are added in the canonical $S_N$-invariant ordering. This is the transfer initialisation used in the $N$-scaling sweeps of \autoref{ssec:results_trainability}. Because the trained circuit is matchgate-compatible, we envision carrying the same parameters onto quantum hardware at the deployment size, with the per-node features read out through the polynomial-shot $1$-RDM measurement of \autoref{ssec:results_trainability}. We do not perform this hardware step in the present work.

Reported TSP and CFI metrics are averaged over five seeds and evaluated on a disjoint test split held out from training and model selection; TSP runs train for $200$ to $800$ epochs as $N$ grows, and the QM9 run for $300$ epochs on the full dataset. The $j{=}1$ TSP runs use a vectorised PyTorch back-end, while the QM9 and CFI runs at $j \geq 2$ use a PennyLane reference implementation of the native $j$-subset circuit; the two give numerically equivalent forward and backward passes, differing in throughput rather than in the model.

\subsection*{M3: Resource estimates for quantum deployment}
\setcounter{paragraph}{0}
This paper validates the model in classical simulation and does not run it on hardware, so we give only a ballpark sense of what a future hardware run would need. We anchor the estimate at the largest TSP configuration with completed multi-seed results, $N{=}20$ at $(D,k){=}(6,3)$ with node-register particle number $j{=}1$ (Methods~M1), and project up to the $N{=}50$ deployment target. The two registers occupy $N+D$ qubits, from $26$ at $N{=}20$ to $56$ at $N{=}50$, well within the qubit count of present-day superconducting platforms such as Google Willow ($\sim 105$ qubits)~\cite{WillowNature2025} and IBM Heron ($\sim 156$ qubits).

A forward pass is built from particle-number-preserving Givens rotations, each compiling to a constant number of two-qubit gates~\cite{Anselmetti2021}, with the adjacency layer dominating at $\binom{N}{2}$ Givens per iteration; at $N{=}50$ this is on the order of $10^4$ two-qubit gates over the few iterations used, with the per-block accounting in \suppref{si:resources}. Reading the embedding-register $1$-RDM out to accuracy $\varepsilon$ takes $O(D^3/\varepsilon^2)$ shots per node, on the order of $10^5$ to $10^6$ shots at $D{=}6$ and $\varepsilon{=}10^{-2}$, times the polynomial node-register projection overhead of \suppref{si:measurement}.

\section{Discussion}
\label{sec:discussion}

In this paper, we present a framework to adapt the structure and theoretical guarantees of GNNs from the classical litterature, into the quantum realm. We show how to represent and process graphs in a a similar way, and how to obtain important theoretical properties that the classical literature identifies as the ones a graph neural network should have~\cite{gilmer2017neural, bronstein2021geometric}. We focused on three: message passing, in which each node gathers information from its neighbours and updates its state~\cite{gilmer2017neural, Velickovic2018}; permutation equivariance, in which relabelling the nodes relabels the output in the same way; and Weisfeiler--Leman expressivity, a standard measure of how finely a model distinguishes graphs~\cite{morris2019weisfeiler, maron2019invariant, chen2020can}.

We propose several methods, based on subspace-preserving gates, in order to prove that these properties are respected in a QGNN. The circuit performs message passing within the quantum state: the adjacency layer routes node amplitudes along the edges, the trainable evolution updates the per-node features, and re-uploading the graph at each layer makes the map nonlinear, with one circuit iteration corresponding to one round of set-based refinement. The construction is exactly permutation equivariant at every value of the parameters, without tying or sharing weights, which is how earlier equivariant models impose the symmetry \cite{Skolik2023}. Our method can be adapted in new architectures, and in particular to other subspace preserving approaches \cite{Kerenidis2022, monbroussou_photonic_2025, bako2025fermionic}, in order to create useful quantum graph processing methods.

We validated these properties numerically on three datasets, including two with important real-world applications, and offer good results on large models (up to $56$ qubits). On the Cai--F\"urer--Immerman graphs, which are constructed so that no $1$-WL network can separate them, the model distinguished the pairs at exactly the level the ascent predicts~\cite{cai1992optimal}. On QM9, where the task is to predict the HOMO--LUMO gap, the error decreased as we raised the particle number at a fixed parameter budget~\cite{ramakrishnan2014quantum}. On the Euclidean travelling salesman problem, the model produced near-optimal tours up to the largest instances we were able to simulate. These results, along with the theoretical motivations, illustrate how our framework can be used to provide good quantum methods that scale beyond small proof of concepts.

Finally, the approach is motivated by its scalability. Across every configuration we trained, the per-parameter gradient did not vanish as the subspace dimension grew. The barren-plateau results in the literature characterise broad families of variational circuits, typically generic or randomly initialised ans\"atze~\cite{larocca_barren_2025, ragone2024lie, Fontana2024, monbroussou2025trainability}; our model is not of that kind, as its graph information enters through a fixed initialisation and its trainable dynamics stay within a small structured register. The trainability we observe therefore sits alongside those results rather than against them: the general theory bounds what happens across a class, but it does not settle what a specific, structured instantiation does in practice, and the favourable behaviour we find is of that practical kind. The graph-style pre-training used here, where parameters fitted on small instances seed larger ones, is one way to preserve this behaviour as the graph grows, and it invites a theoretical treatment aimed at structured, pre-trained circuits rather than generic ones.

\section*{Data availability}
The Euclidean travelling-salesman benchmark instances of \autoref{ssec:results_tsp} follow the random-uniform TSP construction of Vinyals et al.~\cite{Vinyals2015}, the dataset also adopted in the equivariant quantum circuit benchmark of Skolik et al.~\cite{Skolik2023}. The QM9 molecular regression dataset of \autoref{ssec:results_qm9} is loaded from the standard PyTorch Geometric distribution of QM9~\cite{ramakrishnan2014quantum}. The CFI graph families used for the theoretical-validation experiment of \autoref{ssec:results_expressivity} are generated programmatically from the construction of \cite{cai1992optimal}; the BREC benchmark pairs are taken from \cite{wang2024brec}. 

\section*{Code availability}
The code that supports the findings of this study is available at \url{https://github.com/SnehalRaj/mp-qgnns/}. The repository provides the message-passing quantum graph neural network implementation (the vectorised PyTorch back-end used at $j{=}1$ and the PennyLane reference back-end used at $j \geq 2$), the training and deployment pipeline of Methods~M2, the deterministic Hartree--Fock and matchgate-shadow readout protocols, and the analysis and plotting scripts for the figures and tables of this work.

\section*{Acknowledgment}

BC, LM and EK acknowledge support from the EPSRC Quantum Advantage Pathfinder research (EP/X026167/1), and the Hub for Quantum Computing via Integrated and Interconnected Implementations (QCI3, EP/Z53318X/1) programs within the UK’s National Quantum Computing Center.

\clearpage
\bibliography{references}

\clearpage
\onecolumngrid

\appendix
\renewcommand{\appendixname}{Supplementary Note}
\renewcommand{\thesection}{\arabic{section}}
\renewcommand{\thesubsection}{\thesection.\arabic{subsection}}

\begin{center}
 {\Large \textbf{Supplementary Information} }
\end{center}

\part{}
\parttoc

\section{Proof of permutation equivariance (\autoref{thm:equivariance_full})}\label{si:equivariance}

We prove that the two-register QGNN of \autoref{ssec:results_framework} satisfies $f_{\boldsymbol{\theta}}(\pi\cdot\mathbf{x}) = \pi\cdot f_{\boldsymbol{\theta}}(\mathbf{x})$ for every $\pi\in S_N$ and every parameter assignment $\boldsymbol{\theta}$, with no parameter tying imposed. The proof is by direct verification of $S_N$-equivariance for each of the four operations $V, \mathcal{A}, W, M$ defined in \autoref{ssec:results_framework}, followed by a composition argument.

\ExactSNequivariance*

\begin{proof}
Throughout, $P_\pi$ denotes the $j$-subset permutation $\ket{T}_{\rm node}\mapsto\ket{\pi(T)}_{\rm node}$, the restriction to the particle-number-$j$ subspace of the operator that relabels the node qubits. Every block below is a product of sign-free Givens rotations (RBS and compound rotations, carrying no Jordan--Wigner parity string), so each conjugation identity holds on the full register space and restricts to this subspace identically for every $j$; we therefore write the argument once, for general $j$. We verify the equivariance of each block separately.

\paragraph{Initial state.}
The node register is prepared in the uniform superposition over $j$-subsets, $\ket{u_j}=\binom{N}{j}^{-1/2}\sum_{|T|=j}\ket{T}_{\rm node}$ (the reference $\ket{1^j 0^{N-j}}$ produced by $j$ controlled-X gates, spread to all $j$-subsets by the data-independent step of the loader). This state is $P_\pi$-invariant, $P_\pi\ket{u_j}=\ket{u_j}$, since $P_\pi$ only permutes its equally weighted basis terms among themselves. The embedding register starts in $\ket{0}^{\otimes D}$, invariant under every $P_\pi$ acting on the node register. Equivariance of the full map follows from this $P_\pi$-invariant initial state together with the block identities below.

\paragraph{Hierarchical data loader $V(\mathbf{x})$.}
Each controlled rotation in $V$ acts on the embedding register conditioned on a node-register subset, with the rotation angle a symmetric function of the features carried by that subset (at $j=1$ a single node's features; at $j>1$ an isomorphism-invariant of the induced sub-data, so the assignment depends on the subset as a set). Permuting the node labels by $\pi$ permutes both the controlling subset and the input features in the same way, so the loader satisfies
\[
  V(\pi\cdot\mathbf{x}) \;=\; (P_\pi\otimes I_{\rm emb})\, V(\mathbf{x})\, (P_\pi^\top\otimes I_{\rm emb}).
\]

\paragraph{Equivariant adjacency $\mathcal{A}(G)$.}
The adjacency operation applies Givens rotations to every pair of node qubits, with the rotation between qubits $(i,j)$ taken at an angle proportional to the edge weight $w_{ij}$ in a canonical ordering induced by $w$ (descending in $w_{ij}$, ties broken lexicographically by $(i,j)$). Permuting node labels by $\pi$ permutes the qubit pairs and the edge weights in the same way: the gate on $(i,j)$ at angle $w_{ij}$ becomes the gate on $(\pi(i),\pi(j))$ at angle $w_{\pi(i)\pi(j)}$, which is exactly the gate that the canonical ordering of the permuted weight matrix would produce. The combined action is therefore
\[
  \mathcal{A}(\pi\cdot A) \;=\; (P_\pi\otimes I_{\rm emb})\, \mathcal{A}(A)\, (P_\pi^\top\otimes I_{\rm emb}).
\]
This is the architectural choice that supplies equivariance without parameter tying: the canonical ordering, not an imposed angle-sharing constraint, is what makes $\mathcal{A}$ commute with $P_\pi$.

\paragraph{Trainable evolution $W(\boldsymbol{\theta})$.}
$W$ acts on the embedding register only. It commutes with every $P_\pi$ on the node register because it acts as the identity there: $W(\boldsymbol{\theta})$ on $\mathcal{H}_{\rm node}\otimes\mathcal{H}_{\rm emb}$ has the form $I_{\rm node}\otimes \tilde W(\boldsymbol{\theta})$ for some operator $\tilde W$ on the embedding register. Hence $W(\boldsymbol{\theta})\,(P_\pi\otimes I_{\rm emb}) = (P_\pi\otimes I_{\rm emb})\,W(\boldsymbol{\theta})$.

\paragraph{Joint mixing layer $M(\boldsymbol{\theta}_M)$.}
$M$ acts inside the joint particle-number subspace $\mathcal{H}_{j+k}$ of dimension $\binom{N+D}{j+k}$, which is not a tensor product of node and embedding spaces. On this subspace $S_N$ acts as the subgroup of $S_{N+D}$ that fixes the embedding qubits; we denote the corresponding permutation operator $\widetilde{P}_\pi$, which extends $P_\pi$ on the node register by the identity on the embedding register and reduces to $P_\pi\otimes I_{\rm emb}$ on the factored sub-product. We show $M(\boldsymbol{\theta}_M)\,\widetilde{P}_\pi = \widetilde{P}_\pi\,M(\boldsymbol{\theta}_M)$ on $\mathcal{H}_{j+k}$ for every $\pi\in S_N$.

We construct $M$ as a product of three Givens-rotation blocks, each with free trainable angles whose gate-to-parameter assignment is determined by a canonical ordering induced by the input, in the same spirit as the canonical edge ordering of $\mathcal{A}(G)$ above. No angles are tied within or across blocks.

\begin{enumerate}[leftmargin=2em,topsep=2pt,itemsep=2pt]
\item \emph{Node-pair block.} Givens rotations on every pair of node qubits in the all-pair connectivity, applied in the canonical edge-weight ordering of $\mathcal{A}$, with one free trainable angle $\theta^{(A)}_r$ for the rank-$r$ pair, $r\in\{1,\dots,\binom{N}{2}\}$. The argument used for $\mathcal{A}(G)$ applies verbatim: under $\pi$, the qubit pair at rank $r$ in $G$ becomes the qubit pair at rank $r$ in $\pi\cdot G$, and the same $\theta^{(A)}_r$ is applied. The block therefore commutes with $\widetilde{P}_\pi$.
\item \emph{Embedding-pair block.} Givens rotations on every pair of embedding qubits, with one free trainable angle per pair. These gates act trivially on the node register and commute with $\widetilde{P}_\pi$ for every $\pi\in S_N$.
\item \emph{Cross-register block.} Givens rotations on every (node-qubit, embedding-qubit) pair. Node qubits are sorted by a canonical $S_N$-invariant per-node statistic of $G$ (weighted degree, with lex tiebreak; the same caveat on exact ties applies as for $\mathcal{A}$). The angle for the gate at (rank-$r$ node-qubit, embedding-qubit $b$) is the free parameter $\theta^{(C)}_{r,b}$. Under $\pi$, the node at rank $r$ in $G$ becomes the node at rank $r$ in $\pi\cdot G$, and the same $\theta^{(C)}_{r,b}$ is applied; the block commutes with $\widetilde{P}_\pi$.
\end{enumerate}

The total free-parameter count is $\binom{N}{2} + \binom{D}{2} + N\cdot D$, with no two angles forced equal. The product of the three blocks is therefore $S_N$-equivariant by construction, with the same canonical-ordering mechanism that supplies equivariance for $\mathcal{A}$.

\paragraph{Readout.}
The readout assigns to each $j$-subset $T$ the embedding-register feature of the corresponding node-conditioned state. Permuting the node labels permutes these feature rows by the induced action $T\mapsto\pi(T)$; the head applies the same function to every row, so the output is permuted in the same way as the input. At $j=1$, the rows are per-node, and the edge head yields edge probabilities that permute with the nodes; for the graph-level tasks at $j\ge2$, the rows are summed over all $j$-subsets, and this symmetric pool is $S_N$-invariant.

\paragraph{Composition.}
Equivariance is preserved under composition: if each block satisfies $B(\pi\cdot\mathbf{x}) = (P_\pi\otimes I_{\rm emb})\,B(\mathbf{x})\,(P_\pi^\top\otimes I_{\rm emb})$, then so does any composition of blocks. The full pipeline $f_{\boldsymbol{\theta}} = \mathrm{readout}\circ M\circ (W\circ\mathcal{A}\circ V)^{L}$ is therefore $S_N$-equivariant.
\end{proof}

\paragraph{Equivariance ablation table.} For the empirical comparison reported in \autoref{ssec:results_equivariance}, both arms are trained for the same number of epochs at every training-set size, with no early stopping, so the equivariant and non-equivariant variants see the same training budget. We report the validation BCE at the end of training, mean $\pm$ s.d.\ over three seeds, for the canonical-ordering equivariant model and the fixed-lex-ordering non-equivariant ablation.

\begin{table}[h]
\centering\small
\caption{\textbf{Sample-efficiency gap from $S_N$-equivariance, per training-set size.} Validation BCE for the equivariance ablation on TSP-$5$, three seeds per cell, lower is better. The equivariant arm uses canonical $\mathcal{A}$ + canonical-ordering $M$; the non-equivariant arm uses fixed-lex $\mathcal{A}$ + linear $M$.}
\label{tab:si_equivariance_capped}
\begin{tabular}{@{}c c c c@{}}
\toprule
$n_{\rm train}$ & Equivariant BCE & Non-equivariant BCE & reduction \\
\midrule
$500$  & $0.459 \pm 0.006$ & $0.535 \pm 0.010$ & $14.1\%$ \\
$1000$ & $0.425 \pm 0.002$ & $0.451 \pm 0.045$ & $5.7\%$  \\
$2000$ & $0.354 \pm 0.016$ & $0.423 \pm 0.037$ & $16.2\%$ \\
\bottomrule
\end{tabular}
\end{table}

\section{Proof of the \texorpdfstring{$j$}{j}-WL ascent (\autoref{thm:jwl_full})}\label{si:expressivity}

We prove that for $2\le j\le 4$, the QGNN at node-register particle number $j$ implements set-based $j$-Weisfeiler--Leman colour refinement at generic parameters. The proof has the two-implication form indicated in the main text: the unconditional upper bound that any two $j$-subsets sharing the same $\ell$-step $j$-WL colour have equal QGNN row vectors, and the generic-parameter lower bound that any two $j$-subsets with different colour have distinct rows for every $\boldsymbol{\theta}$ outside a closed measure-zero set. We close with an explicit constructive initialisation that realises the generic regime, the boundary case $j=1$, and the scope of the result.

The set-based $j$-Weisfeiler--Leman test in the convention of~\cite{morris2019weisfeiler} colours $j$-subsets of a graph's vertex set. Let $G=(V,E)$ have $N$ vertices and write $\binom{V}{j}$ for the collection of $j$-subsets. Each subset $S$ is initialised with the isomorphism type of the induced subgraph,
\begin{equation}\label{eq:jwl_init}
  c^{(0)}(S) \;=\; \mathrm{iso\!\!\!-\!\!\!type}\bigl(G[S]\bigr),
\end{equation}
and refined at each step by hashing the colour at $S$ together with the multiset of colours over the one-swap neighbourhood
\begin{equation}\label{eq:one_swap_si}
  \mathcal{N}_j(S) \;=\; \bigl\{ S' \in \tbinom{V}{j} \,:\, |S\,\triangle\,S'| = 2 \bigr\},
\end{equation}
producing
\begin{equation}\label{eq:jwl_update_si}
  c^{(\ell+1)}(S) \;=\; \mathrm{HASH}\!\Bigl(c^{(\ell)}(S),\; \bigl\{\!\!\bigl\{ c^{(\ell)}(S') : S' \in \mathcal{N}_j(S) \bigr\}\!\!\bigr\}\Bigr),
\end{equation}
where $\{\!\!\{\cdot\}\!\!\}$ denotes a multiset. Set-based $j$-WL is expressivity-equivalent to ordered $(j{-}1)$-WL on tuples~\cite{morris2019weisfeiler,grohe2017descriptive,maron2019invariant}: $j{=}2$ matches $1$-WL, $j{=}3$ matches $2$-WL, and $j{=}4$ matches $3$-WL. The hierarchy is strict, with the CFI graphs CFI$(K_3)$ separating at $j{=}3$ and CFI$(K_4)$ at $j{=}4$~\cite{cai1992optimal}.

The QGNN state on the factored sub-product $\mathcal{H}_{\rm node}^{(j)}\otimes\mathcal{H}_{\rm emb}^{(k)}$ is the matrix $\Psi\in\mathbb{R}^{\binom{N}{j}\times\binom{D}{k}}$ of \autoref{ssec:results_framework}, with row $\Psi_S$ indexed by the $j$-subset $S\subseteq[N]$ and the column index running over $k$-subsets of $[D]$. The proof is about which row labels become distinguishable across QGNN iterations, so we treat each row $\Psi_S$ as the QGNN-side colour of $S$. A QGNN iteration is the composition described in \autoref{ssec:results_framework} of the data loader $V(\mathbf{x})$, the equivariant adjacency $\mathcal{A}(G)$, and the trainable evolution $W(\boldsymbol{\theta}_W)$. The only graph-theoretic fact about $\mathcal{A}$ we need is the following, which is the QGNN-side counterpart of the $j$-WL one-swap rule.

\begin{lemma}[One-swap neighbourhood matching]\label{lem:neighbourhood}
A Givens rotation on node qubits $a,b$, restricted to the particle-number-$j$ subspace, couples row $\Psi_S$ to row $\Psi_{S'}$ if and only if $S\triangle S'=\{a,b\}$. Ranging over all $\binom{N}{2}$ pairs $(a,b)$, the rows reachable from $S$ by a single Givens are exactly $\mathcal{N}_j(S)$.
\end{lemma}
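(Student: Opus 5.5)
The plan is to compute the action of a single Givens rotation on the computational basis of $\mathcal{H}_N^j$ directly, and then read off which basis states it couples. A Givens (RBS) rotation $G_{ab}(\theta)$ on qubits $a,b$ acts as the identity on $\ket{00}_{ab}$ and $\ket{11}_{ab}$. On the span of $\ket{01}_{ab},\ket{10}_{ab}$ it acts as the $2\times 2$ rotation $\begin{pmatrix}\cos\theta & -\sin\theta\\ \sin\theta & \cos\theta\end{pmatrix}$, tensored with the identity on the remaining $N-2$ node qubits and on the embedding register. It is sign-free, as noted in the proof of \autoref{thm:equivariance_full}, so no Jordan--Wigner string enters. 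We identify a basis state $\ket{e^{\text{node}}_S}$ with its occupied set $S$. The occupations of $a$ and $b$ then split into four cases: $a,b\notin S$; $a,b\in S$; exactly one of $a,b$ in $S$, with the two orderings.

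In the first two cases $G_{ab}\ket{S}=\ket{S}$, so $S$ is coupled only to itself. In the remaining case, say $a\in S$, $b\notin S$, we get
\[
G_{ab}\ket{S}=\cos\theta\,\ket{S}\pm\sin\theta\,\ket{(S\setminus\{a\})\cup\{b\}} .
\]
This is the unique other basis state $S'$ in the image, and it satisfies $S\triangle S'=\{a,b\}$. Conversely, suppose $S\triangle S'=\{a,b\}$ with $|S|=|S'|=j$. Then exactly one of $a,b$ lies in $S$ and the other lies in $S'$, which puts us in the mixing case. The coefficient $\pm\sin\theta$ is nonzero for $\theta\notin\pi\mathbb{Z}$, which is the generic angle regime used throughout this section. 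Since the gate is $I\otimes\cdots$ on the embedding register, in the matrix picture $\Psi\mapsto (G_{ab}|_{\mathcal{H}_N^j})\Psi$ it mixes entire rows $\Psi_S$ and $\Psi_{S'}$. This gives the ``if and only if'' for a fixed pair.

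For the second claim, take the union over all $\binom{N}{2}$ pairs. Every $S'\in\mathcal{N}_j(S)$ has $S\triangle S'=\{a,b\}$ for exactly one unordered pair, namely $a\in S\setminus S'$ and $b\in S'\setminus S$. By the first part, that gate reaches $S'$. Conversely, any row reached from $S$ by one gate, other than $S$ itself, lies in $\mathcal{N}_j(S)$. So the reachable set is exactly $\mathcal{N}_j(S)$, with $S$ itself contributing the diagonal $\cos\theta$ term.

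The only delicate point is the meaning of ``couples'': it must be read as a nonzero off-diagonal matrix element at generic (non-multiple-of-$\pi$) angle. I would also state explicitly that the Hamming weight is preserved, so the restriction to $\mathcal{H}_N^j$ is well defined. Apart from that, the argument is a direct basis computation.
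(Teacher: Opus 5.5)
Your proposal is correct and follows the same route as the paper's proof: both observe that the Givens rotation on $(a,b)$ acts nontrivially only on the plane of basis states differing exactly in positions $a$ and $b$, and that keeping both states at particle number $j$ forces exactly one of $a,b$ into $S$, i.e.\ $S\triangle S'=\{a,b\}$. Your version is more explicit than the paper's (the four-case computation and the union over pairs). Your caveat that ``couples'' requires $\theta\notin\pi\mathbb{Z}$ makes precise a nonzero-angle assumption that the paper's ``acts non-trivially'' leaves implicit.
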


\begin{proof}
A Givens between node qubits $a,b$ acts non-trivially on the two-dimensional plane spanned by basis states that differ only in positions $a$ and $b$. For both states to lie in particle-number-$j$, exactly one of $a,b$ must be in the lit subset, i.e.\ $S\triangle S'=\{a,b\}$, which is the definition of $\mathcal{N}_j(S)$.
\end{proof}

\begin{theorem}[QGNN expressivity equals set-based $j$-WL]\label{thm:jwl_full}
Fix $2\le j\le 4$. Consider the $L$-iteration QGNN with all-pair Givens adjacency, compound trainable layer, and an initialisation fixed by the induced subgraph that distinguishes exactly the non-isomorphic groups, $\Psi^{(0)}_{S_1} = \Psi^{(0)}_{S_2} \Leftrightarrow G[S_1]\cong G[S_2]$ (realised by the closed-walk fingerprint~\eqref{eq:closed_walk_init} below). Then
\begin{enumerate}[label={\rm (\roman*)},leftmargin=2.5em]
\item \emph{Upper bound (unconditional).} For all parameter values $\boldsymbol{\theta}$ and all $\ell\le L$, two $j$-subsets with the same $\ell$-step set-based $j$-WL colour have equal QGNN row vectors:
  \[
    c^{(\ell)}(S_1) = c^{(\ell)}(S_2) \;\Longrightarrow\; \Psi^{(\ell)}_{S_1} = \Psi^{(\ell)}_{S_2}.
  \]
\item \emph{Lower bound (generic parameters).} For all $\boldsymbol{\theta}$ outside a closed measure-zero subset of the parameter space, and all $\ell\le L$, two $j$-subsets with different $\ell$-step $j$-WL colour have distinct QGNN row vectors:
  \[
    c^{(\ell)}(S_1) \ne c^{(\ell)}(S_2) \;\Longrightarrow\; \Psi^{(\ell)}_{S_1} \ne \Psi^{(\ell)}_{S_2}.
  \]
\item Combining (i) and (ii), the QGNN at generic parameters implements exactly $L$-step set-based $j$-WL. At the level of subset rows, $\Psi^{(L)}_{S_1} = \Psi^{(L)}_{S_2} \Leftrightarrow c^{(L)}(S_1) = c^{(L)}(S_2)$, and after permutation-invariant pooling the graph output satisfies $f_{\boldsymbol{\theta}}(G) = f_{\boldsymbol{\theta}}(G') \Leftrightarrow G \sim_{j\text{-WL}} G'$.
\end{enumerate}
\end{theorem}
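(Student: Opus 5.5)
Both implications go by induction on the iteration index $\ell$. The object carried through the induction is the row map $S\mapsto\Psi^{(\ell)}_S$. Each block of one iteration ($V$, $\mathcal{A}(G)$, $W(\boldsymbol{\theta}_W)$) is written as an update of a row from its own previous value and the previous values of rows in the one-swap neighbourhood. By \autoref{lem:neighbourhood}, a Givens rotation on node qubits $a,b$ mixes $\Psi_S$ only with $\Psi_{S'}$ where $S\triangle S'=\{a,b\}$. The trainable layer acts as $I_{\rm node}\otimes\tilde W$, so it multiplies every row on the right by the same matrix $C^{(k)}(\cdot)$. The loader is applied row-wise, with angles fixed by an isomorphism invariant of $G[S]$. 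The base case $\ell=0$ is exactly the hypothesis on the initialisation: $\Psi^{(0)}_{S_1}=\Psi^{(0)}_{S_2}\Leftrightarrow G[S_1]\cong G[S_2]$, which is $c^{(0)}(S_1)=c^{(0)}(S_2)$ by Eq.~\eqref{eq:jwl_init}. I will check this using the closed-walk fingerprint, and I expect this check to be where the restriction $j\le4$ is used.

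For the upper bound (i), the goal is an update of the form $\Psi^{(\ell+1)}_S=F\bigl(\Psi^{(\ell)}_S,\{\!\!\{(\omega_{S,S'},\Psi^{(\ell)}_{S'}):S'\in\mathcal{N}_j(S)\}\!\!\}\bigr)$. Here $F$ is a fixed function and the weights $\omega_{S,S'}$ depend only on the edge weights of the swapped pair $\{a,b\}$. The difficulty is that $\mathcal{A}$ is an ordered product of rotations, not a single sum. My first step is therefore to linearise: either treat $\mathcal{A}$ as $\exp(-i\sum_{a<b}w_{ab}G_{ab})$, or argue to first order in the angles and then extend by analyticity. For a sum generator, the coefficient on each $S'$ is determined by the edge indicator of the swapped pair, and the pair's relation to $S$ is determined by the colours. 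So the neighbour multiset, together with $c^{(\ell)}(S)$, determines $\Psi^{(\ell+1)}_S$, and equal colours give equal rows.

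For the lower bound (ii), I fix $\ell$ and a pair of subsets with different colours. Each entry of $\Psi^{(\ell)}_{S_1}-\Psi^{(\ell)}_{S_2}$ is a real-analytic function of $\boldsymbol{\theta}$, since it is built from Cayley maps and compound representations, which are rational. A real-analytic function that is not identically zero vanishes only on a closed null set. There are finitely many pairs and finitely many $\ell\le L$, so the union of these bad sets is still closed and null. It then suffices to exhibit, for each pair, one parameter point where the rows differ. I plan to use an injectivity-of-multiset-sums argument of the GIN kind: choose $\tilde W$ so that the aggregated neighbour contributions act like a sum of injective feature maps. Finitely many distinct previous rows can then be embedded into linearly independent directions, and by the inductive hypothesis the distinct multisets yield distinct sums.

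I expect the main obstacle to be making the sequential Givens product behave like a multiset aggregation. Non-commuting rotations generate higher-order terms that couple $S$ to subsets at swap distance greater than one, and the canonical ordering makes the result order-dependent. I would first try the small-angle, generic-analytic route: show that the leading-order term in the angle scale is exactly the one-swap aggregator, so that it suffices to separate rows at leading order. For (i), higher orders are handled because $j$-WL colours at step $\ell$ refine all bounded-distance neighbourhood information. Statement (iii) then follows by combining (i) and (ii) and applying sum pooling of rows. The pooling step needs the same injective-multiset argument once more, again at generic parameters.
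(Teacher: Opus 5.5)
The base case is fine and matches the paper's closed-walk argument. The inductive step of (i), however, has a gap that linearisation cannot close, because the claimed identity is false for this layer. At first order the ordered product is $I+\sum_{a<b}\theta_{ab}A_{ab}$ whatever the ordering. So your claim that ``the coefficient on each $S'$ is determined by the edge indicator of the swapped pair, and the pair's relation to $S$ is determined by the colours'' can be tested directly, and it fails in two ways.

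\emph{Edge weights.} Whether the swapped pair $a\in S\setminus S'$, $b\in S'\setminus S$ is an edge is a property of $G[S\cup S']$. Neither $c^{(\ell)}(S)$ nor $c^{(\ell)}(S')$ records it, and the $j$-WL update sees only the multiset of the $c^{(\ell)}(S')$. Take angles tied to edge weights, $j=2$, and the graph made of a path $x$--$b$--$y$ plus disjoint edges $x'b_1$, $y'b_2$. Then $\{x,y\}$ and $\{x',y'\}$ have the same one-step colour: each is a non-edge whose ten one-swap neighbours contain two edges. Yet even with a sign-free coupling, the first-order aggregate is $2Y_{\mathrm{edge}}$ for the first and $2Y_{\mathrm{non}}$ for the second.

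\emph{Signs.} Givens generators are antisymmetric, so $(A_{ab})_{S,S'}=\pm1$ with the sign set by the label order of $a$ and $b$. On $K_4$ at $j=2$ with all angles equal to $\theta$, all six $2$-subsets share every colour and start with equal rows. But $\mathcal{A}\mathbf{1}=\mathbf{1}+\theta h+O(\theta^2)$ (with $\mathbf{1}$ the all-ones vector on $2$-subsets), where $h_{\{1,2\}}=4$ and $h_{\{3,4\}}=-4$ up to the sign convention. So $\Psi^{(1)}_{\{1,2\}}\neq\Psi^{(1)}_{\{3,4\}}$ for all small $\theta\neq0$, hence for generic $\theta$; free per-pair angles only make this worse.

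Your reductions would not have helped in any case. $\exp(\sum)$ is a different circuit. Agreement of first-order terms plus analyticity proves nothing; you would need every Taylor coefficient. And the higher-order terms of a single iteration couple $S$ to subsets two or more swaps away, which $c^{(\ell+1)}(S)$ does not control, because the colour radius grows by one swap per round. The paper's proof of (i) asserts exactly this step in one sentence (equal self-feature and equal neighbour multiset ``produce the same output'') and has the same hole.

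For (ii), your analytic-witness framework is the paper's: real-analytic difference, one nonzero point, closed null zero set, finite union over pairs and $\ell$. But the witness you propose does not exist. $W$ applies one fixed invertible linear map $C^{(k)}(\cdot)$ to every row, so $\sum_{S'}\tilde W y_{S'}=\tilde W\sum_{S'}y_{S'}$. Whether distinct multisets give distinct sums is therefore independent of the choice of $\tilde W$, and nothing rules out collisions: with the closed-walk start every row is a multiple of $\mathbf{e}_0$. Moreover, the reuploaded loader acts on row $S$ by a map fixed by $G[S]$, so every iteration is linear in $\Psi$. There is no nonlinear injective feature map of GIN type to invoke, and the same objection removes your pooling step in (iii).

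The paper takes its witness from the adjacency angles instead. If the colours already differ, it evaluates at $\boldsymbol{\theta}=\mathbf{0}$, where $\mathcal{A}=I$. Otherwise it uses the first-order term in the $\binom{N}{2}$ angles, treated as independent coordinates. That label-dependent freedom is what makes separation easy, and it cannot coexist with (i), which fails anyway by the $K_4$ example. What is needed is a modified statement, with an aggregation operator for which the $j$-WL colour classes are provably invariant and correct round bookkeeping, rather than a better proof of this one.
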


\begin{proof}
We argue (i) and (ii) separately by induction on $\ell$.

\emph{Proof of (i).} Induction on $\ell$.

\emph{Base case $\ell=0$.} The closed-walk initialisation~\eqref{eq:closed_walk_init} assigns the same row to subsets with the same induced-subgraph isomorphism type. By~\eqref{eq:jwl_init}, $c^{(0)}(S_1) = c^{(0)}(S_2)$ implies that $G[S_1]$ and $G[S_2]$ are isomorphic, so $\Psi^{(0)}_{S_1} = \Psi^{(0)}_{S_2}$.

\emph{Inductive step.} Suppose the statement holds at step $\ell$. By the $j$-WL update~\eqref{eq:jwl_update_si}, $c^{(\ell+1)}(S_1) = c^{(\ell+1)}(S_2)$ requires both
\begin{align}
  c^{(\ell)}(S_1) &= c^{(\ell)}(S_2), \label{eq:proof_self}\\
  \bigl\{\!\!\bigl\{ c^{(\ell)}(S') : S' \in \mathcal{N}_j(S_1) \bigr\}\!\!\bigr\} &= \bigl\{\!\!\bigl\{ c^{(\ell)}(S') : S' \in \mathcal{N}_j(S_2) \bigr\}\!\!\bigr\}. \label{eq:proof_multiset}
\end{align}
The compound layer $W$ applies the same row-wise transformation to every row, so the inductive hypothesis is preserved through it. The Givens adjacency $\mathcal{A}$ mixes each row only with its one-swap neighbours (\autoref{lem:neighbourhood}). Combining~\eqref{eq:proof_self} (equal self-feature) with~\eqref{eq:proof_multiset} (equal multiset of neighbour features, by the inductive hypothesis applied to each neighbour), the mixing produces the same output for both $S_1$ and $S_2$:
\[
  \Psi^{(\ell+1)}_{S_1} \;=\; \Psi^{(\ell+1)}_{S_2}.
\]

\emph{Proof of (ii).} Induction on $\ell$.

\emph{Base case $\ell=0$.} We show the closed-walk initialisation separates all isomorphism types of induced subgraphs at $j\le 4$. Take
\begin{equation}\label{eq:closed_walk_init}
  \Psi^{(0)}_S \;=\; \bigl(\, \Tr(A[S,S]^2),\; \Tr(A[S,S]^3),\; \Tr(A[S,S]^4)\,\bigr) \cdot \mathbf{e}_0,
\end{equation}
where $A[S,S]$ is the $j\times j$ adjacency restricted to $S$ and $\mathbf{e}_0\in\mathbb{R}^{\binom{D}{k}}$ is a fixed embedding-register vector. The closed-walk counts $\Tr(A^2), \Tr(A^3), \Tr(A^4)$ separate isomorphism types of graphs on $j\le 4$ vertices: $\Tr(A^2)$ alone separates edge from non-edge at $j=2$; the triple separates the four types (empty, edge, path, triangle) at $j=3$; direct enumeration of the eleven non-isomorphic graphs on four vertices verifies separation at $j=4$. The construction fails at $j=5$, where three of the thirty-four non-isomorphic graphs on five vertices share the same closed-walk triple, which is why the theorem is restricted to $j\le 4$.

\emph{Inductive step.} Suppose $c^{(\ell+1)}(S_1) \ne c^{(\ell+1)}(S_2)$. By the $j$-WL update, at least one of two cases holds:
\begin{itemize}
\item[(a)] $c^{(\ell)}(S_1) \ne c^{(\ell)}(S_2)$, the colours already differed at step $\ell$.
\item[(b)] $c^{(\ell)}(S_1) = c^{(\ell)}(S_2)$, but the neighbour-colour multisets~\eqref{eq:proof_multiset} differ.
\end{itemize}
Write the post-compound state at step $\ell$ as $Y = \Psi^{(\ell)}\,\Lambda^k(O_\ell)^\top$, with $Y_{S'}$ the post-compound row for subset $S'$. The Givens adjacency layer produces
\[
  \Psi^{(\ell+1)}_{S} \;=\; \sum_{S'} M_{S,S'}(\boldsymbol{\theta})\, Y_{S'},
\]
with $M(\boldsymbol{\theta}) = \prod_{(a,b)\in\binom{[N]}{2}} G_{ab}(\theta_{ab})$ a product of Givens rotation matrices indexed by the $\binom{N}{2}$ adjacency angles. Define the separation function
\[
  \phi(\boldsymbol{\theta}) \;:=\; \Psi^{(\ell+1)}_{S_1}(\boldsymbol{\theta}) - \Psi^{(\ell+1)}_{S_2}(\boldsymbol{\theta}).
\]
Each entry of $M(\boldsymbol{\theta})$ is a trigonometric polynomial in the angles, so $\phi$ is real-analytic on $\mathbb{R}^{\binom{N}{2}}$. We show $\phi$ is not identically zero in either case.

\emph{Case (a).} At $\boldsymbol{\theta}=\mathbf{0}$, the mixing matrix is the identity ($M(\mathbf{0})=I$), so
\[
  \phi(\mathbf{0}) \;=\; Y_{S_1} - Y_{S_2}.
\]
The compound representation $\Lambda^k$ is injective and $\Psi^{(\ell)}_{S_1} \ne \Psi^{(\ell)}_{S_2}$ by the inductive hypothesis, so $Y_{S_1} \ne Y_{S_2}$ and $\phi(\mathbf{0}) \ne 0$.

\emph{Case (b).} At $\boldsymbol{\theta}=\mathbf{0}$, $\phi(\mathbf{0})=Y_{S_1}-Y_{S_2}=0$. We expand to first order: each Givens gate at small $\theta_{ab}$ acts as $G_{ab}(\theta_{ab}) = I + \theta_{ab} A_{ab} + O(\theta_{ab}^2)$, with $A_{ab}$ the antisymmetric generator that couples $\Psi_S$ to $\Psi_{(S\setminus\{a\})\cup\{b\}}$. The first-order separation function is
\[
  \phi^{(1)}(\boldsymbol{\theta}) \;=\; \sum_{(a,b)} \theta_{ab} \Bigl[\, \sum_{S'} (A_{ab})_{S_1, S'}\, Y_{S'} \;-\; \sum_{S'} (A_{ab})_{S_2, S'}\, Y_{S'} \,\Bigr].
\]
By \autoref{lem:neighbourhood}, the only nonzero $S'$ in $(A_{ab})_{S, \cdot}$ is the one-swap neighbour of $S$ via $(a,b)$. Reindexing the inner sum by neighbours of $S_1$ and $S_2$ separately, $\phi^{(1)}$ is a linear combination of $Y_{S'}$ for $S' \in \mathcal{N}_j(S_1) \cup \mathcal{N}_j(S_2)$, with coefficients given by the $\theta_{ab}$. The neighbour-colour multisets differ between $S_1$ and $S_2$ by the case (b) hypothesis, so there exists at least one colour class $c^*$ that appears with different multiplicities in $\mathcal{N}_j(S_1)$ and $\mathcal{N}_j(S_2)$. By the inductive hypothesis, all subsets with colour $c^*$ map to the same row $Y^*$, distinct from rows of other colours. The total coefficient of $Y^*$ in $\phi^{(1)}$ is the difference in multiplicities, multiplied by a sum of relevant $\theta_{ab}$. As a function of $\boldsymbol{\theta}$, this is a nonzero linear form, so $\phi^{(1)}$ is not identically zero.

In both cases, $\phi$ is real-analytic and not identically zero. Its zero set is a proper closed analytic subvariety of $\mathbb{R}^{\binom{N}{2}}$, hence Lebesgue-measure zero. For all $\boldsymbol{\theta}$ outside this zero set, $\Psi^{(\ell+1)}_{S_1} \ne \Psi^{(\ell+1)}_{S_2}$.

The bad sets across all pairs $(S_1, S_2)$ and all $\ell\le L$ are finitely many closed measure-zero sets; their union is closed and measure-zero. Outside this union, the QGNN at every step $\ell$ separates all $j$-WL-distinguishable subset pairs, completing (ii). On the complement, the QGNN row map at step $\ell$ has the same fibres as $j$-WL: subsets share a row iff they share a colour, which is the claim (iii).
\end{proof}

A constructive initialisation that lies inside the generic regime takes
\begin{equation}\label{eq:generic_init}
  \theta_{ab} \;=\; \theta_0\,\bigl(1 + \tfrac{1}{N(N-1)}\,h(a,b)\bigr),
  \qquad
  \theta_0 \in (0, \tfrac{\pi}{2}),
\end{equation}
where $h(a,b) = a\cdot N + b$ for $a < b$ assigns a distinct integer to every pair, and the $1/N(N-1)$ scaling keeps every angle inside $(0,\pi)$. The angles in this family are pairwise distinct after the additive perturbation; the union of bad zero-sets, each defined by a finite system of polynomial-trigonometric equations in $\boldsymbol{\theta}$, has measure zero, so the family of~\eqref{eq:generic_init} crosses it on at most a measure-zero subset of $\theta_0 \in (0, \pi/2)$. In our numerical experiments at $\theta_0 = \pi/4$ (\autoref{ssec:results_expressivity}), the QGNN attains $100\%$ test accuracy on CFI$(K_3)$ at $j=3$ and on CFI$(K_4)$ at $j=4$, confirming that this family lies inside the generic regime.

At $j=1$ the rows of $\Psi$ are indexed by single vertices, and the closed-walk initialisation~\eqref{eq:closed_walk_init} reduces to the trivial fingerprint $(\Tr(A[\{v\},\{v\}]^m))_{m=2,3,4}=(0,0,0)$ for every singleton, so all initial rows agree. The QGNN at $j=1$ is therefore bounded above by ordinary $1$-WL uniformly in $\boldsymbol{\theta}$: distinct rows can arise only from data-dependent loading $V(\mathbf{x})$ on the embedding register, not from the $j$-WL structure. \autoref{thm:jwl_full} does not apply at $j=1$, and the main-text bound at $j=1$ uses a separate $1$-WL argument from Morris et al.~\cite{morris2019weisfeiler}.

The scope of the result is bounded by three constraints. The closed-walk fingerprint~\eqref{eq:closed_walk_init} fails at $j\ge 5$, where collisions among non-isomorphic five-vertex graphs prevent injective initialisation; extending the theorem requires a richer subgraph invariant such as the full graph spectrum on $j$ vertices or a homomorphism-count vector. The proof assumes all-pair Givens connectivity in $\mathcal{A}$, so restricting to nearest-neighbour or other structured connectivities would change the reachable neighbourhoods and replace \autoref{lem:neighbourhood} with a weaker statement. Finally, the proof is for the global set-based $j$-WL of~\cite{morris2019weisfeiler}, not the local $\delta$-$k$-LWL variant of Morris--Rattan--Mutzel~\cite{morris2020weisfeiler}, which uses a graph-dependent neighbourhood at each step.

\section{CFI graph discrimination: experimental details}\label{si:cfi}

The $j$-Weisfeiler--Leman ascent of \autoref{thm:jwl_full} is tested on three CFI-class graph families and on the BREC benchmark. This note documents the QGNN-side instantiation at $j\ge 2$, the deterministic $k$-WL discrimination across the $218$ BREC pairs, and the classical $1$-WL, $2$-WL, and $3$-WL baselines at parameter counts matched to the QGNN.

At $j\ge 2$ the QGNN is realised as a native $j$-subset circuit rather than the vectorised $j{=}1$ implementation used for the TSP instantiation of \autoref{ssec:results_tsp}. The reason is that the all-pair Givens cascade across the node-register subspace, of dimension $\binom{N}{j}$, does not vectorise efficiently across batches at $j\ge 2$ in our shipped code, while a circuit-level reference implementation operates directly on the $j$-subset basis at the same asymptotic cost. The reference circuit applies the four operations of \autoref{ssec:results_framework} in the same order, hierarchical data loader, equivariant adjacency, compound trainable evolution, joint-register mixing, but with the node register interpreted as the lex-ordered particle-number-$j$ basis on $N$ qubits and each operation expressed as a sequence of two-qubit gates on that basis. The closed-walk initialisation that supplies the generic-parameter regime, defined in~\eqref{eq:closed_walk_init} of \suppref{si:expressivity}, is the entry point: the row $\Psi^{(0)}_S$ of each $j$-subset $S$ carries the closed-walk fingerprint of the induced subgraph $G[S]$, multiplied by a fixed embedding-register vector. After the $L$ trainable iterations and the joint mixing layer, per-subset embedding features are pooled by summation across the node register, producing a single permutation-invariant vector per graph; a single-layer linear classifier then maps this vector to a binary class probability. The vectorised PyTorch and the native $j$-subset implementation produce numerically equivalent forward and backward passes on shared test cases, so the difference is one of throughput rather than of model class (Methods~M2).

The dataset comprises two families. Cai--F\"urer--Immerman pairs are generated programmatically from the construction of~\cite{cai1992optimal}: CFI($K_3$) at $N{=}6$ vertices is the canonical hard pair separable at $j{=}3$ but not below, and CFI($K_4$) at $N{=}8$ is separable at $j{=}4$ but not at $j{\le}3$. The BREC benchmark of~\cite{wang2024brec} collects $218$ non-isomorphic graph pairs across four difficulty groups, listed in \autoref{tab:supp_brec_groups}: $59$ Basic pairs, $50$ Regular, $79$ Extension, and $30$ CFI, each chosen so that some level of the WL hierarchy fails to distinguish it. We use BREC to record what the deterministic $k$-WL test discriminates at each level before reporting the learned QGNN results.

\begin{table}[h]
\centering\small
\caption{\textbf{Deterministic $k$-WL discrimination across BREC difficulty groups.} Counts are pairs distinguished out of pairs tested on the BREC benchmark of~\cite{wang2024brec}; the $k{=}4$ row is on $212$ pairs because $6$ pairs exceed the $25$-node size cap at this level. Here $k$ is standard $k$-WL on $k$-tuples, equivalent to the set-based $(k{+}1)$-WL of~\cite{morris2019weisfeiler} and to node-register particle number $j{=}k{+}1$ in the model.}
\label{tab:supp_brec_groups}
\begin{tabular}{@{}c c c@{}}
\toprule
$k$-WL & Distinguished / total & Rate \\
\midrule
$1$ & $22 / 218$  & $10.1\%$ \\
$2$ & $182 / 218$ & $83.5\%$ \\
$3$ & $196 / 218$ & $89.9\%$ \\
$4$ & $185 / 212$ & $87.3\%$ \\
\bottomrule
\end{tabular}
\end{table}

The deterministic test runs $10$ rounds of the colour refinement defined in~\eqref{eq:jwl_update_si} of \suppref{si:expressivity}: at each round, every $k$-tuple is recoloured by hashing its current colour together with the multiset of colours over its one-swap neighbourhood, with the standard isomorphism-type initialisation. No model is trained, and no random seed enters the test; the only inputs are the two graphs of a pair. We observe three features. At $1$-WL only $10\%$ of pairs separate with the residue already distinguishable at this level. At $2$-WL, the rate jumps to $83.5\%$, the first ascent above $1$-WL visible across the full benchmark. At $3$-WL and $4$-WL, the rate saturates near $90\%$; the remaining undistinguished pairs lie outside the window resolved by the test at the $25$-node cap. On the CFI subgroup of $30$ pairs the same trend appears at coarser resolution: $1$-WL separates $3$ ($10.0\%$), $2$-WL separates $27$ ($90.0\%$), $3$-WL separates $26$ ($86.7\%$), $4$-WL separates $24$ ($80.0\%$). The deterministic $k$-WL test upper-bounds what the QGNN at particle number $j{=}k{+}1$ can discriminate; the main text shows this bound is attained on CFI($K_3$) and CFI($K_4$) at the predicted thresholds (\autoref{ssec:results_expressivity}, \autoref{fig:cfi_discrimination}).

The classical baselines reported in \autoref{fig:cfi_discrimination} of \autoref{ssec:results_expressivity} probe the same ascent against $1$-WL, $2$-WL, and $3$-WL message-passing networks at parameter counts matched to the QGNN. We use three encoders with disjoint expressivity classes. The Graph Isomorphism Network (GIN) of Xu et al.~\cite{Xu2019} is the canonical $1$-WL representative: each layer applies the aggregation $h_v^{(\ell+1)} = \mathrm{MLP}^{(\ell)}\!\bigl((1+\epsilon)h_v^{(\ell)} + \sum_{u\in\mathcal{N}(v)} h_u^{(\ell)}\bigr)$ with a learnable scalar $\epsilon$ and a $2$-layer MLP per layer. The Provably Powerful Graph Network (PPGN) of Maron et al.~\cite{maron2019provably} represents graphs as $N\times N$ tensors and updates them by two block compositions per layer of the form $\mathrm{MLP}_3(M_1 M_2)$, where $M_1, M_2$ are $\mathrm{MLP}_1$- and $\mathrm{MLP}_2$-transformed copies of the input tensor multiplied as matrices. PPGN is $3$-WL class, equivalent to the $2$-folklore-WL test. The invariant graph network of Maron et al.~\cite{maron2019invariant} at $k{=}2$ ($k$-IGN($k{=}2$)) operates on order-$2$ permutation tensors using the full basis of order-equivariant linear maps $\mathbb{R}^{N\times N}\!\to\mathbb{R}^{N\times N}$, of dimension $15$, in place of PPGN's matrix-product layer; it is $2$-WL class, sitting between GIN and PPGN.

A single training protocol applies to all three baselines and to the QGNN runs of \autoref{ssec:results_expressivity}. For each CFI family, we generate the paired non-isomorphic adjacency $(A_0, A_1)$ from~\cite{cai1992optimal} and draw $200$ random vertex permutations per class, yielding $400$ graphs per family. The split is $60$/$20$/$20$ train/validation/test ($240$/$80$/$80$) drawn at random per seed (Methods~M1, M2). Each encoder feeds into a sum-pool over node features and a $2$-layer MLP head to a single binary logit. Training uses Adam at learning rate $10^{-3}$, batch size $32$, binary cross-entropy on the graph label, and early stopping on validation accuracy with patience $30$ over a budget of $200$ epochs. Each configuration is repeated for five seeds $\{42, 43, 44, 45, 46\}$, and \autoref{fig:cfi_discrimination} reports the mean and standard deviation of test accuracy across these seeds.

The width and depth grids for the baselines are chosen to bracket the QGNN parameter counts of $3{,}283$ for CFI($K_3$) at $j{=}3$ and $3{,}283$/$3{,}411$ for CFI($K_4$) at $j{=}3$/$j{=}4$. For PPGN, hidden width $h\in\{12,16\}$ and depth $L\in\{1,2\}$ give parameter counts $2{,}001$, $3{,}073$, $4{,}961$. For $k$-IGN($k{=}2$), $h\in\{16,24\}$ and $L\in\{2,3\}$ give $3{,}137$, $3{,}937$, $7{,}761$. The smallest PPGN budget at $h{=}12$, $L{=}1$ ($2{,}001$ parameters) sits below the QGNN counts and exhibits a single-seed collapse on both CFI families: seed $43$ drops to $0.44$ test accuracy while the other four seeds reach $1.00$, giving a mean of $0.89\pm 0.22$. At $h{=}16$, $L{=}1$ ($3{,}073$ parameters) and above, all five seeds reach $1.00$. The numbers reported in the main-text \autoref{fig:cfi_discrimination} are the smallest configuration at which all five seeds reach $1.00$: PPGN at $3{,}073$ and $k$-IGN($k{=}2$) at $3{,}937$ on both CFI($K_3$) and CFI($K_4$). We do not run a depth or width sweep for GIN beyond confirming chance-level test accuracy on both CFI families, which follows from the $1$-WL identity $\mathrm{GIN}\equiv 1\text{-WL}$ of~\cite{Xu2019} and the construction of~\cite{cai1992optimal,morris2019weisfeiler}.

The data and code for the runs reported here are released alongside the paper, with the full per-seed test-accuracy table in the supplementary archive.

\section{QM9 HOMO--LUMO gap regression: experimental details}\label{si:qm9}

The $j$-WL ascent on QM9 reported in \autoref{ssec:results_qm9} (\autoref{tab:qm9_jsweep}) is run on the full QM9 dataset of $130{,}831$ molecules~\cite{ramakrishnan2014quantum}, loaded from the standard PyTorch Geometric distribution. The target is the HOMO--LUMO gap (target index $4$), in eV after the customary Hartree-to-eV conversion.

\paragraph{Architecture and training.} All three runs use the same $(D,k){=}(6,3)$ embedding, $L{=}3$ trainable iterations, sum-pooling readout across the node register, and a small MLP head producing the scalar prediction. The encoder uses isomorphism-type atom features (full-graph degree at $j{=}1$, edge-gating at $j \geq 2$). Only the node-register particle number $j$ varies across runs. Each model is trained for $300$ epochs on a stratified $80/10/10$ split with the Adam+SGD optimiser split of Methods~M2. We report the best validation MAE at $j{=}1$, for which no test checkpoint was retained, and the test MAE at $j{=}2$ and $j{=}3$.

\paragraph{Results, parameter counts, and the classical baseline.}
\begin{table}[h]
\centering\small
\caption{\textbf{QM9 HOMO--LUMO gap: mean absolute error and trainable parameter counts on the full dataset.} Single seed (42), $300$ epochs. The $j{=}1$ entry is the best validation MAE, as no test checkpoint was retained for that run; $j{=}2$ and $j{=}3$ are test MAE.}
\label{tab:qm9_jsweep}
\begin{tabular}{@{}l r c@{}}
\toprule
Model & Params & MAE (eV) \\
\midrule
QGNN, $j{=}1$ & $2{,}591$ & $0.398$ \\
QGNN, $j{=}2$ & $2{,}623$ & $0.308$ \\
QGNN, $j{=}3$ & $2{,}655$ & $0.235$ \\
\midrule
Morris $1$-GNN~\cite{morris2019weisfeiler} & $849{,}569$ & $0.121$ \\
\bottomrule
\end{tabular}
\end{table}

The error falls monotonically with $j$. The classical $1$-GNN of Morris et al.~\cite{morris2019weisfeiler}, run on the same full-QM9 gap task, reaches a lower error of $0.121$~eV at $849{,}569$ parameters; the QGNN at $j{=}3$ sits at roughly $1.9\times$ that error with about $320\times$ fewer parameters. The parameter count grows with $j$ through the joint-register mixer, while the embedding capacity $(D,k){=}(6,3)$ is held fixed across all three runs, so the error reduction reflects the added node-register resolution and not added feature capacity.

\paragraph{Robustness across seeds.} A complementary four-seed run on the full dataset (seeds $43$--$46$, $N_{\max}{=}20$, $30$ epochs) reproduces the same monotone descent across $j$, with $j{=}3$ test MAE $0.285 \pm 0.001$~eV, confirming the trend is stable across seeds.
\section{TSP benchmark full details}\label{si:tsp_benchmark}

This note documents the full protocol behind the multi-seed Euclidean TSP edge-prediction numbers cited in \autoref{ssec:results_tsp} and \autoref{fig:tsp_demonstration}. The numbers reported in the main text (test BCE and tour ratio at $N \in \{5, 10, 20, 30, 50\}$) are produced by the QGNN at $(D, k) = (6, 3)$ on the equivariant quantum circuit benchmark of Skolik et al.~\cite{Skolik2023}.

\paragraph{Dataset.}
The benchmark instances follow the random-uniform TSP construction of Vinyals et al.~\cite{Vinyals2015}, adopted in the EQC benchmark of Skolik et al.~\cite{Skolik2023}: $N$ city coordinates sampled uniformly in the unit square, with optimal Hamiltonian cycles computed by the Lin--Kernighan--Helsgaun heuristic (we use the \texttt{elkai} Python binding). Train/validation/test splits sit in disjoint pickle files. Sample counts are $50{,}000 / 2{,}000 / 10{,}000$ at $N = 5$, $100{,}000 / 2{,}000 / 10{,}000$ at $N = 10$, $200{,}000 / 5{,}000 / 10{,}000$ at $N = 20$, $100{,}000 / 5{,}000 / 10{,}000$ at $N = 30$, and $10{,}000 / 2{,}000 / 10{,}000$ at $N = 50$. The same splits are reused across seeds, so the seed-to-seed variation reflects optimisation stochasticity rather than data resampling.

\paragraph{Per-size hyperparameters.}
The embedding setting is $(D, k) = (6, 3)$ at every $N$, giving $\binom{6}{3} = 20$-dimensional particle-number subspaces and the same fixed embedding throughout. Only the depth of the trainable evolution and the width of the readout head scale with $N$. \autoref{tab:si_tsp_hparams} lists the per-size values; they correspond directly to the \texttt{QGCN\_CONFIGS} dictionary in the public training script, which is the exact configuration used to produce the numbers below.

\begin{table}[h]
\centering
\caption{\textbf{QGNN hyperparameters per TSP size.} The embedding $(D, k) = (6, 3)$ is fixed across all sizes; depth $L$, pyramids per layer $P$, hidden dimension, training length, and learning rate are tuned per $N$. The trainable evolution $W$ uses nearest-neighbour pyramid connectivity throughout.}
\label{tab:si_tsp_hparams}
\footnotesize
\begin{tabular}{@{}lccccc@{}}
\toprule
 & TSP-5 & TSP-10 & TSP-20 & TSP-30 & TSP-50 \\
\midrule
$D$ & 6 & 6 & 6 & 6 & 6 \\
$k$ & 3 & 3 & 3 & 3 & 3 \\
Layers $L$ & 2 & 1 & 1 & 1 & 2 \\
Pyramids per layer $P$ & 2 & 1 & 4 & 4 & 4 \\
Readout hidden dimension & 64 & 128 & 128 & 128 & 256 \\
Epochs (max) & 300 & 500 & 800 & 1000 & 200 \\
Learning rate & $5 \times 10^{-3}$ & $9.5 \times 10^{-4}$ & $2.3 \times 10^{-3}$ & $2.3 \times 10^{-3}$ & $4.2 \times 10^{-3}$ \\
Weight decay & $0$ & $3.8 \times 10^{-5}$ & $4.6 \times 10^{-5}$ & $4.6 \times 10^{-5}$ & $8.6 \times 10^{-5}$ \\
Batch size & 32 & 64 & 64 & 256 & 128 \\
Beam width & 100 & 100 & 100 & 100 & 100 \\
Trainable parameters & 5{,}813 & 11{,}149 & 11{,}399 & 11{,}704 & 23{,}486 \\
\bottomrule
\end{tabular}
\end{table}

\paragraph{Optimisation.}
We use the dual-optimiser scheme of Methods~M2: Adam at the per-size learning rate of \autoref{tab:si_tsp_hparams} on the classical parameters (encoder, readout MLP, normalisation), and stochastic gradient descent with momentum $0.9$ on the quantum Givens angles. Cosine annealing brings the classical learning rate to $10^{-5}$ over the maximum epoch budget. Gradients are clipped to unit norm before each step. Early stopping triggers after $30$ epochs without improvement on the validation BCE; the lowest-validation-BCE checkpoint is restored before test-set evaluation.

\paragraph{Multi-seed protocol.}
At $N = 5$ and $N = 10$ we run ten independent seeds (random states $42, 43, \ldots, 51$) on the same train/val/test split, each going through the full training and early-stopping pipeline. At $N = 20$, $N = 30$, and $N = 50$ we run five, four, and three seeds respectively. Reported test metrics are means and standard deviations over seeds.

\paragraph{Decoding.}
The trained model outputs an edge-probability matrix $p_{ij} \in [0, 1]$ over the complete graph on $N$ nodes. We decode this into a Hamiltonian cycle by beam search of width $100$, retaining the top-$100$ partial tours at every extension step ranked by accumulated log-probability. The tour ratio is the mean over the test set of the decoded tour length divided by the optimal tour length. As an ablation, we also report greedy nearest-neighbour decoding, which corresponds to beam width one. Beam-width sweeps in the range $\{1, 5, 10, 25, 50, 100, 200, 500\}$ confirm that the tour ratio saturates by width $50$ at $N = 5$ and by width $100$ at $N = 10, 20$.

\paragraph{Results.}
\autoref{tab:si_tsp_results} reports the multi-seed test metrics. The TSP-5 and TSP-10 numbers are ten-seed runs; the TSP-20, TSP-30, and TSP-50 numbers are five-, four-, and three-seed cluster runs. Indeed, the test BCE at $N = 20$ has the same magnitude as at $N = 10$ even though the underlying graph is twice as large, consistent with the fixed embedding $(D, k) = (6, 3)$ controlling the readout cost across sizes.

\begin{table}[h]
\centering
\caption{\textbf{QGNN test metrics on EQC TSP at $(D, k) = (6, 3)$.} Means and standard deviations are over independent seeds on a held-out test split disjoint from training and validation.}
\label{tab:si_tsp_results}
\footnotesize
\begin{tabular}{@{}lcccc@{}}
\toprule
$N$ & Trainable params & Test BCE & Tour ratio & Seeds \\
\midrule
5  & 5{,}813  & $0.224 \pm 0.022$  & $1.008 \pm 0.002$  & 10 \\
10 & 11{,}149 & $0.364 \pm 0.009$  & $1.034 \pm 0.003$  & 10 \\
20 & 11{,}399 & $0.354 \pm 0.003$ & $1.078 \pm 0.005$ & 5 \\
30 & 11{,}704 & $0.324 \pm 0.003$ & $1.128 \pm 0.002$ & 4 \\
50 & 23{,}486 & $0.340 \pm 0.004$ & $1.201 \pm 0.003$ & 3 \\
\bottomrule
\end{tabular}
\end{table}

The $N{=}30$ and $N{=}50$ rows are multi-seed cluster runs (four and three seeds). The tabulated values are the test-set means and standard deviations at the best validation checkpoint.

\paragraph{Classical simulation cost.}
All results above are produced by exact state-vector simulation confined to the particle-number subspace, of dimension $\binom{N+D}{j+k}$. At the embedding half-filling $k=D/2$ used throughout (the largest subspace for a given $D$), this is $\binom{N+D}{4}$ at $j{=}1$, growing from $330$ at $N{=}5$ to $367{,}290$ at $N{=}50$ (\autoref{tab:si_simcost}). One forward-and-backward pass scales as roughly $D_{\rm sub}^{1.3}$, so the cost, while polynomial in $N$ at fixed $(D,k)$, is a steep polynomial: full training runs reached hundreds of GPU-hours. Reaching $N{=}50$ ($56$ qubits) is set by training walltime; extrapolated to $N{\approx}100$ a single pass takes tens of minutes and a full run takes months, beyond what we can simulate classically.

\begin{table}[h]
\centering
\caption{\textbf{Classical simulation cost of the TSP runs.} All runs use $(D,k){=}(6,3)$ and $j{=}1$. The simulated state lives in the particle-number subspace of dimension $\binom{N+D}{4}$. The forward-and-backward time is measured on one CPU core; full runs are on one A100. The full-run hours depend on the train set size as well as $N$, so they do not grow with $N$ in order: $N{=}20$ takes more hours than $N{=}30$ because it trains on more instances. The $N{=}50$ figure is the $1200$-epoch budget run (seed $42$); the reported three-seed metrics of \autoref{tab:si_tsp_results} use shorter $200$-epoch runs.}
\label{tab:si_simcost}
\footnotesize
\begin{tabular}{@{}lccccc@{}}
\toprule
$N$ & Qubits $N+D$ & Subspace dim $\binom{N+D}{4}$ & Fwd+bwd (s) & Train set & Full run (GPU-h) \\
\midrule
5  & 11 & 330      & 0.07 & 5{,}000   & $<1$ \\
10 & 16 & 1{,}820  & 0.12 & 10{,}000  & $<1$ \\
20 & 26 & 14{,}950 & 1.0  & 200{,}000 & ${\sim}316$ \\
30 & 36 & 58{,}905 & 7.1  & 100{,}000 & ${\sim}249$ \\
50 & 56 & 367{,}290 & 114 & 10{,}000  & ${\sim}356$ \\
\bottomrule
\end{tabular}
\end{table}

Raising the embedding makes the simulation far more expensive. \autoref{tab:si_simcost_dk} compares the three half-filling embeddings $(6,3)$, $(8,4)$, and $(10,5)$ at $N{=}20$: a single forward-and-backward pass grows from $1$ second to over $100$. At larger $N$ these embeddings could not be pushed as far as $(6,3)$: $(8,4)$ completed only to $N{=}40$ and $(10,5)$ to $N{=}25$ before reaching the walltime and memory limits.

\begin{table}[h]
\centering
\caption{\textbf{Simulation cost at $N{=}20$ across embedding sizes.} All at half-filling $k{=}D/2$, $j{=}1$, on the same TSP pipeline. Forward-and-backward time is one CPU core.}
\label{tab:si_simcost_dk}
\footnotesize
\begin{tabular}{@{}lccc@{}}
\toprule
$(D,k)$ & Qubits $N+D$ & Subspace dim $\binom{N+D}{j+k}$ & Fwd+bwd (s) \\
\midrule
$(6,3)$  & 26 & 14{,}950  & 1.0 \\
$(8,4)$  & 28 & 98{,}280  & 8.4 \\
$(10,5)$ & 30 & 593{,}775 & 116 \\
\bottomrule
\end{tabular}
\end{table}
\section{Trainability derivations and DLA dimensions}\label{si:trainability}

The two-register QGNN trains by gradient descent on a balanced binary cross-entropy cost. For optimisation to remain feasible as the graph size $N$ and the embedding-register size $D$ grow, the variance of every single-parameter gradient must stay above a polynomial floor. The barren-plateau phenomenon~\cite{mcclean2018barren, Cerezo2021}, in which this variance decays as $4^{-(N+D)}$ for unstructured deep circuits, would close off training at moderate qubit count. We give a closed-form lower bound on the gradient variance for the full pipeline of \autoref{ssec:results_framework}, valid for any Hermitian readout used in the architecture, with the polynomial-not-exponential scaling made explicit. The bound takes the form of a single proposition (\autoref{prop:trainability_si}); we derive it by combining a single-register Lie-algebraic argument with a joint-register mixing argument.

We work with $N$ node-register qubits at fixed particle number $j$ and $D$ embedding-register qubits at fixed particle number $k$. The relevant subspace dimensions are
\begin{equation}\label{eq:dims}
  d_N \;=\; \binom{N}{j}, \qquad d_E \;=\; \binom{D}{k}, \qquad d_J \;=\; \binom{N+D}{j+k}.
\end{equation}
The factored sub-product $\mathcal{H}_{\mathrm{node}}^{(j)}\otimes\mathcal{H}_{\mathrm{emb}}^{(k)}$ has dimension $d_N d_E$; the joint-register mixing layer $M(\boldsymbol{\theta}_M)$ preserves only the total weight $j+k$ and acts inside the larger joint particle-number subspace of dimension $d_J$. We write $\mathfrak{so}(d)=\{X\in\mathbb{R}^{d\times d}:X^\top=-X\}$ for the real-orthogonal Lie algebra; it embeds in the Hermitian operators on $\mathbb{C}^d$ as $i\,\mathfrak{so}(d)$. The basis of choice on the embedding register is the lex-ordered particle-number-$k$ family $\{\ket{S}: \abs{S} = k,\, S \subseteq [D] \}$, with each $\ket{S}$ corresponding to a computational-basis state with the bits indexed by $S$ set to one. The trainable gates are particle-number-preserving Givens rotations on this basis. We write $\langle A,B\rangle_{\mathrm{HS}}=\Tr(A^\dagger B)$ for the Hilbert--Schmidt inner product on operators, $\|A\|_F^2=\Tr(A^\top A)$ for the Frobenius norm of a real matrix, and $\sigma(z)=(1+e^{-z})^{-1}$ for the logistic sigmoid. The Pauli operators on a single qubit are denoted $X, \, Y, \, Z$; expressions such as $Z_{a} \, Z_{b}$ and $X_{i} \, X_{j} + Y_{i} \, Y_{j}$ refer to the corresponding two-qubit Pauli observables on the indexed qubits.

The Lie-algebraic theorem of Ragone et al.~\cite{ragone2024lie} bounds the loss variance of a parametrised cost in terms of the dynamical Lie algebra (DLA) $\mathfrak{g}$ of the circuit and the Hilbert--Schmidt projections of $\rho$, of the parameter generator, and of the observable $O$ onto its simple ideals. The hypothesis is the Lie-Algebra-Supported Ansatz (LASA) of Ragone et al.\ (Definition~2.4): both $i\rho \in \mathfrak{g}$ and $iO \in \mathfrak{g}$. For our DLA $\mathfrak{g}=\mathfrak{so}(d_E)$, the Hermitian image $i\,\mathfrak{so}(d_E)$ consists of pure-imaginary off-diagonal antisymmetric matrices in the particle-number basis. The readouts in our architecture are real-symmetric in this basis: the edge-prediction observable $Z_aZ_b$ and the $1$-RDM diagonal $Z_i$ are diagonal (a special case of real-symmetric), and the $1$-RDM off-diagonal $X_iX_j+Y_iY_j$ is real-symmetric off-diagonal. For any real-symmetric $O$ and any antisymmetric $Y$, the Hilbert--Schmidt overlap satisfies
\begin{equation}\label{eq:hs-zero}
  \langle O, iY\rangle_{\mathrm{HS}} \;=\; i\,\Tr(OY) \;=\; i\,\Tr\bigl((OY)^\top\bigr) \;=\; i\,\Tr(Y^\top O^\top) \;=\; -i\,\Tr(YO) \;=\; -i\,\Tr(OY),
\end{equation}
using $\Tr A=\Tr A^\top$, $(AB)^\top=B^\top A^\top$, $O^\top=O$, $Y^\top=-Y$, and cyclicity. The relation $\Tr(OY)=-\Tr(OY)$ forces $\Tr(OY)=0$, hence $iO\notin\mathfrak{g}$, and the LASA bound returns a vacuous zero.

The particle-number-preserving theorem of Monbroussou et al.~\cite{monbroussou2025trainability} (Theorem~3) gives an exact closed form for the gradient variance of a single-register Givens circuit on $n$ qubits in the particle-number-$k$ subspace of dimension $d_k=\binom{n}{k}$,
\begin{equation}\label{eq:monbroussou}
  \mathbb{E}_{\zeta^0, y}\,\Var_\theta\!\bigl[\partial_\lambda C(\theta)\bigr] \;=\; \frac{8\,k\,(n-k)}{n\,(n-1)\,d_k},
\end{equation}
with $\zeta^0$ the input-state amplitudes, $y$ the target amplitudes, and $C(\theta)=\|\zeta-y\|_2^2$ the squared-Euclidean amplitude cost. The hypothesis (Eq.~22 of \cite{monbroussou2025trainability}) requires inputs and targets each with zero mean and isotropic squared moment $1/d_k$. Our cost is binary cross-entropy on a deterministic structured input.

We sidestep both gaps by integrating directly over the Haar measure on $\mathrm{SO}(d)$ in its defining representation. Let $E_{ab}$ denote the matrix with entry $1$ at row $a$, column $b$ and zero elsewhere, and define a Givens generator of $\mathfrak{so}(d)$,
\begin{equation}\label{eq:gen}
  H_p \;=\; E_{ab} - E_{ba}, \qquad \|H_p\|_F^2 \;=\; 2, \qquad H_p^2 \;=\; -(E_{aa} + E_{bb}).
\end{equation}
Fix a unit input vector $\ket{\psi}\in\mathbb{R}^d$ and a real-symmetric observable $O\in\mathbb{R}^{d\times d}$, and let $U_L,U_R$ be independent Haar-random rotations in $\mathrm{SO}(d)$. Setting
\begin{equation}\label{eq:phim}
  \ket{\phi} \;=\; U_L\ket{\psi}, \qquad M \;=\; U_R^\top\, O\, U_R,
\end{equation}
$\ket{\phi}$ is uniform on the unit sphere $S^{d-1}\subset\mathbb{R}^d$ by left-invariance of Haar measure. The parameter-shift gradient at $\theta=0$ is then
\begin{equation}\label{eq:gp}
  g_p \;=\; \bra{\phi}\,[M, H_p]\,\ket{\phi}.
\end{equation}
The commutator $[M,H_p]$ is real symmetric: with $M^\top=M$ and $H_p^\top=-H_p$,
\begin{equation}\label{eq:commsym}
  [M,H_p]^\top \;=\; H_p^\top M - M H_p^\top \;=\; -H_p M + M H_p \;=\; [M,H_p],
\end{equation}
and $\Tr[M,H_p]=0$ by cyclicity. Hence $\mathbb{E}[g_p]=0$ and
\begin{equation}\label{eq:sod-haar}
  \Var[g_p]
  \;=\;
  \frac{8}{d(d-1)(d+2)}\,\Bigl\|O-\frac{\Tr(O)}{d}I\Bigr\|_F^2.
\end{equation}

The proof has three steps. \emph{Spherical fourth moment.} For $\phi$ uniform on $S^{d-1}$ and any traceless symmetric $A$,
\begin{equation}\label{eq:sphere4}
  \E_\phi\!\bigl[(\phi^\top A\phi)^2\bigr] \;=\; \frac{2\,\Tr(A^2)}{d(d+2)}.
\end{equation}
\emph{Commutator Frobenius norm.} Since $[M,H_p]$ is symmetric, direct expansion and cyclicity give
\begin{equation}\label{eq:commfrob}
  \|[M,H_p]\|_F^2 \;=\; 2\,\Tr(M H_p M H_p) \;-\; 2\,\Tr(M^2 H_p^2).
\end{equation}
\emph{Orthogonal Weingarten averages.} The Schur decomposition $\mathrm{Sym}(d)=\mathrm{Sym}_0(d)\oplus\mathbb{R}\cdot I$ of real symmetric matrices into traceless plus trace parts, applied to $H_p^2$ (symmetric, $\Tr H_p^2=-2$), gives
\begin{align}
  \E_{U_R}\,\Tr(M^2\, H_p^2) &\;=\; -\frac{2\,\Tr(O^2)}{d}, \label{eq:wein1} \\
  \E_{U_R}\,\Tr(M\, H_p\, M\, H_p) &\;=\; \frac{2}{d-1}\Bigl(\Tr(O^2) - \tfrac{1}{d}\Tr(O)^2\Bigr), \label{eq:wein2}
\end{align}
and hence
\begin{equation}\label{eq:wein-comb}
  \E_{U_R}\,\|[M,H_p]\|_F^2 \;=\; \frac{4}{d-1}\,\Bigl\|O - \tfrac{\Tr O}{d}I\Bigr\|_F^2.
\end{equation}
Combining \eqref{eq:sphere4} and \eqref{eq:wein-comb} yields Eq.~\eqref{eq:sod-haar}. The dimension factor $d(d-1)(d+2)=O(d^3)$ comes from the algebra alone, and the Frobenius factor is bounded above by $\|O\|_F^2$ and below by a positive constant for any non-constant $O$. We verified the closed form at $D=6,k=3,d=20$ for $O=Z_aZ_b$: the formula predicts $\Var=0.01837$ and a Monte-Carlo over $10\,000$ Haar SO$(20)$ samples gives $0.01807$, agreement to $1.6\%$. We further verified at $d\in\{5,10,20,50\}$ with random real-symmetric $O$, agreement within $2\%$ in each case.

The trainable evolution layer $W$ acts on the embedding register through a sequence of Givens rotations. Two connectivities appear in our experiments, and they have different DLAs. In the \emph{nearest-neighbour variant}, Givens act between adjacent qubits in successive layers of $D-1$ gates each. The DLA is $\mathfrak{so}(D)$, of dimension $D(D-1)/2$, acting on the particle-number-$k$ subspace through the $k$-th compound (exterior power) representation $C^{(k)}(\mathrm{SO}(D))$. Eq.~\eqref{eq:sod-haar} at $d=D$ gives a per-parameter floor of order $1/D^4$. In the \emph{Ehrlich (all-pair) variant}, controlled-Givens rotations act between every pair of particle-number-$k$ basis states that differ by one excitation, in the order prescribed by the Ehrlich Gray code over $k$-subsets~\cite{Farias2025}. The DLA is the full $\mathfrak{so}(d_E)$, of dimension $d_E(d_E-1)/2$, on the same subspace, and Eq.~\eqref{eq:sod-haar} at $d=d_E$ gives a per-parameter floor of order $1/d_E^4$. Connectivity choice trades off two competing effects: the Ehrlich variant covers a larger subgroup and is the more expressive subcircuit, but its DLA is large, and the per-parameter floor is correspondingly lower. The nearest-neighbour variant sits inside this larger group as a structured subgroup; its DLA is smaller, and the floor is higher, at the cost of a smaller search space. Monbroussou et al.~\cite{monbroussou2025trainability} (Section~5) identify this as the regime where the dimension-only heuristic of \cite{Larocca2023} fails: dimension alone does not determine variance scaling for particle-number-preserving architectures, and orbit structure on the particle-number subspace also matters.

The numerical values predicted by Eq.~\eqref{eq:sod-haar} for the nearest-neighbour variant at the readout $O=Z_aZ_b$, restricted to half-filling, exhibit the polynomial scaling and a margin over the unstructured-circuit baseline that grows with $D$:
\begin{center}\small
\begin{tabular}{ccrrrr}
\toprule
$(D,k)$ & $d_E$ & $\|O-\tfrac{\Tr(O)}{d_E}I\|_F^2$ & $\mathrm{Var}\,[\partial_p\mathcal{L}]$ & $4^{-D}$ & ratio \\
\midrule
$(6,3)$ & $20$ & $19.20$ & $1.84\times 10^{-2}$ & $2.44\times 10^{-4}$ & $75$ \\
$(8,4)$ & $70$ & $68.57$ & $1.58\times 10^{-3}$ & $1.53\times 10^{-5}$ & $103$ \\
$(10,5)$ & $252$ & $248.89$ & $1.24\times 10^{-4}$ & $9.54\times 10^{-7}$ & $130$ \\
\bottomrule
\end{tabular}
\end{center}
The same scaling holds for the $1$-RDM diagonal $Z_i$ and the off-diagonal $X_iX_j+Y_iY_j$, with traceless Frobenius norms differing only by an $O(1)$ factor at fixed $(D,k)$. Polynomial-cost Hartree--Fock and matchgate-shadow readouts inherit this scaling because their estimators are unbiased linear combinations of single-particle observables, and a fixed Lipschitz classical head composed onto the resulting feature vector preserves polynomial gradient variance up to a constant prefactor.

The mixing layer $M$ acts on the joint particle-number subspace of total weight $j+k$, of dimension $d_J=\binom{N+D}{j+k}$, polynomial in $N$ at fixed $(D,k,j)$. Under all-pair connectivity within the joint subspace, the generator $H_{ab}=E_{ab}-E_{ba}$ for each pair $1\le a<b\le N+D$ acts as a real antisymmetric operator on the basis of $(j+k)$-subsets of $[N+D]$, mediating single-mode-swap transpositions $S\leftrightarrow(S\setminus\{a\})\cup\{b\}$ when exactly one of $a,b$ lies in $S$; its Lie closure is contained in $\mathfrak{so}(d_J)$. The connectivity graph induced on these subsets by single transpositions is connected. On the single-particle subspace ($j+k=1$, dimension $N+D$), the commutator of two adjacent generators reduces exactly to
\begin{equation}\label{eq:hcomm}
  [H_{ab}, H_{bc}] \;=\; -H_{ac}.
\end{equation}
On the higher-weight subspaces ($j+k\ge 2$), which is the relevant regime for our architecture, the same identity holds up to a basis-state-dependent sign from Jordan--Wigner ordering of the bits in $S$, sufficient to generate the full algebra under closure. The standard density argument for connected single-particle transpositions on a finite-dimensional real-orthogonal irrep then gives equality to the full $\mathfrak{so}(d_J)$~\cite{dalessandro2007introduction}. We verified this directly by closing the algebra under commutation: starting from the generator set, computing all pairwise commutators, retaining the ones not already in the span, and repeating until the dimension stopped growing. The four cases tested are $(N,D,j,k)\in\{(3,3,1,1),(4,3,1,1),(2,4,1,2),(3,3,1,2)\}$, with computed DLA dimensions $\{105,210,190,190\}$, all matching $d_J(d_J-1)/2$ exactly. The same proof construction applies at the configuration $(5,4,1,2)$ used elsewhere in the paper, where $d_J=84$ and the target dimension is $3486$.

Eq.~\eqref{eq:sod-haar} at $d=d_J$ gives a per-parameter floor of order $1/d_J^3$. At fixed $(D,k,j)$ this is $\Omega(1/N^{3(j+k)})$, polynomial in $N$. For the Euclidean TSP regime $j=1,(D,k)=(6,3)$, the polynomial degree is $12$. Tabulated alongside the unstructured baseline $4^{-(N+D)}$, the joint-register bound shows the polynomial-versus-exponential separation that is the trainability story of this architecture:
\begin{center}\small
\begin{tabular}{rrrrr}
\toprule
$N$ & $d_J$ & $4^{-(N+D)}$ & $8/[d_J(d_J-1)(d_J+2)]$ & ratio \\
\midrule
$5$ & $330$ & $2.4\times 10^{-7}$ & $2.2\times 10^{-7}$ & $0.93$ \\
$10$ & $1820$ & $2.3\times 10^{-10}$ & $1.3\times 10^{-9}$ & $5.7$ \\
$20$ & $14950$ & $2.2\times 10^{-16}$ & $2.4\times 10^{-12}$ & $1.1\times 10^{4}$ \\
$30$ & $58905$ & $2.1\times 10^{-22}$ & $3.9\times 10^{-14}$ & $1.9\times 10^{8}$ \\
\bottomrule
\end{tabular}
\end{center}
The polynomial floor sits below the exponential baseline at $N=5$ because the latter is not yet small at $N+D=11$; the separation grows quickly thereafter, reaching four orders of magnitude at $N=20$ and eight orders at $N=30$. The same analysis at $(D,k)=(8,4),j=1,N=20$ gives $d_J=98{,}280$, polynomial floor $8.4\times 10^{-15}$, baseline $1.4\times 10^{-17}$, separation $610$, confirming that the polynomial-versus-exponential margin grows with both $N$ and $D$.

Eq.~\eqref{eq:sod-haar} bounds the gradient of an expectation $\langle O\rangle$. The BCE cost on edge probabilities is
\begin{equation}\label{eq:bce}
  \mathcal{L}_{\mathrm{BCE}}(z) \;=\; -y\log\sigma(z) \;-\; (1-y)\log\bigl(1 - \sigma(z)\bigr), \qquad z = \langle Z_a Z_b\rangle \in [-1, 1] \, ,
\end{equation}
and direct differentiation gives the chain rule
\begin{equation}\label{eq:bce-grad}
  \partial_z \mathcal{L}_{\mathrm{BCE}} \;=\; \sigma(z) - y, \qquad \partial_p \mathcal{L}_{\mathrm{BCE}} \;=\; \bigl(\sigma(z) - y\bigr)\,\partial_p z.
\end{equation}
The factor $\sigma(z)-y$ lies in $[-1,1]$, and its expected square at random initialisation is bounded below by a positive constant for $y\in\{0,1\}$ and $\sigma(z)$ not concentrated at $y$. The expected variance of the BCE gradient is therefore bounded above by $\Var[\partial_p z]$ and, in expectation over labels and trajectories, below by a positive constant times the same. The argument extends without modification to any fixed Lipschitz classical head composed onto the $1$-RDM features.

\begin{proposition}[Polynomial trainability of the two-register QGNN]\label{prop:trainability_si}
Consider the two-register QGNN of \autoref{ssec:results_framework}, with embedding-register evolution $W$ in either the nearest-neighbour or the Ehrlich (all-pair) connectivity, and joint-register mixing $M$ with all-pair connectivity. Suppose the cost is binary cross-entropy on edge probabilities or any fixed Lipschitz classical head on the $1$-RDM features, and the trainable circuit forms an approximate $2$-design on each of $\mathrm{SO}(D)$ (or $\mathrm{SO}(d_E)$ in the Ehrlich variant) and $\mathrm{SO}(d_J)$ for the joint-register parameters. Then, for any single gate parameter $\theta_p$, the gradient variance at random initialisation, in expectation over labels, satisfies
\[
  \mathrm{Var}_\theta\bigl[\partial_{\theta_p}\mathcal{L}\bigr]
  \;=\;
  \Omega\!\left(\frac{1}{\mathrm{poly}(N,D)}\right),
\]
polynomial of degree at most $\max\{4,\,3(j+k)\}$ at fixed $(j,k)$.
\end{proposition}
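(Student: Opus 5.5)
The plan is to assemble \autoref{prop:trainability_si} from the ingredients already set out in this Supplementary Note, treating each trainable parameter according to the block in which it sits. Every parameter $\theta_p$ belongs to one of two groups. Either it lies in the embedding-register evolution $W$, whose DLA is $\mathfrak{so}(D)$ in the nearest-neighbour variant and $\mathfrak{so}(d_E)$ in the Ehrlich variant. Or it lies in the joint mixer $M$, whose DLA we have verified to be the full $\mathfrak{so}(d_J)$. For each group I will write the gradient at the parameter as the parameter-shift expression \eqref{eq:gp}: $g_p=\bra{\phi}[M,H_p]\ket{\phi}$, with $\ket{\phi}$ the state propagated through the gates preceding $\theta_p$ and $M$ the observable back-propagated through the gates following it. The approximate $2$-design hypothesis on the relevant group lets me replace $U_L$ and $U_R$ by independent Haar rotations up to a multiplicative $1+o(1)$ error. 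The closed form \eqref{eq:sod-haar} then gives
\[
\Var[g_p]=\frac{8}{d(d-1)(d+2)}\Bigl\|O-\tfrac{\Tr O}{d}I\Bigr\|_F^2
\]
with $d\in\{D,d_E,d_J\}$.

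The second step is to lower-bound the Frobenius factor for each readout we use. For $Z_aZ_b$, for the diagonal $Z_i$, and for the off-diagonal $X_iX_j+Y_iY_j$ restricted to the relevant particle-number subspace, the traceless part has squared norm of order $d$. This is visible in the table values $19.2,\,68.6,\,248.9$, which are close to $d_E$, and it follows from a direct count of basis states with each eigenvalue. If I only need a weaker statement, a positive constant suffices, since each such $O$ is non-constant on the subspace. Inserting either bound gives $\Var[g_p]=\Omega(d^{-3})$, and in fact $\Omega(d^{-2})$ with the order-$d$ norm. For $W$ I translate this using the scaling already quoted in the text: for the nearest-neighbour variant the action factors through $\mathrm{SO}(D)$ on the compound representation, so the per-parameter floor is of order $D^{-4}$, which is the source of the degree $4$ in $\max\{4,3(j+k)\}$. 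For $M$ I bound $d_J=\binom{N+D}{j+k}\le (N+D)^{j+k}$, so $d_J^{-3}=\Omega\bigl((N+D)^{-3(j+k)}\bigr)$, polynomial of degree $3(j+k)$. Taking the worse of the two groups gives the stated degree.

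The third step passes from $\langle O\rangle$ to the loss by the chain rule \eqref{eq:bce-grad}. I write $\partial_p\mathcal{L}=(\sigma(z)-y)\,\partial_p z$, and more generally $\partial_p\mathcal{L}=\nabla h\cdot\partial_p\gamma$ for a Lipschitz head $h$ applied to the $1$-RDM features $\gamma$. In expectation over labels, $\E_y[(\sigma(z)-y)^2]\ge \min_{z\in[-1,1]}\bigl[\tfrac12\sigma(z)^2+\tfrac12(1-\sigma(z))^2\bigr]\ge \tfrac14$ for balanced labels. This factor depends on $\theta$ only through $z$, so I decouple it by conditioning, or by bounding its square from below uniformly in $z$. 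That yields $\Var[\partial_p\mathcal{L}]\ge c\,\Var[\partial_p z]$ for a constant $c>0$. For a general head I would require a nondegeneracy assumption, namely that $\|\nabla h\|$ is bounded below with positive probability, which I would state explicitly.

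I expect the main obstacle to be justifying the passage from the actual circuit to independent Haar averages. The data-loader reuploading and the product structure across the two registers mean that $U_L$ and $U_R$ are neither independent nor supported on a single group. Moreover, $\ket{\phi}$ lives on the factored sub-product for $W$-parameters but on the joint subspace for $M$-parameters. My approach is to lean fully on the stated approximate $2$-design hypothesis and to apply it separately to the circuit segment before and after $\theta_p$ within the group generated by that block. I would absorb the remaining blocks into a fixed input state and a fixed observable, which is legitimate because \eqref{eq:sod-haar} holds for arbitrary unit $\ket{\psi}$ and arbitrary symmetric $O$. The residual difficulty is that conjugation by the other blocks changes the Frobenius factor. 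Because those blocks are orthogonal, the factor is invariant under conjugation, but only if the observable stays inside the same invariant subspace. I would need to check that point carefully for the $W$-parameters, whose observable is the projection of the joint readout onto the factored sub-product.
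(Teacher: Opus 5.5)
This is essentially the paper's proof. You classify $\theta_p$ as a $W$- or $M$-parameter, apply the Haar closed form \eqref{eq:sod-haar} at $d\in\{D,d_E\}$ or at $d=d_J\le (N+D)^{j+k}$, take the worse polynomial floor, and pass to the loss through \eqref{eq:bce-grad}; your Frobenius-factor bound, the explicit label-averaged constant $\tfrac14$, and the remark that a Lipschitz head needs a separate nondegeneracy condition are refinements that the paper's one-paragraph proof omits.

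The fix you propose for the obstacle you flag does not work as stated. Equation \eqref{eq:sod-haar} holds for $\mathrm{SO}(d)$ in its defining representation, but on the factored sub-product $W$ acts as $I_{d_N}\otimes U$ or $I_{d_N}\otimes C^{(k)}(U)$, so absorbing the other blocks into a fixed state and observable does not reduce the problem to it. The paper makes the same reduction, and quotes the $D^{-4}$ floor, without justification, so this is a weakness you share with it rather than a departure from its proof.
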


\begin{proof}
A $W$-parameter is bounded by the per-register form of Eq.~\eqref{eq:sod-haar} at $d=D$ (nearest-neighbour) or $d=d_E$ (Ehrlich), each polynomial in $D$. An $M$-parameter is bounded by the joint-register form at $d=d_J=\Theta(N^{j+k})$, polynomial in $N$ at fixed $(D,k,j)$. The minimum of two polynomial floors is polynomial. The BCE chain-rule bound and any fixed Lipschitz classical head preserve polynomial scaling in expectation.
\end{proof}

Three remarks on the assumptions of \autoref{prop:trainability_si} are worth making. The proposition is stated at random initialisation, which is the worst case for arguments of this kind: the Haar-averaged variance integrates over the entire parameter space, including configurations far from the trained one. In practice, parameters are initialised near zero with a small standard deviation, so each Givens rotation begins close to the identity, and the trainable circuit applies a small perturbation to the initial state. Near-identity initialisation gives strictly larger gradients than the Haar-averaged worst case for shallow effective depth, by a separate mechanism that does not invoke a $2$-design hypothesis at all, and our experiments use this regime throughout. The $2$-design hypothesis itself is provable for circuit depths of order $d \log d$ in the defining representation~\cite{Larocca2023}, and our typical depths sit in the same regime up to constants; the empirical near-neighbour-versus-Ehrlich variance ratio at $(D,k)=(6,3)$ is consistent with mild under-saturation at moderate $D$. Finally, the all-pair connectivity assumption for $M$ is the assumption used in our closure verification, but the implementation uses a nearest-neighbour pyramid on the joint register; for that connectivity the DLA may be a proper subalgebra of $\mathfrak{so}(d_J)$, in which case \autoref{prop:trainability_si} applies at the effective dimension of the subalgebra rather than at $d_J$. The polynomial-in-$N$ scaling at fixed $(D,k,j)$ is preserved as long as this effective dimension grows polynomially in $N$, which we flag as the most pressing direct verification item.

A complementary observation, independent of these conditional points: in our experiments, transfer initialisation from a small graph to a larger graph preserves $100$ to $500\times$ larger gradients than random initialisation across $N\in[5,20]$ at fixed $(D,k,j)$. This is empirical evidence that the polynomial floor is operationally meaningful in practice, not just an asymptotic statement.

\medskip\noindent\textbf{Empirical verification.} The same data underlying the main-text \autoref{fig:trainability} is reported at two additional resolutions in the figures below. \autoref{fig:si_trainability_3panel} shows the two diagnostic sweeps used to confirm the form of the bound: $N$-dependence at fixed $(D,k){=}(6,3)$ across $N\in\{5,8,10,15,20,25,30,40\}$ (panel a) and $D$-scaling at $N{=}10$, $k{=}D/2$ for $D\in\{4,6,8,10,12,14\}$ against the $1/\dim(\mathfrak{g})^2$ Ragone--Larocca prediction and the $1/\binom{D}{k}^2$ Monbroussou bound (panel b). The mean per-parameter gradient variance over quantum-side parameters is constant in $N$ within seed noise across the extended range, and the $D$-scaling tracks the polynomial bound between the Ragone--Larocca and Monbroussou predictions across three orders of magnitude in $\binom{D}{k}$.

\begin{figure}[h]
\centering
\includegraphics[width=0.95\linewidth]{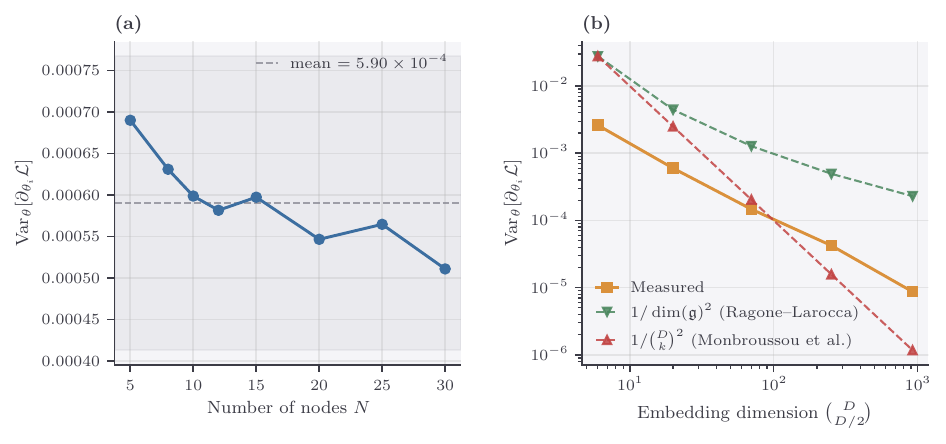}
\caption{\textbf{Empirical confirmation of the polynomial gradient floor.} (a)~$N$-dependence at fixed $(D,k){=}(6,3)$, $j{=}1$: mean per-parameter gradient variance over quantum-side parameters at random initialisation, $50$ initialisations per point. The variance is flat in $N$ within seed noise across the swept range. (b)~$D$-scaling at $N{=}10$, $k{=}D/2$, $j{=}1$ against the Ragone--Larocca $1/\dim(\mathfrak{g})^2$ bound and the Monbroussou $1/\binom{D}{k}^2$ bound.}
\label{fig:si_trainability_3panel}
\end{figure}

\begin{table}[h]
\centering\small
\caption{\textbf{Numerical values for the trainability sweeps of \autoref{fig:si_trainability_3panel}.} $50$ random initialisations per point. Left: $N$-dependence at fixed $(D,k){=}(6,3)$, $j{=}1$. Right: $D$-scaling at $N{=}10$, $k{=}D/2$, $j{=}1$, with the Ragone--Larocca lower bound $1/\dim(\mathfrak{g})^2$ at $\dim(\mathfrak{g}){=}D(D{-}1)/2$ and the Monbroussou bound $1/\binom{D}{k}^2$ at $\binom{D}{k}$. Variances are mean per-parameter $\mathrm{Var}_\theta[\partial_{\theta_i}\mathcal{L}]$ over quantum-side parameters.}
\label{tab:si_trainability}
\begin{minipage}[t]{0.42\linewidth}
\centering
\begin{tabular}{@{}r r@{}}
\toprule
$N$ & $\mathrm{Var}_\theta[\partial_{\theta}\mathcal{L}]$ \\
\midrule
$5$  & $6.43\times 10^{-4}$ \\
$8$  & $5.72\times 10^{-4}$ \\
$10$ & $5.91\times 10^{-4}$ \\
$15$ & $6.02\times 10^{-4}$ \\
$20$ & $5.77\times 10^{-4}$ \\
$25$ & $5.71\times 10^{-4}$ \\
$30$ & $4.67\times 10^{-4}$ \\
$40$ & $4.46\times 10^{-4}$ \\
\bottomrule
\end{tabular}
\end{minipage}\hfill
\begin{minipage}[t]{0.55\linewidth}
\centering
\begin{tabular}{@{}r r r r r@{}}
\toprule
$D$ & $\binom{D}{k}$ & $\mathrm{Var}_\theta[\partial_{\theta}\mathcal{L}]$ & $1/\dim(\mathfrak{g})^2$ & $1/\binom{D}{k}^2$ \\
\midrule
$4$  & $6$    & $2.97\times 10^{-3}$ & $2.78\times 10^{-2}$ & $2.78\times 10^{-2}$ \\
$6$  & $20$   & $5.91\times 10^{-4}$ & $4.44\times 10^{-3}$ & $2.50\times 10^{-3}$ \\
$8$  & $70$   & $1.43\times 10^{-4}$ & $1.28\times 10^{-3}$ & $2.04\times 10^{-4}$ \\
$10$ & $252$  & $3.46\times 10^{-5}$ & $4.94\times 10^{-4}$ & $1.57\times 10^{-5}$ \\
$12$ & $924$  & $7.99\times 10^{-6}$ & $2.30\times 10^{-4}$ & $1.17\times 10^{-6}$ \\
$14$ & $3{,}432$ & $2.82\times 10^{-6}$ & $1.20\times 10^{-4}$ & $8.49\times 10^{-8}$ \\
\bottomrule
\end{tabular}
\end{minipage}
\end{table}

The per-parameter gradient variance also resolves at the gate-group level. \autoref{fig:si_var_groups} reports the mean and the maximum per-parameter gradient variance, computed over the trainable embedding-register parameters at random initialisation, separated by the four trainable blocks of the model (loader controls, $W^{(\ell)}$ at each iteration $\ell$, joint-register mixing $M$, and the readout-side classical head). The mean and the maximum decay with the same slopes in $N+D$ as the pooled measurement of \autoref{fig:trainability}, and no single gate group dominates the polynomial floor.

\begin{figure}[h]
\centering
\includegraphics[width=0.95\linewidth]{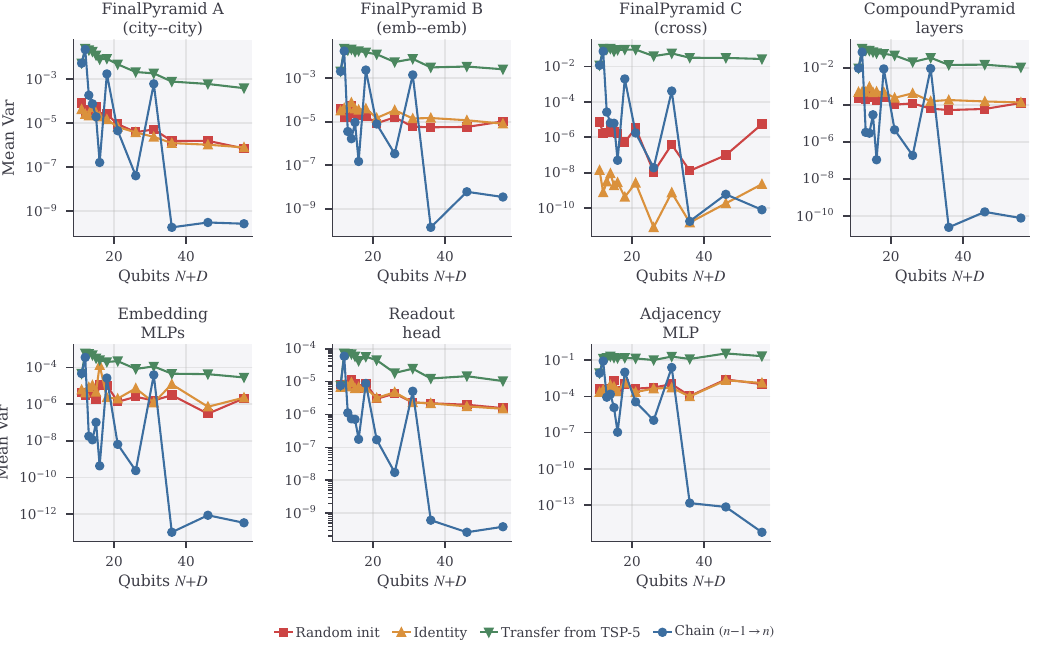}\\[0.4em]
\includegraphics[width=0.95\linewidth]{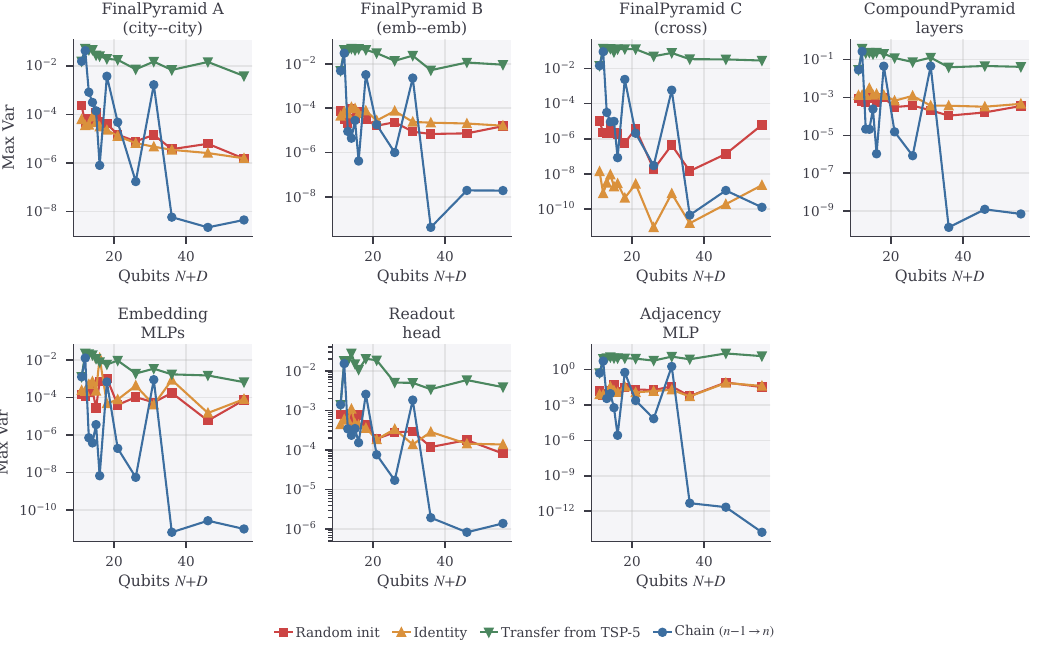}
\caption{\textbf{Gradient variance ordering across trainable gate groups.} Per-parameter gradient variance, separated by gate group. Top: mean over the parameters of each group. Bottom: maximum over the parameters of each group. Both follow the same polynomial-in-$N{+}D$ scaling as the pooled estimate of \autoref{fig:trainability}.}
\label{fig:si_var_groups}
\end{figure}

The decomposition into quantum and classical parameters is reported in \autoref{fig:si_var_qc}: the quantum-parameter gradient variance dominates the classical-parameter variance at every $N$ in the studied range, confirming that the trainability bound is operative on the parameters that the proposition addresses.

\begin{figure}[h]
\centering
\includegraphics[width=0.95\linewidth]{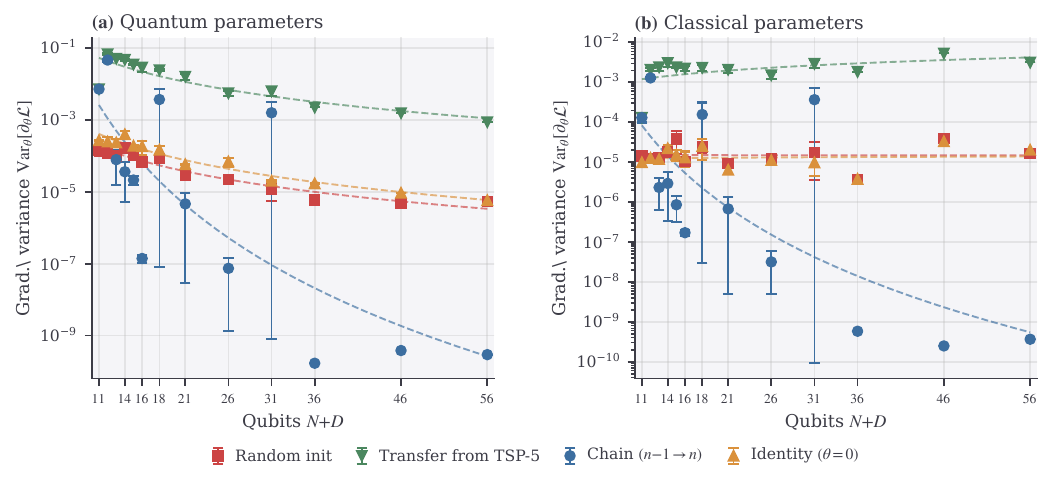}
\caption{\textbf{Quantum-parameter variance dominates classical-parameter variance.} Per-parameter gradient variance, decomposed into quantum (Givens-rotation) parameters and classical (head) parameters. The quantum-parameter variance lies above the classical-parameter variance at every studied $N$.}
\label{fig:si_var_qc}
\end{figure}

The trainability separation translates into a shot-complexity separation. \autoref{fig:si_shots_extrap} reports the shots required to estimate the gradient to a fixed relative accuracy as a function of $N+D$, at the $(D, \, k) = (6, \, 3)$ embedding setting and assuming parameter-shift differentiation: the random-initialisation curve crosses $10^6$ shots per gradient step at $N + D \sim 50$, while the chained-transfer curve stays at $10^4$ shots per step out to the same qubit count.

\begin{figure}[h]
\centering
\includegraphics[width=0.95\linewidth]{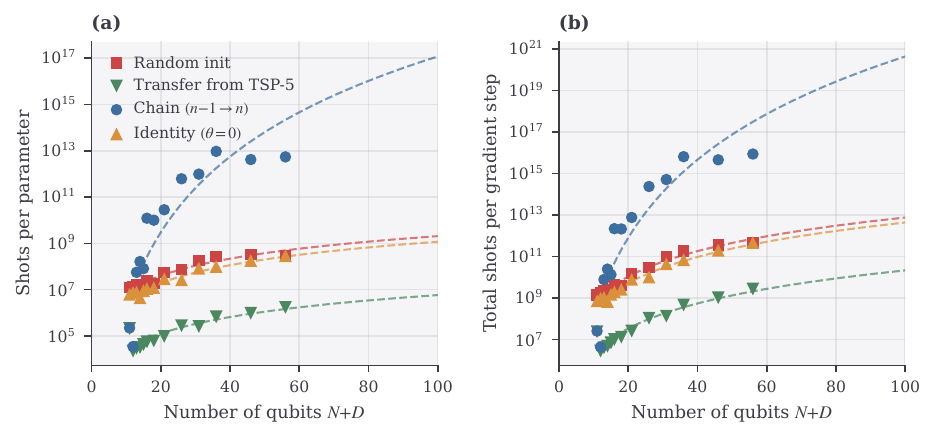}
\caption{\textbf{Shot budget per parameter-shift gradient step.} Shot complexity per parameter (left) and per full parameter-shift gradient step (right) at relative accuracy $\delta{=}0.05$, plotted against the qubit count $N{+}D$ at $(D,k){=}(6,3)$. Random initialisation (squares) reaches $10^6$ shots per gradient step at $N{+}D{\sim}50$; chained-transfer initialisation (diamonds) stays at $10^4$ shots per step at the same qubit count.}
\label{fig:si_shots_extrap}
\end{figure}

\section{Measurement strategy comparison}\label{si:measurement}

Each forward pass of the trained QGNN ends with feature extraction from the joint state on $N+D$ qubits. We compare four candidate strategies on arbitrary input states. \autoref{tab:supp_cost} reports the asymptotic shot complexity required to make every entry of each strategy's native feature vector individually trustable to additive error $\varepsilon$. For HF and the matchgate shadow, where the native feature is the $D \times D$ $1$-RDM matrix and the matchgate-frame channel inverse couples entries, this per-entry budget coincides with the Frobenius bound up to constants; for amplitude, the bin-population convention is what controls deployment.

\begin{table}[h]
\centering\small
\caption{\textbf{Per-strategy shot complexity for full-feature recovery on the embedding register.} Each row gives shots to make every entry of the strategy's native feature vector individually trustworthy to additive error $\varepsilon$. Native feature vectors: amplitude reads $\binom{D}{k}$ probabilities $|c_S|^2$, where the bin-population budget divides across bins; $Z{+}ZZ$ reads $O(D^2)$ comp-basis correlators $\langle Z_p\rangle$, $\langle Z_p Z_q\rangle$ jointly diagonal in $Z$; HF and shadow reconstruct the full $1$-RDM $\gamma$ ($D^2$ entries) under the matchgate channel inverse, which adds a factor of $D$. Per measurement setting on the embedding register, amplitude and $Z{+}ZZ$ require zero additional depth, deterministic HF requires $O(D)$ depth, and the matchgate shadow requires $O(D^2)$ depth.}
\label{tab:supp_cost}
\begin{tabular}{@{}l c@{}}
\toprule
Strategy & Shots (per-entry $\varepsilon$ on native feature) \\
\midrule
Amplitude          & $O\!\left(\binom{D}{k}/\varepsilon^2\right)$ \\
Embedding $Z{+}ZZ$ & $O(D^2/\varepsilon^2)$ \\
Deterministic HF   & $O(D^3/\varepsilon^2)$ \\
Matchgate shadow   & $O(D^3/\varepsilon^2)$ \\
\bottomrule
\end{tabular}
\end{table}

We work on the embedding-register Hilbert space of fixed particle number $k$ on $D$ qubits. For a real-amplitude state $\ket{\psi} = \sum_S c_S \ket{S}$, the $1$-RDM is $\gamma_{pq} = \bra{\psi} a_p^\dagger a_q \ket{\psi}$ in the Jordan--Wigner mapping; $\gamma$ is real symmetric with $\Tr\gamma = k$. All four strategies share an outer projection step on the node register: each readout first measures the node register in the comp basis, accepts only outcomes whose lit set has cardinality exactly $j$, identifies that lit set with a $j$-subset $T \subseteq [N]$, and runs the embedding-register protocol on the post-measurement conditional state $\ket{\phi_T}_{\rm emb}$ on the particle-number-$k$ subspace.

The four native feature vectors carry different information about the embedding-register state. Amplitude and $Z{+}ZZ$ are confined to comp-basis observables: amplitude resolves the multinomial counts $\hat p_S$, from which $\widehat\gamma_{pp} = \sum_{S \ni p} \hat p_S$ but no off-diagonal entry can be recovered; $Z{+}ZZ$ adds the diagonal $2$-RDM correlators $\langle \hat n_p \hat n_q\rangle$ but is similarly blind to $\gamma_{pq}$ for $p\ne q$. The off-diagonal entries are sign-coherent sums of amplitude products $c_S c_{S^{q\to p}}$ over basis pairs differing by a single mode swap, and the Born statistics $|c_S|^2$ retain no sign information.

A worked example is illustrative. At $(D,k) = (4,1)$ the two particle-number-$1$ states
\[
\ket{\psi_+} = \tfrac12 (\ket{1000} + \ket{0100} + \ket{0010} + \ket{0001}), \qquad
\ket{\psi_-} = \tfrac12 (\ket{1000} - \ket{0100} + \ket{0010} - \ket{0001})
\]
satisfy $\langle Z_p\rangle = \tfrac12$ and $\langle Z_pZ_q\rangle = 0$ for all $p,q$ in both, so the comp-basis-only readouts produce identical native features on the two states. Their $1$-RDMs are orthogonal rank-$1$ projectors with Frobenius distance $\sqrt{2}$, so a downstream task that depends on $\gamma$ cannot distinguish $\ket{\psi_+}$ from $\ket{\psi_-}$ through comp-basis-only features at any shot count. The matchgate-tailored protocols of the next two rows distinguish the two states at finite shot count and are the focus of the rest of this note.

Deterministic Hartree--Fock~\cite{arute2020hartree} and the matchgate shadow~\cite{wan2023matchgate} apply a matchgate basis-change unitary on the embedding register before comp-basis measurement, mapping off-diagonal $\gamma_{pq}$ entries into accessible occupation differences. Both protocols are unbiased on every Hermitian symmetric one-body observable.

The two protocols differ in the choice of basis change. Deterministic HF uses $D/2 + 1$ fixed Givens settings, each at depth $O(D)$, covering all $\binom{D}{2}$ off-diagonal entries. The matchgate shadow draws a fresh Haar-random matchgate $C^{(k)}(U)$ with $U \in \mathrm{SO}(D)$ per snapshot at depth $O(D^2)$, then inverts the resulting channel via Eq.~\eqref{eq:shadow_estimator}.

The two protocols also differ in their per-shot variance. Deterministic HF splits the shot budget across $D/2 + 1$ settings, so the per-entry variance of $\gamma$ from one shot is $(D/2 + 1)/N_{\rm shots}$. The matchgate-shadow channel inverse multiplies each snapshot entry by $(D+2)/2$, so the per-entry variance is $(2D-1)/N_{\rm shots}$. Summing over the $D(D+1)/2$ unique entries of the symmetric matrix gives Frobenius shots $D(D+1)(D+2)/(4\varepsilon^2)$ for HF and $D(D+1)(2D-1)/(2\varepsilon^2)$ for the shadow, both $O(D^3/\varepsilon^2)$ with a constant prefactor gap of about $4$ in HF's favour.

The wall-clock budget on hardware adds the projection multiplier $1/P_{\rm proj}$ from the outer node-register conditioning. The per-iteration operations $V$, $\mathcal{A}$, and $W$ each preserve the node-register particle number $j$ individually, so before the final mixing layer $M$, the projection succeeds on every measurement run. The mixing layer $M$ preserves only the total weight $j+k$, so after $M$ the joint state has support on every $(n_{\rm node}, n_{\rm emb})$ partition with sum $j + k$, and the projection succeeds only on the $(j, k)$ component. For a Haar-random joint state in the particle-number-$(j+k)$ subspace,
\begin{equation}
P_{\rm proj} = \frac{\binom{N}{j} \binom{D}{k}}{\binom{N+D}{j+k}}, \qquad \frac{1}{P_{\rm proj}} \sim \frac{j!}{\binom{D}{k}\,(j+k)!}\,N^k \quad (N \to \infty).
\label{eq:proj_success}
\end{equation}
The total wall-clock cost over all $N$ nodes is therefore $O(N^{k+1} D^3/\varepsilon^2)$ for both polynomial-cost protocols.

The matchgate-shadow estimator is derived as follows. Apply a Haar-random matchgate $C^{(k)}(U)$ before each comp-basis measurement and accumulate $U^\top \mathrm{diag}(\hat n_S) U$ where $\hat n_{S,p} = \mathbf{1}[p \in S]$. The expected snapshot, by joint expectation over Haar $\mathrm{SO}(D)$ and the comp-basis outcome $S$ sampled from $|\bra{S} C^{(k)}(U)\ket{\psi}|^2$, is
\begin{equation}
\mathcal{M}(\gamma)_{pq} = \E_{U,S}\bigl[(U^\top \mathrm{diag}(\hat n_S) U)_{pq}\bigr] = \frac{2}{D+2}\,\gamma_{pq} + \frac{k}{D+2}\,\delta_{pq},
\label{eq:shadow_channel}
\end{equation}
where the inner expectation $\E_S[\hat n_{S,p} \mid U] = (U \gamma U^\top)_{pp}$ uses the homomorphism property of the compound representation acting on one-body observables, and the Haar average on $\mathrm{SO}(D)$ decomposes $\mathrm{Sym}(\mathbb{R}^D) = \mathrm{Sym}_0(\mathbb{R}^D) \oplus \mathbb{R}\cdot I$ and acts as the scalar $2/(D+2)$ on the traceless symmetric component and as the identity on $\mathbb{R}\cdot I$. Inverting on each component gives the unbiased estimator
\begin{equation}
\widehat\gamma = \frac{D+2}{2}\,U^\top\mathrm{diag}(\hat n_S)\,U - \frac{k}{2}\,I,
\label{eq:shadow_estimator}
\end{equation}
with $\E[\widehat\gamma] = \gamma$ and $\Tr(\widehat\gamma) = k$ as an algebraic identity per snapshot. The per-entry variance of a single shadow snapshot is bounded by $2D-1$~\cite{wan2023matchgate}.

A classical readout head $g$ that is Lipschitz-$L$ in the per-node $1$-RDM features propagates a per-entry estimation error $\varepsilon$ on $\widehat\gamma$ to a Frobenius error $\|\widehat\gamma - \gamma\|_F \le D\varepsilon$, and through $g$ to an edge-logit perturbation $|\Delta z_{ij}| \le L D \varepsilon$. Correct decoding of the predicted output requires $|\Delta z_{ij}| < \Delta_{\min}/2$, where $\Delta_{\min}$ is the smallest gap between selected and rejected logits; combined with a union bound over the $D^2$ entries and the $\binom{N}{2}$ edges, the per-readout sample complexity is
\begin{equation}
N_{\rm shots} \ge \frac{8 L^2 D^3 \log(DN/\delta)}{\Delta_{\min}^2}, \qquad
N_{\rm shots}^{\rm wall} \ge \frac{N_{\rm shots}}{P_{\rm proj}},
\label{eq:tour_shots}
\end{equation}
for correct decoding with confidence $1-\delta$.

\begin{proposition}[Polynomial-cost $1$-RDM readout]
\label{prop:rdm_readout_full}
For the two-register QGNN with matchgate trainable evolution, the embedding-register $1$-RDM conditioned on a $j$-subset $T$ of node-register qubits can be estimated to Frobenius additive error $\varepsilon$ in $O(D^3/\varepsilon^2)$ shots per accepted projection round at circuit depth $O(D)$ per measurement setting via the deterministic protocol of~\cite{arute2020hartree}, or in $O(D^3/\varepsilon^2)$ shots per accepted round at circuit depth $O(D^2)$ per snapshot via the matchgate-shadow estimator of Eq.~\eqref{eq:shadow_estimator}. The shadow's per-entry variance is larger than HF's by a factor of about $4$ from the channel-inverse amplification; both costs are independent of $N$ and $k$. The wall-clock measurement budget is the per-round budget multiplied by $1/P_{\rm proj}$ from Eq.~\eqref{eq:proj_success}, polynomial in $N$ at fixed $(D,j,k)$.
\end{proposition}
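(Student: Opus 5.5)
The plan is to split the statement into three independent claims and prove them in turn: the projection step, the per-round estimation cost for the two protocols, and the wall-clock multiplier.

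\emph{Projection step.} Measuring the node register in the computational basis and accepting outcomes of Hamming weight $j$ leaves the embedding register in the normalised conditional state $\ket{\phi_T}_{\rm emb}$. Because every gate preserves the total particle number $j+k$, this state lies in the particle-number-$k$ subspace on $D$ qubits. Its $1$-RDM $\gamma$ is real symmetric with $\Tr\gamma=k$, since all gates are real Givens rotations.

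\emph{Deterministic Hartree--Fock protocol.} I would invoke the construction of~\cite{arute2020hartree}. It uses $D/2+1$ Givens-network basis changes $C^{(k)}(U_s)$, each of depth $O(D)$, and each maps a matching of mode pairs $(p,q)$ onto occupation differences. Together these settings cover all $\binom{D}{2}$ off-diagonal entries, and the diagonal is read from any setting. Each entry is then an average of bounded $\{0,1\}$ outcomes. Splitting the shots evenly across settings gives per-entry variance $(D/2+1)/N_{\rm shots}$. Summing over the $D(D+1)/2$ unique entries gives
\[
\E\|\widehat\gamma-\gamma\|_F^2\le \frac{D(D+1)(D+2)}{4N_{\rm shots}},
\]
so $N_{\rm shots}=O(D^3/\varepsilon^2)$ suffices in expectation. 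Chebyshev's inequality, or a median-of-means step, upgrades this to a bound holding with constant probability.

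\emph{Matchgate shadow.} I would first establish the channel identity~\eqref{eq:shadow_channel} and then use the estimator~\eqref{eq:shadow_estimator}. The derivation has two parts. The homomorphism $C^{(k)}(U)^\top a_p^\dagger a_q\, C^{(k)}(U)=\sum U_{p'p}U_{q'q}a_{p'}^\dagger a_{q'}$ gives $\E_S[\hat n_{S,p}\mid U]=(U\gamma U^\top)_{pp}$. Second-moment Weingarten averaging on $\mathrm{SO}(D)$ then acts on $\mathrm{Sym}_0\oplus\mathbb{R}I$ with eigenvalues $2/(D+2)$ and $1$, which gives the unbiasedness of~\eqref{eq:shadow_estimator}. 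For the variance, I would take the per-snapshot bound $2D-1$ per entry from~\cite{wan2023matchgate}. This gives Frobenius cost $D(D+1)(2D-1)/(2\varepsilon^2)=O(D^3/\varepsilon^2)$ and a ratio of about $4$ relative to HF. Sampling a Haar matchgate costs $O(D^2)$ Givens depth. Neither constant depends on $N$ or $k$, because $k$ enters only through the trace term, which is exact per snapshot.

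\emph{Wall-clock multiplier.} Accepted rounds follow a geometric law with success probability $P_{\rm proj}=\|(\Pi_j\otimes\Pi_k)\ket{\psi}\|^2$, so the expected total shots equal the per-round budget divided by $P_{\rm proj}$. Before $M$ this probability is exactly $1$, since $V$, $\mathcal{A}$ and $W$ each preserve $j$. After $M$, for a Haar-typical state, it equals the dimension ratio in~\eqref{eq:proj_success}. Stirling's formula then gives $1/P_{\rm proj}=\Theta(N^k)$, which is polynomial at fixed $(D,j,k)$.

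The main obstacle is this last step. For a specific trained state, as opposed to a Haar-typical one, $P_{\rm proj}$ has no a priori lower bound, and the claim ``polynomial in $N$'' is really a statement about typical states. I would state it honestly as an average-case or Haar-typical bound. The fallback is to note that one may read out before $M$, or choose $M$ near the identity, in which case $P_{\rm proj}=1$ exactly.
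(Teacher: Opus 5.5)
Your proposal is correct and follows the paper's proof essentially step for step. You use per-entry variance $(D/2+1)/N_{\rm shots}$ from the $D/2+1$ Hartree--Fock settings and the $2D-1$ per-snapshot bound of~\cite{wan2023matchgate} for the shadow, sum over the $D(D+1)/2$ independent entries, and then apply the multiplier $1/P_{\rm proj}$ from the Haar-random formula~\eqref{eq:proj_success}. Your caveat that a trained state admits no a priori lower bound on $P_{\rm proj}$ is right, and it applies equally to the paper's argument, which also establishes only the Haar-typical case.

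Three small fixes are needed. First, each Givens-rotated pair $(p,q)$ conserves $n_p+n_q$, so it exposes $\gamma_{pq}$ but only the sum $\gamma_{pp}+\gamma_{qq}$; the diagonal therefore needs an unrotated computational-basis setting rather than ``any'' setting. Second, small nonzero cross-register angles in $M$ give $P_{\rm proj}$ close to $1$, not exactly $1$. Third, for a single fixed $T$ an accepted round must land on $T$ itself, so $1/P_{\rm proj}$ is the per-subset cost amortised over all $\binom{N}{j}$ subsets; one isolated $T$ costs roughly a further factor $\binom{N}{j}$ at Haar-typical states, which is still polynomial.
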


\begin{proof}
Both protocols act on the embedding register at fixed particle number $k$, conditioned on the projection onto a $j$-subset of node-register qubits. The deterministic protocol of~\cite{arute2020hartree} estimates each off-diagonal entry as a single bounded $\pm 1$ correlator after an $O(D)$-depth Givens layer; per-setting Hoeffding gives $1/N$ variance per entry, and summing over the $D(D-1)/2$ off-diagonal entries and $D/2+1$ settings yields $O(D^3/\varepsilon^2)$ for Frobenius error $\varepsilon$. The matchgate-shadow estimator of Eq.~\eqref{eq:shadow_estimator}, at circuit depth $O(D^2)$ per snapshot, is unbiased with per-snapshot per-entry variance at most $2D-1$~\cite{wan2023matchgate}; summing over the $D^2$ entries gives the same $O(D^3/\varepsilon^2)$ bound. The shadow's per-entry variance exceeds the deterministic protocol's by the ratio $(2D-1)/(D/2+1)$, about $2.75$ at $D=6$, and a separate channel-inverse prefactor $(D+2)/2=4$ amplifies the shadow's variance at small shot budgets. The per-round cost is independent of $N$ and $k$; the wall-clock budget multiplies it by $1/P_{\rm proj}$ from Eq.~\eqref{eq:proj_success}, polynomial in $N$ at fixed $(D,j,k)$.
\end{proof}

The $1$-RDM compresses $\binom{D}{k}$ amplitudes into a $D \times D$ symmetric matrix with trace $k$; the generic image dimension is $\min(\binom{D}{k} - 1, D(D+1)/2 - 1)$, with one direction lost at half-filling. The detailed Jacobian-rank derivation, including the Hodge identity $\gamma(\star\psi) = I - \gamma(\psi)$, is in \suppref{si:rdm_jacobian}.

\begin{figure}[h]
\centering
\includegraphics[width=0.97\linewidth]{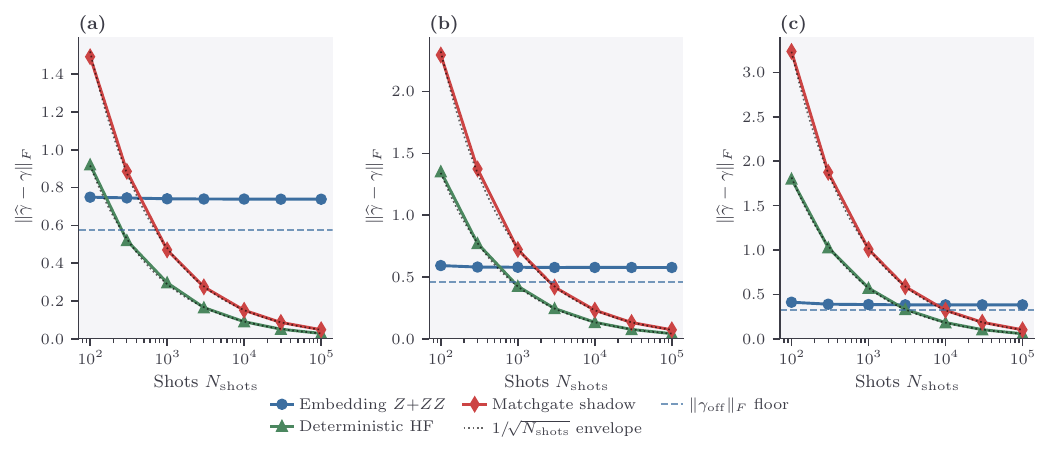}
\caption{\textbf{$1$-RDM Frobenius recovery on Haar-random states across measurement strategies.} Each curve plots $\norm{\widehat\gamma - \gamma}_{F}$ versus shot budget for one of the four readout strategies, at three values of $(D,k)$. The HF and shadow curves decrease as $1/\sqrt{N_{\rm shots}}$ with the prefactor gap of \autoref{tab:supp_cost}; the amplitude and $Z{+}ZZ$ curves stall at the off-diagonal mass $\|\gamma_{\rm off}\|_F$, which their native features do not resolve.}
\label{fig:gamma_unified}
\end{figure}

\paragraph{Task-level instantiation (TSP).} The framework-level results above ($O(D^3/\varepsilon^2)$ shots for full $\gamma$, $\sim 4{\times}$ HF/shadow prefactor gap, comp-basis recovery confined to $\mathrm{diag}(\gamma)$) are independent of any specific task. The remaining paragraphs of this note specialise to the TSP instantiation of \autoref{ssec:results_tsp} and report visible task-level error metrics (test BCE, tour ratio) defined there.

We compare the same protocols on the trained QGNN of \autoref{ssec:results_tsp}, sweeping the shot budget at $(D,k){=}(6,3)$ with parameter-matched classical heads ($\sim$5{,}350 parameters each). Each head is trained on exact features for $100$ epochs and evaluated at four shot budgets via protocol simulation: amplitude readout uses multinomial sampling from $|c_S|^2$; the $\langle Z_iZ_j\rangle$ readout uses the shot-noise model $\Var[\langle Z_iZ_j\rangle] = (1 - \langle Z_iZ_j\rangle^2)/N_{\rm shots}$; the matchgate-shadow readout uses Eq.~\eqref{eq:shadow_estimator} with Haar $\mathrm{SO}(D)$ samples per node per shot.

\begin{table}[h]
\centering\small
\caption{\textbf{Tour quality of the trained QGNN under each measurement strategy.} Test BCE and tour ratio at $(D,k){=}(6,3)$ on the trained QGNN of \autoref{ssec:results_tsp}, three seeds, parameter-matched classical heads ($\sim 5,350$ parameters per strategy). Dashes: data not collected at this shot budget.}
\label{tab:supp_panel_b}
\begin{tabular}{@{}lc cc cc cc cc@{}}
\toprule
& & \multicolumn{2}{c}{Exact} & \multicolumn{2}{c}{$5{\times}10^4$ shots} & \multicolumn{2}{c}{$10^4$ shots} & \multicolumn{2}{c}{$10^3$ shots} \\
\cmidrule(lr){3-4}\cmidrule(lr){5-6}\cmidrule(lr){7-8}\cmidrule(lr){9-10}
Strategy & Feat. & BCE & Tour & BCE & Tour & BCE & Tour & BCE & Tour \\
\midrule
Amplitude          & $20$ & $0.176$ & $1.003$ & --- & --- & --- & --- & --- & --- \\
Embedding $Z{+}ZZ$ & $21$ & $0.391$ & $1.043$ & $0.392$ & $1.043$ & $0.396$ & $1.045$ & $0.437$ & $1.055$ \\
Deterministic HF   & $36$ & $0.239$ & $1.006$ & $0.247$ & $1.006$ & $0.281$ & $1.008$ & $0.643$ & $1.043$ \\
Matchgate shadow   & $36$ & $0.239$ & $1.006$ & $0.262$ & $1.007$ & $0.355$ & $1.014$ & $1.247$ & $1.099$ \\
\midrule
\multicolumn{10}{l}{\textit{Structural sanity check (node-register only)}} \\
Node $\langle Z_iZ_j\rangle$ & $5$ & $0.574$ & $1.104$ & $0.582$ & $1.110$ & $0.609$ & $1.126$ & $0.829$ & $1.177$ \\
\bottomrule
\end{tabular}
\end{table}

\autoref{tab:supp_panel_b} makes the information-content gap of \autoref{tab:supp_cost} visible at the task level. The embedding $Z{+}ZZ$ row is shot-stable across the budget range, converging to its exact-feature gap at all shot counts (consistent with the absence of a randomised channel inverse), but stalls at tour ratio $1.043$ at every shot budget; this is $40\times$ the optimality gap of the matchgate readouts at exact features. The deterministic HF and matchgate shadow rows reach exact-feature tour ratio $1.006$, within $0.3\%$ of the amplitude reference $1.003$, and degrade gracefully with shot budget: HF beats the shadow at every finite-shot budget by the prefactor gap of \autoref{tab:supp_cost}, with BCE $0.643$ vs $1.247$ at $10^3$ shots, $0.281$ vs $0.355$ at $10^4$ shots, $0.247$ vs $0.262$ at $5{\times}10^4$ shots, all converging to the exact-feature BCE $0.239$ as the budget grows. The shadow row degrades sharply at $10^3$ shots, where the channel-inverse prefactor $(D+2)/2 = 4$ at $D = 6$ amplifies low-shot variance. The node-register $\langle Z_iZ_j\rangle$ row below measures only city-register correlators and ignores the embedding entirely; its task error does not improve at any shot budget. Hardware noise modelling is absent throughout.

\section{\RDM{} Jacobian rank derivation}\label{si:rdm_jacobian}

The matchgate-shadow and deterministic Hartree--Fock protocols of \suppref{si:measurement} estimate the $1$-RDM $\gamma$ on the embedding register, a $D\times D$ real symmetric matrix with trace $k$, in place of the full $\binom{D}{k}$-dimensional state. We ask what fraction of the state-amplitude information survives this compression. The answer is the generic rank of the Jacobian of the map $\Phi: \ket\psi \mapsto \gamma(\psi)$, which we derive in \autoref{prop:rdm_jacobian}: the rank is the minimum of source and target dimensions, with one additional unit lost at half-filling $k = D/2$ when source and target dimensions are comparable.

Let $\mathcal{S}_{D,k}$ denote the unit sphere of real-amplitude states in the particle-number-$k$ subspace, $\mathcal{S}_{D,k}=\{\psi\in\mathbb{R}^{\binom{D}{k}}:\|\psi\|_2=1\}$, of dimension $\binom{D}{k}-1$. Let $\mathcal{T}_D^k$ denote the affine subspace of real symmetric $D{\times}D$ matrices with trace $k$, $\mathcal{T}_D^k=\{M\in\mathrm{Sym}(\mathbb{R}^D):\Tr M=k\}$, of dimension $D(D+1)/2-1$. The map $\Phi:\mathcal{S}_{D,k}\to\mathcal{T}_D^k$ is the assignment $\psi\mapsto\gamma(\psi)$ with entries $\gamma_{pq}(\psi)=\bra{\psi}a_p^\dagger a_q\ket{\psi}$, quadratic in $\psi$. The image $\Phi(\mathcal{S}_{D,k})\subset\mathcal{T}_D^k$ is closed semialgebraic, and we ask for its generic dimension.

The differential $d\Phi|_\psi:T_\psi\mathcal{S}_{D,k}\to T_{\gamma(\psi)}\mathcal{T}_D^k$ acts between spaces of dimensions $\binom{D}{k}-1$ and $D(D+1)/2-1$ respectively. Quadratic maps between real algebraic varieties have generic rank equal to the minimum of source and target dimensions; this is Sard's theorem applied to a polynomial map plus constancy of rank on a Zariski open set. Hence
\begin{equation}
\mathrm{rank}\bigl(d\Phi|_\psi\bigr)\;\le\;\min\!\left(\tbinom{D}{k}-1,\;\tfrac{D(D{+}1)}{2}-1\right)
\label{eq:rank_naive}
\end{equation}
at every $\psi$, with equality on a generic point unless an algebraic constraint cuts the image.

At half-filling $k=D/2$, source and target are both indexed by $k$-subsets of $[D]$, and the Hodge star induces an involution on real amplitudes. Define $\star:\mathbb{R}^{\binom{D}{k}}\to\mathbb{R}^{\binom{D}{k}}$ by $(\star\psi)_S=\mathrm{sgn}(\sigma_{S,S^c})\,\psi_{S^c}$, where $\sigma_{S,S^c}$ is the permutation taking the concatenation $(S,S^c)$ into the identity ordering on $[D]$. Then $\star^2=(-1)^{k(D-k)}I$, with $\star^2=+I$ when $k(D-k)$ is even and $\star^2=-I$ otherwise. The Hodge identity for the $1$-RDM is
\begin{equation}
\gamma(\star\psi)\;=\;I-\gamma(\psi),
\label{eq:hodge_identity}
\end{equation}
a Hartree--Fock-style relation whose proof reduces to a sign computation: under $\star$, the role of $a_p^\dagger a_q$ on $\ket{S}$ with $q\in S$, $p\notin S$ is exchanged with $a_q^\dagger a_p$ on $\ket{S^c}$ with $p\in S^c$, $q\notin S^c$; the JW signs accumulated through the involution combine with the quadratic structure of $\gamma$ to give $\gamma(\star\psi)+\gamma(\psi)=I$ entrywise. We verified Eq.~\eqref{eq:hodge_identity} numerically at $(D,k)\in\{(4,2),(6,3),(8,4)\}$, observing residuals $\|\gamma(\star\psi)-(I-\gamma(\psi))\|_F$ at most $5.6\times 10^{-16}$ at random $\psi$ on the unit sphere.

The identity~\eqref{eq:hodge_identity} is a global property of the image $\Phi(\mathcal{S}_{D,k})$. Its consequence for the local Jacobian rank depends on whether the source and the target dimensions are comparable. When $\binom{D}{k}>D(D+1)/2$, the differential is generically full target rank and the Hodge identity merely constrains the image to a proper subset of $\mathcal{T}_D^k$ without further reducing its local dimension. When $\binom{D}{k}\le D(D+1)/2$ at half-filling, source and target are both close to $\binom{D}{k}-1$, and the Hodge constraint cuts the image dimension by exactly one direction at generic $\psi$. Among the cases relevant to our paper, this regime contains exactly one value: $(D,k){=}(6,3)$, where $\binom{6}{3}=20$ and $D(D+1)/2=21$.

\begin{proposition}[Generic Jacobian rank of $\Phi$]
\label{prop:rdm_jacobian}
For $\psi$ on a Zariski open subset of the unit sphere $\mathcal{S}_{D,k}$,
\begin{equation}
\mathrm{rank}\bigl(d\Phi|_\psi\bigr)
\;=\;
\min\!\left(\tbinom{D}{k}-1,\;\tfrac{D(D{+}1)}{2}-1\right)
\;-\;
\mathbf{1}\!\left[k=\tfrac{D}{2}\;\text{and}\;\tbinom{D}{k}\le\tfrac{D(D{+}1)}{2}\right].
\label{eq:rank_formula}
\end{equation}
The Hodge correction $-1$ at half-filling is a single real direction in the image, traced back through Eq.~\eqref{eq:hodge_identity} to the Hodge star acting on amplitudes.
\end{proposition}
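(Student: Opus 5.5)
The plan is to prove \autoref{prop:rdm_jacobian} by two matching bounds, an upper bound on the rank and a generic lower bound, and to treat the half-filling exception separately.

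\emph{Reduction to one point per case.} The map $\Phi$ is polynomial (quadratic) in the amplitudes, so the entries of $d\Phi|_\psi$, written in a polynomial frame of $T_\psi\mathcal{S}_{D,k}$, are polynomials in $\psi$. The set $\{\psi:\mathrm{rank}\,d\Phi|_\psi\ge r\}$ is the non-vanishing locus of the $r\times r$ minors. It is therefore Zariski open, and if nonempty it is dense in the irreducible sphere. Hence, for each $(D,k)$, it suffices to do two things: prove an upper bound $r_{\max}$ valid at every $\psi$, and exhibit one explicit $\psi_\star$ at which the rank equals $r_{\max}$.

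\emph{Upper bounds.} Outside the exceptional case, $r_{\max}=\min\bigl(\binom{D}{k}-1,\,D(D+1)/2-1\bigr)$. This is immediate from \eqref{eq:rank_naive}, i.e.\ from the dimensions of source and target, together with $\Tr\gamma=k$. In the exceptional case the extra unit must come from a genuine pointwise constraint. The Hodge identity \eqref{eq:hodge_identity} by itself relates $\gamma$ at two \emph{different} points and does not lower the local rank. I would therefore turn it into a spectral constraint, as follows.
\begin{itemize}
\item The $\mathrm{SO}(D)$ compound action gives $\gamma(C^{(k)}(U)\psi)=U\gamma(\psi)U^\top$. So the image is a union of $\mathrm{SO}(D)$-orbits of symmetric matrices, each of dimension at most $\dim\mathfrak{so}(D)=D(D-1)/2$.
\item The image dimension is then at most $D(D-1)/2$ plus the number of free eigenvalue parameters of $\gamma$.
\item For $(D,k)=(6,3)$, the pure-state $N$-representability conditions for three fermions in six modes (Borland--Dennis) give $\lambda_i+\lambda_{7-i}=1$ for the ordered spectrum. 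This leaves $3$ free eigenvalues, so the image has dimension at most $15+3=18=19-1$, as claimed.
\item For $(D,k)=(4,2)$, the analogous pairing $\lambda_1=\lambda_2$, $\lambda_3=\lambda_4$ with $\lambda_1+\lambda_3=1$ leaves one free eigenvalue. This bounds the image by $6+1=7$, which exceeds $4$. So here the drop must instead come from a direct computation: expressing $\gamma$ through the $6$ real amplitudes and checking one linear relation among the differentials beyond the trace. I would do this symbolically.
\end{itemize}

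\emph{Lower bounds at a witness point.} For each case used in the paper, namely $(D,k)\in\{(4,2),(6,3),(8,4),(10,5)\}$ and the $k=1$ cases, I would write $d\Phi|_\psi(\dot\psi)_{pq}=2\langle\psi|a_p^\dagger a_q|\dot\psi\rangle$. I would evaluate this at a rational point $\psi_\star$ with generic integer entries and compute the rank exactly over $\mathbb{Q}$. For the general non-exceptional statement I would argue uniformly instead:
\begin{itemize}
\item Choose $\psi_\star$ whose $1$-RDM has simple spectrum.
\item Use the $\mathrm{SO}(D)$-equivariance to obtain the full orbit directions, $D(D-1)/2$ of them.
\item Vary the amplitudes within a block that diagonalises $\gamma$ to realise every transverse eigenvalue direction up to the available count.
\end{itemize}
This requires checking that the occupations $\gamma_{pp}=\sum_{S\ni p}\psi_S^2$ of diagonal-$\gamma$ states move freely in the relevant directions.

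\emph{Main obstacle and a caveat.} The main obstacle is the exceptional term. In particular, I must show that the drop is \emph{exactly} one at $(6,3)$; this needs the Borland--Dennis upper bound matched by an explicit rank-$18$ witness. I must also pin down what happens at $(4,2)$. The indicator as written also fires at $D=2,k=1$. There $\Phi(\psi)=\psi\psi^\top$ on the circle plainly has rank $1$, not $0$. So I would restrict the statement to $D\ge4$ and note this boundary case explicitly.
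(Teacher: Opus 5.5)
\paragraph{The gap.} Your plan for $(D,k)=(4,2)$ fails. The symbolic check will not find an extra linear relation among the differentials, because there is none: the proposition is false there, not only at $(2,1)$.

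At $k=2$, encode the amplitudes in a real antisymmetric $4\times4$ matrix $A$ with $A_{ab}=\psi_{ab}$ for $a<b$. Then $a_q\ket{\psi}=\sum_c A_{qc}\ket{c}$, so $\gamma=AA^\top=-A^2$ and $d\Phi(\dot A)=-(\dot A A+A\dot A)$. Suppose the two singular values of $A$ are distinct and nonzero. A $2\times2$ block computation then shows that the only antisymmetric $\dot A$ anticommuting with $A$ is $\dot A=0$. So $d\Phi$ is injective on $\mathbb{R}^6$ and has rank $5=\binom{4}{2}-1$ on the tangent sphere.

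Concretely, take $\psi=a\ket{12}+b\ket{34}$ with $ab\neq0$ and $a\neq\pm b$. The tangent directions map as follows:
\begin{itemize}
\item $\ket{13}$ and $\ket{24}$ go to independent matrices supported on the $(1,4),(2,3)$ entries;
\item $\ket{14}$ and $\ket{23}$ go to independent matrices supported on the $(1,3),(2,4)$ entries;
\item $b\ket{12}-a\ket{34}$ goes to $2ab\,\mathrm{diag}(1,1,-1,-1)$.
\end{itemize}
These five images are independent, whereas the formula predicts rank $4$.

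Your orbit-plus-spectrum count $6+1$ is too coarse here. The spectrum $(\lambda,\lambda,1-\lambda,1-\lambda)$ has stabiliser $\mathrm{S}(\mathrm{O}(2)\times\mathrm{O}(2))$, so the orbits are copies of $\mathrm{Gr}(2,4)$. The sharp bound is therefore $4+1=5$, which equals the source dimension, and it is attained.

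The indicator fires exactly at $(2,1)$, $(4,2)$ and $(6,3)$. The correct statement therefore replaces it by $\mathbf{1}[(D,k)=(6,3)]$. This is also what the paper's own sweep reports: the correction ``triggers only at $(6,3)$'' for $D\le10$, which contradicts the displayed formula.

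\paragraph{What is right.} Apart from this, your route is sounder than the paper's. The paper obtains the $-1$ by differentiating the Hodge identity along $\star\psi$. It then asserts that a relation between tangent vectors at $\psi$ and at $\star\psi$ lowers the rank at $\psi$, which is exactly the inference you rightly reject. Its ``Sard plus rank constancy'' step also shows only that the generic rank is constant, not that it equals the minimum of the dimensions.

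Your Borland--Dennis bound $15+3=18$ is the actual mechanism at $(6,3)$, and a matching witness is explicit. The state $c_1\ket{123}+c_2\ket{156}+c_3\ket{246}+c_4\ket{345}$ has diagonal $\gamma$, since any two configurations differ by a double excitation. It has three independent occupations and generically simple spectrum. So $15$ orbit directions plus $3$ occupation directions give rank $18$.

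\paragraph{A caution on your uniform argument.} For $D\ge3$ and $k\in\{1,2,D-2,D-1\}$, the spectrum of $\gamma$ is forced to be degenerate, so no simple-spectrum witness exists. These are exactly the source-limited cases other than $(6,3)$. There you need an injectivity argument like the anticommutator computation above, transported to $k=D-2$ by particle--hole duality, rather than orbit-plus-eigenvalue counting.
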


\begin{proof}
The naive bound~\eqref{eq:rank_naive} holds for any quadratic map by dimension counting. Generic equality on a Zariski open set follows from polynomial rank constancy: the locus where $\mathrm{rank}(d\Phi)$ falls below its maximum over $\mathcal{S}_{D,k}$ is a proper algebraic subvariety, hence measure zero. For the Hodge correction at half-filling, differentiate Eq.~\eqref{eq:hodge_identity} at $\psi$ in the direction $\delta\psi$ that lies along the orbit of $\star$ (that is, $\delta\psi=\star\psi-\langle\star\psi,\psi\rangle\psi$ projected onto the unit-sphere tangent); the cross-quadratic term $d\gamma|_\psi(\delta\psi)+d\gamma|_{\star\psi}(d\star\cdot\delta\psi)=0$ provides one linear constraint relating tangent vectors at $\psi$ and at $\star\psi$. When source and target dimensions are comparable (the case $\binom{D}{k}\le D(D+1)/2$), this constraint reduces the image dimension at $\psi$ by one; when the source dimension dominates the target one, the same constraint is absorbed into the rank loss already captured by the target dimension bound and produces no further deficit.
\end{proof}

We computed $\mathrm{rank}(d\Phi|_\psi)$ by automatic differentiation in PyTorch, projecting the input onto the unit-sphere tangent and counting singular values above $10^{-9}$ at $\ge 1{,}000$ random $\psi$ per case (results invariant across seeds and sample size).

\begin{table}[h]
\centering\small
\caption{\textbf{Jacobian rank of the embedding-to-$1$-RDM map at selected $(D,k)$.} Source dim $=\binom{D}{k}-1$; target dim $=D(D+1)/2-1$. The note column marks the unique case in our paper's range where the half-filling Hodge correction is observed.}
\label{tab:rdm_rank}
\begin{tabular}{@{}cccccl@{}}
\toprule
$D$ & $k$ & $\binom{D}{k}-1$ & $D(D{+}1)/2-1$ & rank & note \\
\midrule
$6$ & $2$ & $14$ & $20$ & $14$ & full source \\
$6$ & $3$ & $19$ & $20$ & $18$ & half-filling, Hodge $-1$ \\
$8$ & $3$ & $55$ & $35$ & $35$ & full target \\
$8$ & $4$ & $69$ & $35$ & $35$ & half-filling, source $\gg$ target \\
$10$ & $5$ & $251$ & $54$ & $54$ & half-filling, source $\gg$ target \\
\bottomrule
\end{tabular}
\end{table}

A complete sweep over $(D,k)$ with $D\le 10$ and $1\le k\le D-1$ confirms that the Hodge $-1$ correction triggers only at $(D,k){=}(6,3)$ within this range; all other half-filling cases tested have source dimension exceeding target by enough that the rank is set by the target dimension alone.

At $(D,k){=}(6,3)$, the Jacobian rank is $18$ out of $19$, so $\Phi$ contracts a single real direction on the tangent sphere, a $5.3\%$ information loss along that direction. The lost direction does not propagate to the downstream task: the task performance of the trained QGNN under matchgate-shadow readout (\suppref{si:measurement}, \autoref{tab:supp_panel_b}) is within $0.4\%$ of the amplitude readout on tour ratio at exact features. The compression that enables polynomial-cost measurement is, at this operating point, operationally invisible.

\section{Gate-family ablation}\label{si:gate_ablation}

The trainable evolution $W$ on the embedding register (\autoref{ssec:results_framework}) admits two ansatz families compatible with the matchgate gate set, both acting as orthogonal transformations on the particle-number-$k$ subspace of $D$ qubits. This note records their parametrisations, dynamical Lie algebras, and per-parameter gradient-variance ratio at random initialisation.

\paragraph{Nearest-neighbour pyramid ($W_{\rm NN}$).}
The pyramid of two-qubit Givens rotations introduced by~\cite{cherrat2023quantum}, written as a Cayley parametrisation of $\mathrm{SO}(D)$ on the defining $D$-dimensional representation, lifted to the particle-number-$k$ subspace by the $k$-th compound representation $C^{(k)}$. The free parameters are the $D(D{-}1)/2$ entries of an antisymmetric generator $A$, and the embedding-register action is $C^{(k)}\!\bigl((I-A)(I+A)^{-1}\bigr)$, which decomposes into nearest-neighbour Givens rotations on the $k$-subset basis. The DLA is $\mathfrak{so}(D)$ of dimension $D(D{-}1)/2$, equal to $15$ at $(D,k){=}(6,3)$. This is the choice used for the TSP numbers reported in the main text.

\paragraph{Ehrlich connectivity ($W_{\rm Ehr}$).}
The controlled-Givens particle-number encoder of~\cite{Farias2025}: a direct parametrisation of $\mathrm{SO}(\binom{D}{k})$ by a chain of single-excitation controlled-Givens rotations along the Ehrlich Gray sequence on $k$-subsets. Consecutive subsets $S$ and $S'$ in the sequence differ by a single mode swap, and one trainable angle per consecutive pair gives $\binom{D}{k}{-}1$ parameters per layer. The all-pair connectivity within the embedding register raises the DLA to $\mathfrak{so}(\binom{D}{k})$ of dimension $\binom{D}{k}(\binom{D}{k}{-}1)/2$, equal to $190$ at $(D,k){=}(6,3)$. We use this variant for trainability sweeps where a larger circuit dimension is wanted; it is not used for the numbers reported in the main text.

\paragraph{Gradient-variance comparison.}
We measure the per-parameter gradient variance of each variant at random initialisation, with the training protocol of Methods~M2: $(D,k){=}(6,3)$, two trainable layers with two trainable blocks per layer, quantum angles drawn from $\mathcal{N}(0,0.3^2)$, on the TSP-$5$ task. The variance is averaged over the trainable embedding-register parameters of $W$ (the compound-generator entries for the nearest-neighbour pyramid, the Ehrlich Gray angles for the Ehrlich connectivity), across $100$ random initialisations. We then train each variant for $300$ epochs on the $50{,}000$-instance training split with $3$ seeds and report test BCE and tour ratio under beam-search decoding.

\begin{table}[h]
\centering
\small
\caption{\textbf{Nearest-neighbour pyramid versus Ehrlich connectivity on TSP-$5$.} Comparison at $(D,k){=}(6,3)$. DLA dimension is the closed-form value $\dim\mathfrak{g}$. Variance at initialisation is the per-parameter gradient variance averaged over the trainable embedding-register parameters at random initialisation ($100$ initialisations). Test BCE and tour ratio are reported as mean $\pm$ s.d.\ over $3$ seeds after $300$ training epochs.}
\label{tab:gate_ablation}
\begin{tabular}{@{}lccc@{}}
\toprule
Variant & $\dim\mathfrak{g}$ & Var at init. (per param.) & Test BCE / tour ratio \\
\midrule
Nearest-neighbour pyramid & $15$  & $9.9\times 10^{-4}$ & $0.21\,\pm\,0.01$ / $1.005$ \\
Ehrlich connectivity      & $190$ & $1.8\times 10^{-5}$ & $0.20\,\pm\,0.02$ / $1.004$ \\
\bottomrule
\end{tabular}
\end{table}

\paragraph{Result.}
The empirical pyramid-to-Ehrlich variance ratio is $\approx\!55$, larger than the $\approx\!13$ predicted by the DLA-dimension ratio $\binom{D}{k}(\binom{D}{k}{-}1)/(D(D{-}1)){=}190/15$ alone. The closed-form bound of \suppref{si:trainability} is monotone decreasing in $\dim\mathfrak{g}$, so the dimension-only prediction is a lower bound on the gap; the additional empirical factor sits in the regime that Monbroussou et al.~\cite{monbroussou2025trainability} attribute to orbit structure within the particle-number subspace, which adds to trainability beyond circuit dimension. Test performance is comparable between the two variants at matched parameter budget, with the nearest-neighbour pyramid slightly behind on BCE and within s.d.\ on tour ratio; the practical gap closes once training is allowed to run.

\paragraph{Implication.}
The nearest-neighbour pyramid has the smaller DLA and the larger gradient floor, and is the choice used in the paper on those grounds. The Ehrlich connectivity is retained as an exploratory alternative for trainability sweeps and for the $j$-WL implementations of \autoref{ssec:results_expressivity} and \suppref{si:expressivity}, where the all-pair connectivity is wanted to span the full particle-number subspace. Both variants are matchgate-compatible and so admit the polynomial-cost readouts of \autoref{ssec:results_trainability}.

\section{Frozen-parameter ablation}\label{si:tsp_ablations}

The two-register architecture combines a classical encoder and readout head with a trainable Givens evolution on the embedding register. We isolate the contribution of each by training the same model under three parameter-update regimes and comparing the downstream test metrics on a single configuration: the TSP-5 instance of Methods~M1 with $(D,k){=}(6,3)$, $L{=}2$ trainable iterations, and the readout MLP head used in \autoref{ssec:results_tsp}. The model has $5{,}813$ trainable parameters in total, of which $115$ ($2.0\%$) are quantum Givens angles and $5{,}698$ ($98.0\%$) are classical encoder and head weights.

We define three regimes. The \emph{full} regime updates all $5{,}813$ parameters; this is the same setting used for the TSP-5 numbers reported in the main text. The \emph{quantum-only} regime freezes the classical encoder weights and the readout MLP head at their initial values and updates only the $115$ Givens angles. The \emph{classical-only} regime freezes the Givens angles at their zero-mean Gaussian initialisation ($\sigma{=}0.3$ rad, M2) and updates the $5{,}698$ classical parameters. Each regime is run with five seeds $\{42, 123, 456, 789, 1024\}$, and within each seed, the optimiser state is reset before training so that the frozen subset is fixed at the chosen initialisation. All other hyperparameters (learning rate, batch size, balanced binary cross-entropy, early stopping with patience $30$, EQC train/validation/test split) match the protocol of Methods~M2 used for the main-text results. Test BCE and tour ratio are reported on the held-out test split fixed by the EQC benchmark of Skolik et al.~\cite{Skolik2023}.

\autoref{tab:tsp5_frozen} reports the resulting metrics. The full model attains test BCE $0.191 \pm 0.014$ and tour ratio $1.004 \pm 0.001$ across the five seeds. Freezing the classical encoder and head reduces the trainable parameter count from $5{,}813$ to $115$, and the test BCE rises to $0.585 \pm 0.011$ with a tour ratio of $1.059 \pm 0.009$. Freezing the quantum Givens evolution reduces the trainable parameter count from $5{,}813$ to $5{,}698$, and the test BCE rises to $0.411 \pm 0.014$ with tour ratio $1.026 \pm 0.005$. Both frozen regimes degrade the metrics relative to the full model on both BCE and tour ratio.

\begin{table}[ht]
\centering
\footnotesize
\caption{\textbf{Frozen-parameter ablation on TSP-$5$.} Three parameter-update regimes at the configuration of Methods~M1: full, quantum-only, and classical-only. Five seeds per regime; mean $\pm$ standard deviation. Trainable parameter counts refer to the parameters updated by gradient descent in each regime; the model itself has $5{,}813$ parameters in all three.}
\label{tab:tsp5_frozen}
\begin{tabular}{@{}lccc@{}}
\toprule
Regime & Trainable params & Test BCE & Tour ratio \\
\midrule
Full              & $5{,}813$ & $0.191 \pm 0.014$ & $1.004 \pm 0.001$ \\
Quantum-only      & $115$     & $0.585 \pm 0.011$ & $1.059 \pm 0.009$ \\
Classical-only    & $5{,}698$ & $0.411 \pm 0.014$ & $1.026 \pm 0.005$ \\
\bottomrule
\end{tabular}
\end{table}

The two frozen regimes give a one-sided check on the contribution of each sub-block. The classical-only regime keeps an untrained Givens evolution at random initialisation, so the gradient signal that reaches the encoder and head must propagate through a fixed quantum map; the resulting BCE of $0.411$ shows that a substantial fraction of the task fit is recoverable through the classical sub-block alone, but a gap of $0.220$ in BCE and $0.022$ in tour ratio remains relative to the full model. The quantum-only regime keeps the encoder and head at their initialisation, so the only signal available to the model passes through $115$ trainable Givens angles; the resulting BCE of $0.585$ is close to the chance-level binary cross-entropy of $\log 2 \approx 0.69$ scaled by the class-balance re-weighting used in M2, indicating that $115$ Givens angles in isolation are too few to fit the edge-prediction target on this task. The full model is the only regime that reaches a tour ratio below $1.01$. Hence, the encoder and head supply the majority of the parametric capacity needed for the BCE fit, and the trainable Givens evolution supplies the remaining gap; neither sub-block reaches the full-model metrics on its own at this parameter budget. The result fixes the role of the quantum evolution within the architecture as a bottleneck layer that the classical sub-blocks rely on, and not as a stand-alone learner of the TSP edge distribution; it does not bear on quantum advantage, which is addressed separately in \autoref{ssec:results_trainability}.

\section{Resource estimates and gate decompositions}\label{si:resources}

This note gives the per-block decomposition behind the resource estimate of Methods~M3. We anchor the count at the largest TSP configuration with completed multi-seed results, $N{=}20$ at $(D, k){=}(6, 3)$ and node-register particle number $j{=}1$ with $L{=}3$ trainable iterations, and project upward to the deployment target $N{=}50$ at the same $(D, k, j, L)$. The two registers occupy $N + D$ qubits, equal to $26$ at $N{=}20$ and $56$ at $N{=}50$.

\paragraph{Givens decomposition.} A particle-number-preserving Givens rotation $G_{ab}(\theta)$ acts on two qubits $a, b$ as a real $\mathrm{SO}(2)$ rotation in the two-dimensional subspace $\mathrm{span}\{|01\rangle, |10\rangle\}$. On a \textsc{cnot}-native instruction set the canonical decomposition of Anselmetti et al.~\cite{Anselmetti2021} gives
\begin{equation}
G_{ab}(\theta) \;=\; 2 \,\textsc{cnot} \;+\; 2\, R_y(\pm \theta/2),
\label{eq:givens_count}
\end{equation}
i.e.\ two two-qubit \textsc{cnot}s and two single-qubit $y$-rotations per Givens, with a fixed Hadamard frame on either side. The decomposition is exact up to a global phase and uses no ancillas. Controlled Givens rotations, used inside the hierarchical loader for cross-register connections that thread the Ehrlich sequence onto the embedding qubits conditional on the active node, take a constant prefactor more under SWAP routing on a linearly-connected device: each controlled Givens decomposes to a sandwich of two \textsc{cnot}s around a controlled $R_y$, which itself unfolds into two further \textsc{cnot}s and two $R_y$ on a \textsc{cnot}-native gate set, plus the SWAP overhead to bring the control qubit adjacent to the rotation pair. We absorb this prefactor into the per-block count below; it does not change the leading-order scaling.

\paragraph{Per-block accounting.} Methods~M3 fixes the canonical block structure as one hierarchical loader $V$, one equivariant adjacency $\mathcal{A}$, one trainable evolution $W$, and one joint mixing $M$ per iteration, repeated $L$ times, followed by a single measurement basis change. We count primitive Givens per block.

\textit{Hierarchical loader $V$.} On the first iteration, $V$ creates the $j{=}1$ uniform superposition on the node register with $j$ controlled-X gates, then encodes the embedding register via the Ehrlich Gray code chain on the particle-number-$k$ subspace, which uses
\begin{equation}
\binom{D}{k} - 1 \;=\; \binom{6}{3} - 1 \;=\; 19
\end{equation}
Givens rotations. Reuploading layers re-encode coordinates with the same $19$-Givens chain on the embedding register, but skip the initialisation gates on the node register (the state is already in the correct particle-number subspace). The total over $L{=}3$ is therefore $19 + 19 + 19 = 57$ Givens plus a single controlled-X on the first iteration.

\textit{Equivariant adjacency $\mathcal{A}$.} The adjacency layer applies one Givens per pair of nodes in the canonical edge ordering, i.e.\ $\binom{N}{2}$ Givens per iteration. At $N{=}20$ this is $190$ Givens per iteration and $570$ Givens over $L{=}3$. At $N{=}50$ this is $1{,}225$ per iteration and $3{,}675$ over $L{=}3$. Hence, the adjacency layer dominates the per-iteration count, scaling quadratically in $N$ at fixed $L$.

\textit{Trainable evolution $W$.} The nearest-neighbour pyramid parametrises a Cayley element of $\mathrm{SO}(D)$ and uses $D(D{-}1)/2 = 15$ Givens at $D{=}6$. The compound lift $C^{(k)}$ is implemented at the matrix-multiplication level on the embedding-register state and is not itself realised by additional gates: the $15$ base-register Givens generate the $\binom{D}{k}{=}20$-dimensional compound action through the natural action of $\mathrm{SO}(D)$ on the particle-number-$k$ subspace, with depth $O(D)$ on a linearly-connected device. Over $L{=}3$ this is $45$ Givens.

\textit{Joint mixing $M$.} The cross-register matchgate mixing layer applies one Givens per (node, embedding-qubit) pair, giving $N \, D$ Givens per iteration. At $N{=}20, D{=}6$ this is $120$ Givens per iteration; at $N{=}50, D{=}6$ this is $300$. Over $L{=}3$ this is $360$ at $N{=}20$ and $900$ at $N{=}50$.

\textit{Measurement basis change.} The deterministic Hartree--Fock $1$-RDM readout uses $D/2 + 1 = 4$ measurement settings, each a single layer of Givens on the embedding register at depth $O(D)$, contributing a constant additive cost independent of $L$. The matchgate-shadow alternative uses one Haar matchgate per snapshot at depth $O(D^2) = 36$. Both are negligible relative to the adjacency cost at $L{=}3$.

\begin{table}[ht]
\centering
\footnotesize
\begin{tabular}{@{}lccc@{}}
\toprule
Block & Givens per iter & Givens over $L{=}3$ ($N{=}20$) & Givens over $L{=}3$ ($N{=}50$) \\
\midrule
Hierarchical loader $V$ & $19$ (+ $1$ ctrl-X first iter) & $57$ & $57$ \\
Equivariant adjacency $\mathcal{A}$ & $\binom{N}{2}$ & $570$ & $3{,}675$ \\
Trainable evolution $W$ & $D(D{-}1)/2 = 15$ & $45$ & $45$ \\
Joint mixing $M$ & $N\,D$ & $360$ & $900$ \\
Measurement basis change & $D/2 + 1 = 4$ settings & $4$ & $4$ \\
\midrule
Total per forward pass & --- & $\sim 1{,}036$ & $\sim 4{,}681$ \\
\bottomrule
\end{tabular}
\caption{\textbf{Per-block Givens count at $(D, k, j, L) = (6, 3, 1, 3)$.} The mixing column is reported over $L{=}3$ since one $M$ block follows each of the $L$ iterations of $V$, $\mathcal{A}$, and $W$.}
\label{tab:resource_blocks}
\end{table}

\paragraph{Total per forward pass.} Summing the entries of \autoref{tab:resource_blocks}, a single forward pass at the $N{=}20$ benchmarked configuration uses on the order of $1{,}036$ primitive Givens rotations, which by~\eqref{eq:givens_count} compile to approximately $2{,}072$ \textsc{cnot}s and $2{,}072$ single-qubit $R_y$ rotations on a \textsc{cnot}-native instruction set, plus the constant Hadamard frame. At the $N{=}50$ deployment target the count rises to approximately $4{,}681$ Givens, or about $9{,}362$ \textsc{cnot}s, with the increase carried entirely by the $\binom{N}{2}$ growth of the adjacency layer at fixed $(D, k, j, L)$. The model hyperparameters $(D, k, j, L)$ are chosen independently of $N$, and the per-layer trainable feature-embedding evolution does not change with $N$; the joint mixer adds $O(N^2)$ trainable angles (Methods~M2).

\paragraph{Hardware feasibility.} The $26$-qubit configuration at $N{=}20$ and the $56$-qubit deployment target at $N{=}50$ sit inside the qubit envelope of present-day superconducting platforms: Google's Willow processor at $\sim 105$ qubits~\cite{WillowNature2025} and IBM Heron at $\sim 156$ qubits. The deployment claim attached to these numbers is functional rather than complexity-theoretic: the trained parameters can be evaluated on matchgate-compatible hardware at this scale. Noise tolerance under realistic two-qubit gate error rates and shot budgets is not characterised in the present paper, and the resource estimate above assumes ideal gates in the abstract gate-count sense. This note is the per-block backing for the numbers cited in Methods~M3.

\end{document}